\crefname{hypothesis}{Hypothesis}{Hypotheses}
\title{Stability-certified reinforcement learning: A control-theoretic perspective\thanks{{This work was supported by the ONR grants N00014-17-1-2933 and N00014-15-1-2835, DARPA grant D16AP00002, and AFOSR grant FA9550-17-1-0163.}}}
\author{Ming Jin\thanks{Department of Industrial
Engineering and Operations Research, University of California, Berkeley. Email: 
  \email{jinming@berkeley.edu}.}
\and Javad Lavaei\thanks{Department of Industrial
Engineering and Operations Research, and the Tsinghua-Berkeley
Shenzhen Institute, University of California, Berkeley. 
  Email: \email{lavaei@berkeley.edu}.}}
\newcommand*{\addFileDependency}[1]{
  \typeout{(#1)}
  \@addtofilelist{#1}
  \IfFileExists{#1}{}{\typeout{No file #1.}}
}
\begin{document}

\maketitle

\begin{abstract}
  We investigate the important problem of certifying stability of reinforcement learning policies when interconnected with nonlinear dynamical systems. We show that by regulating the input-output gradients of policies, strong guarantees of robust stability can be obtained based on a proposed semidefinite programming feasibility problem. The method is able to certify a large set of stabilizing controllers by exploiting problem-specific structures; furthermore, we analyze and establish its (non)conservatism. Empirical evaluations on two decentralized control tasks, namely multi-flight formation and power system frequency regulation, demonstrate that the reinforcement learning agents can have high performance within the stability-certified parameter space, and also exhibit stable learning behaviors in the long run.
\end{abstract}

\begin{keywords}
  Reinforcement learning, robust control, policy gradient optimization, decentralized control synthesis, safe reinforcement learning
\end{keywords}

\begin{AMS}
  	68T05, 93E35, 93D09
\end{AMS}

\section{Introduction}
\label{sec:intro}

Remarkable progress has been made in reinforcement learning (RL) using (deep) neural networks to solve complex decision-making and control problems \cite{silver2016mastering}. While RL algorithms, such as policy gradient \cite{williams1992simple,kakade2002natural,schulman2015trust}, Q-learning \cite{watkins1992q,mnih2013playing}, and actor-critic methods \cite{lillicrap2015continuous,mnih2016asynchronous} aim at optimizing control performance, the security aspect is of great importance for mission-critical systems, such as autonomous cars and power grids \cite{garcia2015comprehensive,amodei2016concrete,stoica2017berkeley}. A fundamental problem is to analyze or certify stability of the interconnected system in both RL exploration and deployment stages, which is challenging due to its dynamic and nonconvex nature \cite{garcia2015comprehensive}. 

The problem under study focuses on a general continuous-time dynamical system: 
\begin{equation}
\label{equ:system}
    \dot{x}(t)=f_t(x(t),u(t)),
\end{equation}
with the state $x(t)\in\R^{n_s}$ and the control action $u(t)\in\R^{n_a}$. In general, $f_t$ can be a time-varying and nonlinear function, but for the purpose of stability analysis, we study the important case that 
\begin{equation}
f_t(x(t))= A x(t)+Bu(t)+g_t(x(t)),
\end{equation}
where $f_t$ comprises of a linear time-invariant (LTI) component $A\in\R^{n_s\times n_s}$ that is Hurwitz (i.e., every eigenvalue of $A$ has strictly negative real part), a control matrix $B\in\R^{n_s\times n_a}$, and a slowly time-varying component $g_t$ that is allowed to be nonlinear and even uncertain.\footnote{This requirement is not difficult to meet in practice, because one can linearize any nonlinear systems around the equilibrium point to obtain a linear component and a nonlinear part.} The condition that $A$ is stable is a basic requirement, but the goal of reinforcement learning is to design a controller that optimizes some performance metric that is not necessarily related to the stability condition. For feedback control, we also allow the controller to obtain observations $y(t)=Cx(t)\in\R^{n_s}$ that are a linear function of the states, where $C\in\R^{n_s\times n_s}$ may have a sparsity pattern to account for partial observations in the context of decentralized controls \cite{bakule2008decentralized}.

\begin{figure}[h!]
  \centering
  \includegraphics[width=0.66\textwidth]{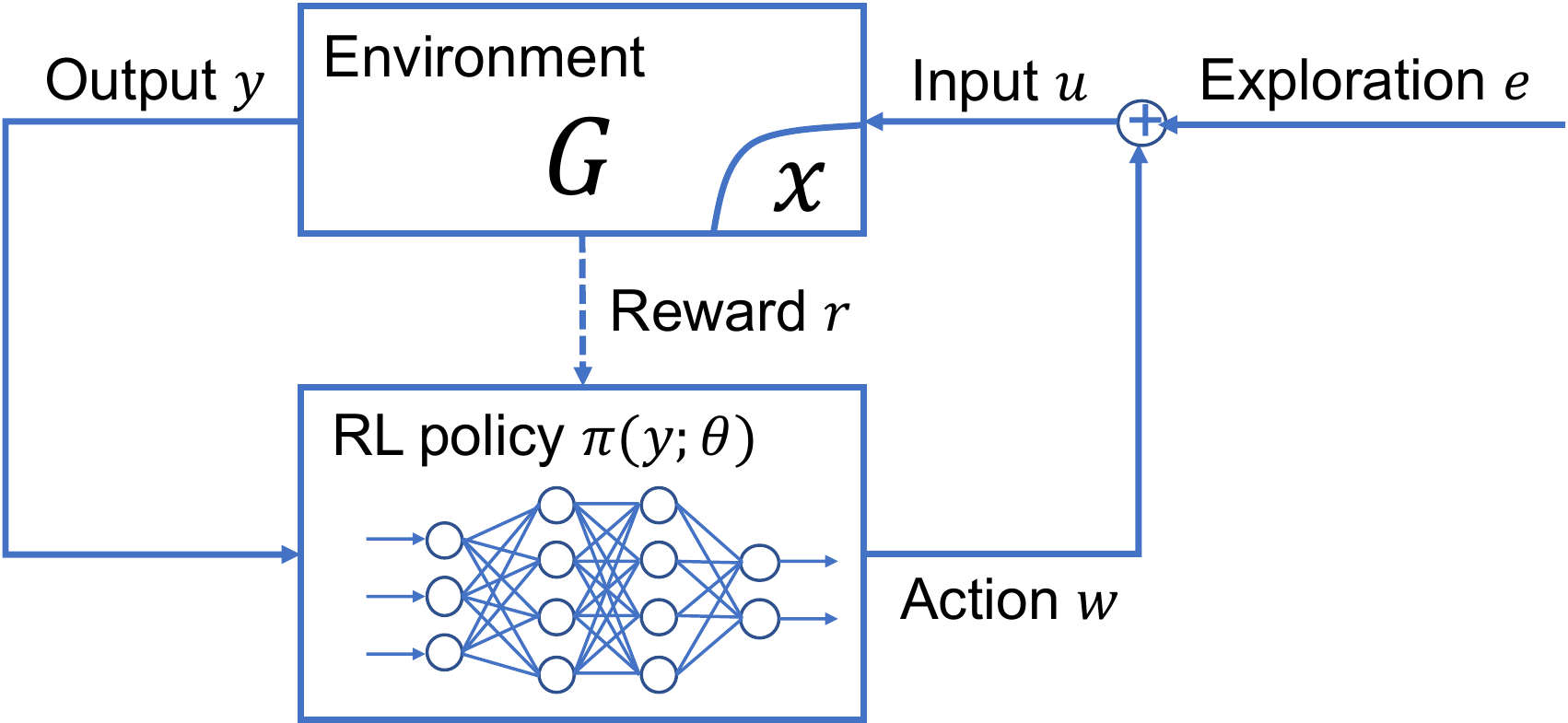}
  \caption{Overview of the interconnected system of an RL policy and the environment. The goal of RL is to maximize expected rewards through interaction and exploration.
  \label{fig:intro}}
\end{figure}

Suppose that $u(t)=\pi_t(y(t);\theta_t)+e(t)$ is a neural network  given by an RL agent (parametrized by $\theta_t$, which can be time-varying due to learning) to optimize some reward $r(x,u)$ revealed through the interaction with the environment. The exploration vector $e(t)\in\R^{n_a}$ captures the additive randomization effect during the learning phase, and is assumed to have a bounded energy over time ($\|e\|_2=\sqrt{\int |e(t)|_2^2dt}\leq\infty$). The main goal is to analyze the stability of the system with the actuation of $\pi_t$, which is typically a neural network controller, as illustrated in \cref{fig:intro}. Specifically, the stability criterion is stated using the concept of $L_2$ gain \cite{zhou1996robust,dullerud2013course}.\footnote{This stability metric is widely adopted in practice, and is closely related to bounded-input bounded-output (BIBO) stability and absolute stability (or asymptotic stability). For controllable and observable LTI systems, the equivalence can be established. }

\begin{definition}[Input-output stability]
\label{def:in-out-stable}
The $L_2$ gain of the system $G$ controlled by $\pi$ is the worst-case ratio between total output energy and total input energy:
\begin{equation}
\gamma(\G,\pi)=\sup_{u\in L_2}\frac{\|y\|_2}{\|u\|_2},
\end{equation}
where $L_2$ is the set of all square-summable signals, $\|y\|_2=\sqrt{\int |y(t)|^2_2dt}$ is the total energy over time, and $u(t)=\pi_t(y(t);\theta_t)+e(t)$ is the control input with exploration. If $\gamma(\G,\pi)$ is finite, then the interconnected system is said to have input-output stability (or finite $L_2$ gain).
\end{definition}

This study investigates the possibility of using  the gradient information of the policy $\pi_t(y(t);\theta_t)$ to obtain a stability certificate, because this information can be easily extracted in real-time and is generic enough to include a large set of performance-optimizing nonlinear controllers. Let $[n]=\{1,...,n\}$ be the set notation. By denoting
\begin{equation}
    \mathcal{P}({\xi})=\left\lbrace \pi\;\middle| \;\underline{\xi}_{ij}\leq\partial_j\pi_i(y)\leq\overline{\xi}_{ij},\forall i\in [n_a],j\in [n_s],y\in\R^{n_s} \right\rbrace
\label{equ:uncertain_set}
\end{equation}
as the set of controllers whose partial derivatives are bounded by $\underline{\xi}\in\R^{n_a\times n_s}$ and $\overline{\xi}\in\R^{n_a\times n_s}$, it is desirable to provide stability certificate as long as the RL policy remains within the above ``safety set.'' Indeed, this can be checked efficiently, as stated (informally) in the following theorem.

\begin{theorem}[Main result]
If there exist constants $\underline{\xi}$ and $\overline{\xi}$ such that the condition \eqref{equ:nltv_sdp} is feasible for the system \eqref{equ:system}, then the interconnected system has a finite $\Log_2$ gain as long as $\pi_t\in \mathcal{P}(\underline{\xi},\overline{\xi})$ for all $t\geq 0$. 
\end{theorem}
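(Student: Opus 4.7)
The plan is to recast the closed-loop as an LTI plant in feedback with a memoryless, time-varying nonlinearity whose ``effective gain'' is entrywise bounded, and to certify finite $L_2$ gain via a dissipation inequality whose matrix form coincides with \eqref{equ:nltv_sdp}.

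First I would shift coordinates so that the equilibrium is at the origin and $\pi_t(0)=0$. By the mean value theorem applied componentwise, for every $y\in\R^{n_s}$ and $t\ge 0$ one can write $\pi_t(y)=M(y,t)\,y$ for some matrix $M(y,t)$ with entries $m_{ij}(y,t)\in[\underline{\xi}_{ij},\overline{\xi}_{ij}]$. The interconnection then becomes $\dot x=(A+BM(y,t)C)x+Be+g_t(x)$, and the entrywise box bounds on $M$ yield componentwise sector-type quadratic inequalities $\psi_{ij}(y,v):=(\overline{\xi}_{ij}y_j-v_{ij})(v_{ij}-\underline{\xi}_{ij}y_j)\ge 0$ in the auxiliary signal $v=M(y,t)y$. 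These are pointwise-in-time, memoryless integral quadratic constraints (IQCs) that convexify the admissible set $\mathcal{P}(\underline{\xi},\overline{\xi})$.

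Second, I would posit a quadratic storage $V(x)=x^\top P x$ with $P\succ 0$ and seek nonnegative multipliers $\tau_{ij}$ and a finite $\gamma>0$ such that the dissipation inequality $\dot V + \tfrac{1}{\gamma^2}\|y\|^2 - \|e\|^2 + \sum_{i,j}\tau_{ij}\psi_{ij}(y,v)\le 0$ holds along all trajectories. By the S-procedure this is sufficient for finite $L_2$ gain once integrated from $0$ to $\infty$, yielding $\|y\|_2\le\gamma\|e\|_2$ and therefore the bound of \cref{def:in-out-stable}. Rearranging the left-hand side into a quadratic form in $(x,v,e)$ and eliminating $y=Cx$ produces a single LMI in $(P,\tau,\gamma^{-2})$ which I expect to coincide, after a Schur complement, with \eqref{equ:nltv_sdp}. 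The slowly time-varying piece $g_t$ is handled either by an auxiliary IQC that bounds its contribution or, as is standard for slow-LPV systems, by a frozen-time argument in which the decay rate encoded by $P$ dominates $\|\dot g_t\|$.

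The main obstacle is the interplay between the $n_a\times n_s$ matrix of independent gradient bounds and the classical sector-multiplier formalism, which is typically scalar or symmetric. A single lumped multiplier would either exclude common policy classes or give a very conservative certificate, so the construction must admit componentwise (and, ideally, full-block) multipliers that respect the sparsity of $C$ and the possibly asymmetric intervals $[\underline{\xi}_{ij},\overline{\xi}_{ij}]$. I would close this by applying the componentwise S-procedure jointly over all $n_a n_s$ sector inequalities and verifying that \eqref{equ:nltv_sdp} encodes precisely this family of multipliers, thereby establishing sufficiency of \eqref{equ:nltv_sdp} for $\pi_t\in\mathcal{P}(\underline{\xi},\overline{\xi})$ to imply finite $L_2$ gain uniformly in the time variation of $\theta_t$.
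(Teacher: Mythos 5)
Your proposal follows essentially the paper's own route: the componentwise mean-value decomposition $\pi_{t,i}(y)=\sum_j v_{ij}$ with $v_{ij}$ sector-bounded in $[\underline{\xi}_{ij}y_j,\overline{\xi}_{ij}y_j]$ is exactly the auxiliary variable $q$ and quadratic constraint of \cref{lem:quad_const} (your $\psi_{ij}\ge 0$ is algebraically identical to the paper's $\phi_{ij}\ge 0$ after the substitution $c_{ij}=\tfrac12(\underline{\xi}_{ij}+\overline{\xi}_{ij})$), and the quadratic storage plus componentwise S-procedure multipliers, integrated in time, is precisely the dissipation argument used to prove \cref{thm:nltv_stable}. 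The only step you leave unresolved is the treatment of $g_t$, and there you must commit to the first of your two options: the paper imposes a \emph{dynamic} IQC $g_t\in\text{IQC}(\Psi,M_g)$ realized by a stable filter whose internal state is adjoined to the plant state, and it is this augmentation that generates the additional blocks of \eqref{equ:nltv_sdp}; the frozen-time/slow-variation alternative is neither used nor needed and would not reproduce that LMI, so your quadratic form must be written in the augmented state (plant state, filter state, sector variables, $g_t$-output, disturbance) for the claimed identification with \eqref{equ:nltv_sdp} to hold.
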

We call the constants $\underline{\xi}$ and $\overline{\xi}$ stability-certified gradient bounds for the underlying system. The above result is based on the intuition that a real-world stable controller should exhibit ``smoothness'' in the sense that small changes in the input should lead to small changes in the output. This incorporates the special case where controllers are known to have bounded Lipschitz constants (a simple strategy to calculate the Lipschitz constant of a deep neural network is suggested in \cite{szegedy2014intriguing}). To compute the gradient bounds, we borrow powerful ideas from the framework of integral quadratic constraint (in frequency domain) \cite{megretski1997system} and dissipativity theory (in time domain) \cite{willems1972dissipative} for robustness analysis. While these tools are celebrated with their non-conservatism in the robust control literature, existing characterizations of multi-input multi-output (MIMO) Lipschitz functions are insufficient. Thus, one major obstacle is to derive non-trivial bounds that could be of use in practice.

To this end, we develop a new quadratic constraint on gradient-bounded functions, which exploits the sparsity of the control architecture and the non-homogeneity of the output vector. Some key features of the stability-certified smoothness bounds are as follows: \textbf{(a)} the bounds  are inherent to the targeted real-world control task; \textbf{(b)} they can be computed efficiently by solving some semi-definite programming (SDP) problem; \textbf{(c)} they can be used to certify stability when reinforcement learning is employed in real-world control with either off-policy or on-policy learning \cite{sutton1998reinforcement}. Furthermore, the stability certification can be regarded as an $\mathcal{S}$-procedure, and we analyze its conservatism to show that it is necessary for the robustness of a surrogate system that is closely related to the original system.


The paper is organized as follows. Preliminaries on policy gradient reinforcement learning, the integrated quadratic constraint (IQC) and dissipativity frameworks are presented in \cref{sec:background}. Main results on gradient bounds for a linear or nonlinear system $G$ are presented in
\cref{sec:theory}, where we also analyze the conservatism of the certificate. The method is evaluated in \cref{sec:experiments} on two nonlinear decentralized control tasks. Conclusions are drawn in \cref{sec:conclusions}.

\section{Preliminary}
\label{sec:background}

In this section, we give an overview of the main topics relevant to this study, namely policy gradient reinforcement learning and robustness analysis based on IQC framework  and dissipativity theory.

\subsection{Reinforcement learning using policy gradient}
\label{sec:rl_polgrad}
Reinforcement learning aims at guiding an agent to perform a task as efficiently and skillfully as possible through interactions with the environment. The control task is modeled as a Markov decision process (MDP), defined by the tuple $(\mathcal{X},\mathcal{U},\mathcal{T},r,\rho)$, where $\mathcal{X}$ is the set of states $x$, $\mathcal{U}$ is a set of actions $u$, $\mathcal{T}:\mathcal{X}\times\mathcal{U}\rightarrow\mathcal{X}$ indicates the world dynamics as in \eqref{equ:system}, $r(x,u)$ is the reward at state $x$ and action $u$, and $\rho\in(0,1]$ is the factor to discount future rewards. A control strategy is defined by a policy $\pi_\theta(x)$, which can be approximated by a neural network with parameters $\theta$. For a continuous control, the actions follow a multivariate normal distribution, where $\pi_\theta(x)$ is the mean, and the standard deviation in each action dimension is set to be a diminishing number during exploration or learning, and 0 during actual deployment. With a slight abuse of notations, we use $\pi_\theta(u|x)$ to denote this normal distribution over actions, and use $x_t$ to denote $x(t)$ for simplicity. The goal of RL is to maximize the expected return:
\begin{equation}
\eta(\pi_\theta)= \underset{x_0,u_t\sim\pi_\theta(\cdot|x_t),x_{t+1}\sim\mathcal{T}(x_t,u_t)}{\mathbb{E}}\left[{\sum\nolimits_{t=0}^T} \;\;\rho^t r(x_t,u_t)\right],
\label{equ:exp_reward}
\end{equation}
where $T$ is the control horizon, and the expectation is taken over the policy, the initial state distribution  and the world dynamics. 

From a practitioner's point of view, the existing methods can be categorized into four groups based on how the optimal policy is determined: \textbf{(a)} policy gradient methods directly optimize the policy parameters $\theta$ by estimating the gradient of the expected return (e.g., REINFORCE \cite{williams1992simple}, natural policy gradient \cite{kakade2002natural}, and trust region policy optimization (TRPO)  \cite{schulman2015trust}); \textbf{(b)} {value-based algorithms} like Q-learning do not aim at optimizing the policy directly, but instead approximate the Q-value of the optimal policy for the available actions \cite{watkins1992q,mnih2013playing}; \textbf{(c)} {actor-critic algorithms} keep an estimate of the value function (critic) as well as a policy that maximizes the value function (actor) (e.g., DDPG \cite{lillicrap2015continuous} and A3C \cite{mnih2016asynchronous}); lastly, \textbf{(d)} {model-based methods} focus on the learning of the transition model for the underlying dynamics, and then use it for planning or to improve a policy (e.g., Dyna \cite{sutton1990integrated} and guided policy search \cite{levine2014learning}). We adopt an approach based on end-to-end policy gradient that combines TRPO \cite{schulman2015trust} with natural gradient \cite{kakade2002natural} and smoothness penalty (this method is very useful for RL in dynamical systems described by partial or difference equations).

\textbf{Trust region policy optimization} is a policy gradient method that constrains the step length to be within a ``trust region'' so that the local estimation of the gradient/curvature has a monotonic improvement guarantee. By manipulating the expected return $\eta(\pi)$ using the identity proposed in \cite{kakade2002approximately}, the ``surrogate objective'' $L_{\pi_{\text{old}}}(\pi)$ can be designed:
\begin{equation}
L_{\pi_{\text{old}}}(\pi)=\underset{x,u\sim\pi_{\text{old}}}{\mathbb{E}}\left[\frac{\pi(u|x)}{\pi_{\text{old}}(u|x)}\Lambda^{\pi_{\text{old}}}(x,u)\right],
\end{equation}
where the expectation is taken over the old policy $\pi_{\text{old}}$, the ratio inside the expectation is also known as the importance weight, and  $\Lambda^{\pi_{\text{old}}}(x,u)$ is the advantage function given by:
\begin{equation}
\Lambda^{\pi_{\text{old}}}(x,u)=\underset{x'\sim\mathcal{T}(x,u)}{\mathbb{E}}\left[r(x,u)+\rho V^{\pi_{\text{old}}}(x')-V^{\pi_{\text{old}}}(x)\right],
\end{equation}
where the expectation is with respect to the dynamics $x'\sim\mathcal{T}(x,u)$ (the dependence on $\theta_{\text{old}}$ is omitted), and it measures the improvement of taking action $u$ at state $x$ over the old policy in terms of the value function $V
^{\pi_{\text{old}}}$.   A bound on the difference between $L_{\pi_{\text{old}}}(\pi)$ and $\eta(\pi)$ has been derived in \cite{schulman2015trust}, which also proves a monotonic improvement result as long as the KL divergence between the new and old policies is small (i.e., the new policy stays within the trust region). In practice, the surrogate loss $L_{\pi_{\text{old}}}(\pi)$ can be estimated using trajectories sampled from $\pi_{\text{old}}$ as follows,
\begin{equation}
\widehat{L}_{\pi_{\text{old}}}(\pi)=\sum_{t}\frac{\pi(u_t|x_t)}{\pi_{\text{old}}(u_t|x_t)}\widehat{\Lambda}^{\pi_{\text{old}}}(x,u),
\end{equation}
and the averaged KL divergence over observed states $\frac{1}{T}\sum_t KL\left[\pi_{\text{old}}(\cdot|x_t),\pi(\cdot|x_t)\right]$ can be used to estimate the trust region. 

\textbf{Natural gradient} is defined by a metric based on the probability manifold induced by the KL divergence. It improves the standard gradient by making a step invariant to reparametrization of the parameter coordinates \cite{amari1998natural}:
\begin{equation}
\theta_{t+1}\leftarrow \theta_t-\lambda H_\theta^{-1}{\zeta}_t,
\end{equation}
where ${\zeta}_t$ is the standard gradient, $H_\theta=\frac{1}{T}\sum_t \left(\frac{\partial}{\partial\theta}\pi_{\theta}(\log u_t|x_t)\right)\left(\frac{\partial}{\partial\theta}\log\pi_{\theta}(u_t|x_t)\right)^\top$ is the Fisher information matrix estimated with the trajectory data, and $\lambda$ is the step size. In practice, when the number of parameters is large, conjugate gradient is employed to estimate the term $H_\theta^{-1}{\zeta}_t$ without requiring any matrix inversion. Since the Fisher information matrix coincides with the second-order approximation of the KL divergence, one can perform a back-tracking line search on the step size $\lambda$ to ensure that the updated policy stays within the trust region.

\textbf{Smoothness penalty} is introduced in this study to empirically improve learning performance on physical dynamical systems. Specifically, we propose to use 
\begin{equation}
L_{\text{explore}}=\sum\nolimits_{t=1}^T\|u_{t-1}-\pi_\theta(x_t)\|^2
\end{equation}
as a regularization term to induce consistency during exploration. The intuition is that since the change in states between two consecutive time steps is often small, it is desirable to ensure small changes in output actions. This is closely related to another penalty term that has been used in \cite{drucker1992improving}, which is termed ``double backpropagation'', and recently rediscovered in \cite{ororbia2017unifying,gulrajani2017improved}:
\begin{equation}
L_{\text{smooth}}=\sum\nolimits_{t=1}^T\left\|\frac{\partial}{\partial\theta}\pi_\theta(x_t)\right\|^2,
\end{equation}
which penalizes the gradient of the policy along the trajectories. Since bounded gradients lead to bounded Lipshitz constant, these penalties will induce smooth neural network functions, which is essential to ensure generalizability and, as we will show, stability. In addition, we incorporate a hard threshold (HT) approach that rescales the weight matrices at each layer by $(l^\circ/l(\pi_\theta))^{1/n_L}$ if $l(\pi_\theta)>l^\circ$, where $l(\pi_\theta)$ is the Lipschitz constant of the neural network $\pi_\theta$, $n_L$ is the number of layers of the neural network and $l^\circ$ is the certified Lipschitz constant. This ensures that the Lipschitz constant of the RL policy remains bounded by $l^\circ$. 

In summary, our policy gradient is based on the weighted objective:
\begin{equation}
L_{\text{pol}}(\pi_\theta)=\widehat{L}_{\pi_{\text{old}}}(\pi_\theta)+w_1L_{\text{explore}}(\pi_\theta)+w_2L_{\text{smooth}}(\pi_\theta),
\label{equ:pol_obj}
\end{equation}
where the penalty coefficients $w_1$ and $w_2$ are selected such that the scales of the corresponding terms are about $[0.01,0.05]$ of the surrogate loss value $\widehat{L}_{\pi_{\text{old}}}(\pi_\theta)$.
In each round, a set of trajectories are collected using $\pi_{\text{old}}$, which are used to estimate the gradient $\frac{\partial}{\partial\theta} L_{\text{pol}}(\pi_\theta)$ and the Fisher information matrix $H_\theta$; a backtracking line search on the step size is then conducted to ensure that the updated policy stays within the trust region. This learning procedure is known as {on-policy learning} \cite{sutton1998reinforcement}.

\subsection{Overview of IQC framework}

The IQC theory is celebrated for systematic and efficient stability analysis of a large class of uncertain, dynamic, and interconnected systems \cite{megretski1997system}. It unifies and extends classical passitivity-based multiplier theory, and has close connections to dissipativity theory in the time domain \cite{seiler2015stability}. 

To state the IQC framework, some terminologies are necessary. We define the space $L_2^n[0,\infty)=\{x:\int_{t=0}^\infty |x(t)|_2^2dt<\infty\}$ for signals supported on $t\geq 0$, where $n$ denotes the spatial dimension of $x(t)$, and the extended space $L_{2e}^n[0,\infty)=\{
x:\int_{t=0}^T|x(t)|_2^2dt<\infty,\forall T\;\geq 0\}$ (we will use $L_2$ and $L_{2e}$ if it is not necessary to specify the dimension and signal support), where we use $x$ to denote the signal in general and $x(t)$ to denote its value at time $t$. For a vector or matrix, we use superscript $*$ to denote its conjugate transpose. An operator is {causal} if the current output does not depend on future inputs. It is {bounded} if it has a finite $\Log_2$ gain. Let $\Phi:\cH\rightarrow\cH$ be a bounded linear operator on a Hilbert space. Then, its Hilbert adjoint is the operator $\Phi^*:\cH\rightarrow\cH$ such that $\left\langle\Phi x,y\right\rangle=\left\langle x,\Phi^*y\right\rangle$ for all $x,y\in\cH$, where $\left\langle\cdot,\cdot\right\rangle$ denotes the inner product. It is \emph{self-adjoint} if $\Phi=\Phi^*$. 

Consider the system (see also \cref{fig:intro})
\begin{align}
\by&=\G(\bu)\label{equ:sys1}\\
\bu&=\bDelta(\by)+\be\label{equ:sys2},
\end{align}
where $\G$ is the transfer function of a causal and bounded LTI system (i.e., it maps input $\bu\in\Log^{n_a}$ to output $\by\in\Log^{n_o}$ through the internal state dynamics $\dot{x} = Ax(t)+Bu(t)$), $\be\in\Log^{n_a}$ is the disturbance, and $\bDelta:\Log^{n_o}\rightarrow\Log^{n_a}$ is a bounded and causal function that is used to represent uncertainties in the system. IQC provides a framework to treat uncertainties such as nonlinear dynamics, model approximation and identification errors, time-varying parameters and disturbance noise, by using their {input-output characterizations}.

\begin{definition}[Integral quadratic constraints]
Consider the signals $\bw\in\Log_2$ and $\by\in\Log_2$ associated with Fourier transforms $\hat{\bw}$ and $\hat{\by}$, and $\bw=\bDelta(\by)$, where $\bDelta$ is a bounded and causal operator. We present both the frequency- and time-domain IQC definitions:
\begin{enumerate}
\item[(a)] (Frequency domain) Let ${\bPi}$ be a bounded and self-adjoint operator. Then, $\bDelta$ is said to satisfy the IQC defined by ${\bPi}$ (i.e., $\bDelta\in\text{IQC}({\bPi})$) if:
\begin{equation}
\sigma_{{\bPi}}(\hat{\by},\hat{\bw})=\int_{-\infty}^\infty\begin{bmatrix}
\hat{y}(j\omega)\\
\hat{w}(j\omega) 
\end{bmatrix}^*\Pi(j\omega)\begin{bmatrix}
\hat{y}(j\omega)\\
\hat{w}(j\omega) 
\end{bmatrix}d\omega\geq 0.
\end{equation}
\item[(b)] (Time domain) Let $(\bPsi,M)$ be any factorization of $\bPi=\bPsi^* M\bPsi$ such that $\bPsi$ is stable and $M=M^\top$. Then, $\bDelta$ is said to satisfy the \emph{hard IQC} defined by $(\bPsi,M)$ (i.e., $\bDelta\in\text{IQC}(\bPsi,M)$) if:
\begin{equation}
\int_0^T z(t)^\top M z(t)dt\geq 0,
\qquad\forall\;T\geq 0,
\end{equation}
where $\bz=\bPsi\begin{bmatrix}
\by\\
\bw
\end{bmatrix}$ is the filtered output given by the stable operator $\bPsi$. If instead of requiring nonnegativity at each time $T$, the nonnegativity is considered only when $T\rightarrow\infty$, then the corresponding condition is called \emph{soft IQC}.
\end{enumerate}
\label{def:iqc}
\end{definition}

As established in \cite{seiler2015stability}, the time- and frequency-domain IQC definitions are equivalent if there exists  $\bPi=\bPsi^*M\bPsi$ as a spectral factorization of $\bPi$ with $M=\begin{bmatrix}
I&0\\
0&-I
\end{bmatrix}$ such that $\bPsi$ and $\bPsi^{-1}$ are stable.

\begin{example}[Sector IQC]
A single-input single-output uncertainty $\Delta:\R\rightarrow\R$ is called ``sector bounded'' between $[\alpha,\beta]$ if $\alpha y(t)\leq \Delta(y(t))\leq\beta y(t)$, for all $ y\in\R$ and $t\geq 0$. It thus satisfies the sector IQC$(\bPsi,M)$ with $\bPsi=I$ and $M=\begin{bmatrix}
-2\alpha\beta&\alpha+\beta\\
\alpha+\beta&-2
\end{bmatrix}$. It also satisfies IQC$(\bPi)$ with $\bPi=M$ defined above.
\end{example}

\begin{example}[$\Log_2$ gain bound]
A MIMO uncertainty $\Delta:\R^n\rightarrow\R^m$ has the $\Log_2$ gain $\gamma$ if $\int_0^\infty\|w(t)\|^2dt\leq \gamma^2\int_0^\infty\|y(t)\|^2dt$, where $w(t)=\Delta(y(t))$. Thus, it satisfies IQC$(\bPsi,M)$ with $\bPsi=I_{n+m}$ and $M=\begin{bmatrix}
\lambda\gamma^2 I_n&0\\
0&-\lambda I_m
\end{bmatrix}$, where $\lambda>0$. It also satisfies IQC$(\bPi)$ with $\bPi=M$ defined above. This can be used to characterize  nonlinear operators with fast time-varying parameters.
\end{example}

Before stating a stability result, we define the system \eqref{equ:sys1}--\eqref{equ:sys2} (see \cref{fig:intro}) to be well-posed if for any $\be\in\Log_{2e}$, there exists a solution $\bu\in\Log_{2e}$, which depends causally on $\be$. A main IQC result for stability is stated below:

\begin{theorem}[\cite{megretski1997system}]
Consider the interconnected system \eqref{equ:sys1}--\eqref{equ:sys2}. Assume that: \textbf{(i)} the interconnected system $(\G,\tau\bDelta)$ is well posed for all $\tau\in[0,1]$; \textbf{(ii)} $\tau\bDelta\in\text{IQC}(\bPi)$ for $\tau\in[0,1]$; and \textbf{(iii)} there exists $\epsilon>0$ such that
\begin{equation}
\begin{bmatrix}
\hG(j\omega)\\
I(j\omega)
\end{bmatrix}^*\Pi(j\omega)\begin{bmatrix}
\hG(j\omega)\\
I(j\omega)
\end{bmatrix}\leq -\epsilon I,\;\qquad\forall\;\omega\in[0,\infty).
\label{equ:iqc_stable}
\end{equation}
Then, the system \eqref{equ:sys1}--\eqref{equ:sys2} is input-output stable (i.e., finite $\Log_2$ gain).
\end{theorem}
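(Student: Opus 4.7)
The plan is a homotopy argument on the interconnection parameter $\tau\in[0,1]$. Define the stability set
\[
S \;=\; \bigl\{\tau\in[0,1] : (\G,\tau\bDelta) \text{ has finite } L_2 \text{ gain from } \be \text{ to } (\bu,\by)\bigr\}.
\]
I would show $0\in S$, that $S$ is both open and closed in $[0,1]$, and then use connectedness of $[0,1]$ to conclude $S=[0,1]$, which in particular yields the desired finite-$L_2$-gain conclusion at $\tau=1$. The starting point is immediate: at $\tau=0$ the loop collapses to $\bu=\be$, $\by=\G\bu$, so boundedness of $\G$ gives stability with gain $\|\G\|_\infty$.

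The crux is a uniform-in-$\tau$ a priori bound. For any $\tau\in S$ and $\be\in L_2$, let $(\bu,\by,\bw)$ be the loop solution with $\bw=\tau\bDelta(\by)$. Writing the interconnection in the frequency domain,
\[
\begin{bmatrix}\hat{\by}\\ \hat{\bw}\end{bmatrix} \;=\; \begin{bmatrix}\hat{\G}\\ I\end{bmatrix}\hat{\bu} \;-\; \begin{bmatrix}0\\ \hat{\be}\end{bmatrix},
\]
and substituting into the IQC inequality $\sigma_\Pi(\hat{\by},\hat{\bw})\geq 0$ furnished by (ii) for $\tau\bDelta$, the quadratic-in-$\hat{\bu}$ piece is controlled by (iii) as $\leq -\epsilon\|\bu\|_2^2$, while the remainder is linear in $\hat{\bu}$ and quadratic in $\hat{\be}$ with coefficients bounded in terms of $\|\Pi\|_\infty$ and $\|\hat{\G}\|_\infty$. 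A Cauchy--Schwarz/AM--GM absorption then yields $\|\bu\|_2\leq\kappa\|\be\|_2$ and hence $\|\by\|_2\leq\|\G\|_\infty\kappa\|\be\|_2$, with $\kappa$ depending only on $\epsilon$, $\|\Pi\|_\infty$, and $\|\G\|_\infty$ --- crucially independent of $\tau$.

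With this bound in hand, openness of $S$ follows from a standard small-gain perturbation argument on the truncated dynamics: slightly perturbing $\tau$ around a point of $S$ keeps the loop invertible. Closedness is where the uniform bound pays off: given $\tau_n\to\tau_\star$ with $\tau_n\in S$, the $\tau$-uniform gain estimate together with well-posedness (i) lets one pass to the limit and conclude $\tau_\star\in S$. Connectedness of $[0,1]$ then forces $S=[0,1]$.

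The principal obstacle is that (ii) constrains signals in $L_2$, whereas solutions of the loop for generic $\be\in L_{2e}$ are only locally $L_2$, so the Parseval-based substitution above cannot be invoked naively. The standard remedy is a causal truncation argument: truncate $\be$ to $[0,T]$, use causality and (i) to obtain a truncated $(\bu,\by)\in L_2$, derive the a priori bound on $[0,T]$, and let $T\to\infty$. Making this limit rigorous --- verifying that the truncated signals inherit the relevant frequency-domain inequality and that $\Pi$ is compatible with truncation --- is the technical heart of the Megretski--Rantzer argument, and explains why (ii) is demanded uniformly over the entire family $\{\tau\bDelta\}_{\tau\in[0,1]}$ rather than only at $\tau=1$: the homotopy must not leave the admissible class of uncertainties at any point along the deformation.
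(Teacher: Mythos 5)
This theorem is quoted in the paper as a background result from Megretski and Rantzer, and the paper supplies no proof of its own, so there is nothing internal to compare against; your sketch must be judged against the canonical argument in the cited reference, and it matches it. The homotopy on $\tau$, the a priori bound obtained by sandwiching the IQC inequality between condition (iii) and a Cauchy--Schwarz absorption of the cross terms in $\hat{\be}$, the small-gain perturbation for openness, and the causal-truncation device for reconciling $L_{2e}$ loop signals with the $L_2$-based IQC are all exactly the ingredients of the original proof. The one place your packaging differs is the open-plus-closed connectedness argument: since your gain bound $\kappa$ is independent of $\tau$, the small-gain perturbation already yields an openness radius $\delta$ that is uniform over $S$, so one can simply step from $0$ to $1$ in $\lceil 1/\delta\rceil$ increments and the closedness step (which, as you describe it, would itself need the perturbation argument to pass to the limit) is not needed. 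That is how the original reference proceeds, and it spares you the most hand-wavy sentence in your write-up.
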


The above theorem requires three technical conditions. The well-posedness condition is a generic property for any acceptable model of a physical system. The second condition is implied if $\bPi=\begin{bmatrix}
\bPi_{11}&\bPi_{12}\\
\bPi_{12}^*&\bPi_{22}
\end{bmatrix}$ has the properties $\bPi_{11}\succeq 0$ and $\bPi_{22}\preceq 0$. The third condition is central, and it requires checking the feasibility at every frequency, which represents a main obstacle. As discussed in Section \cref{sec:compute}, this condition can be equivalently represented as a linear matrix inequality (LMI) using the Kalman-Yakubovich-Popov (KYP) lemma. In general, the more IQCs exist for the uncertainty, the better characterization can be obtained. If $\bDelta\in\text{IQC}(\bPi_k)$, $k\in[n_K]$, where $n_K$ is the number of IQCs satisfied by $\bDelta$,  then it is easy to show that $\bDelta\in\text{IQC}(\sum_{k=1}^{n_K}\tau_k\bPi_k)$, where $\tau_k\geq 0$; thus, the stability test \eqref{equ:iqc_stable} becomes a convex program, i.e., to find $\tau_k\geq 0$ such that:
\begin{equation}
\begin{bmatrix}
\hG(j\omega)\\
I(j\omega)
\end{bmatrix}^*\left(\sum_{k=1}^{n_K}\tau_k\Pi_k(j\omega)\right)\begin{bmatrix}
\hG(j\omega)\\
I(j\omega)
\end{bmatrix}\leq -\epsilon I,\;\forall\;\omega\in[0,\infty).
\label{equ:iqc_stable2}
\end{equation}

The counterpart for the frequency-domain stability condition in the time-domain can be stated using a standard dissipation argument \cite{seiler2015stability}.

\subsection{Related work}

To close this section, we summarize some connections to existing literature.  This work is closely related to the body of works on \emph{safe reinforcement learning}, defined as the process of learning policies that maximize performance in problems where safety is required during the learning and/or deployment \cite{garcia2015comprehensive}. A detailed literature review can be found in \cite{garcia2015comprehensive}, which has categorized two main approaches by modifying: \textbf{(1)} the optimality condition with a safety factor, and \textbf{(2)} the exploration process to incorporate external knowledge or risk metrics. Risk-aversion can be specified in the reward function, for example, by defining risk as the probability of reaching a set of unknown states in a discrete Markov decision process setting \cite{coraluppi1999risk,geibel2005risk}. Robust MDP is designed to maximize rewards while safely exploring the discrete state space \cite{moldovan2012safe,wiesemann2013robust}. For continuous states and actions, robust model predictive control can be employed to ensure robustness and safety constraints for the learned model with bounded errrors \cite{aswani2013provably}. These methods require an accurate or estimated models for policy learning. Recently, model-free policy optimization has been successfully demonstrated in real-world tasks such as robotics, business management, smart grid and transportation \cite{li2017deep}. Safety requirement is high in these settings. Existing approaches are based on constraint satisfaction that holds with high probability \cite{sui2015safe,achiam2017constrained}. 

The present analysis tackles the safe reinforcement learning problem from a robust control perspective, which is aimed at providing theoretical guarantees for stability \cite{zhou1996robust}.  Lyapunov functions are widely used to analyze and verify stability when the system and its controller are known \cite{perkins2002lyapunov,bobiti2016sampling}. For nonlinear systems without global convergence guarantees, region of convergence is often estimated, where any state trajectory that starts within this region stays within the region for all times and converges to a target state eventually \cite{khalil1996noninear}. For example, recently, \cite{berkenkamp2017safe}  has proposed a learning-based Lyapunov stability verification for physical systems, whose dynamics are sequentially estimated by Gaussian processes. In the same vein, \cite{akametalu2014reachability} has employed reachability analysis to construct safe regions in the state space by solving a partial differential equation. The main challenge of these methods is to find a suitable non-conservative Lyapunov function to conduct the analysis. 

The IQC framework proposed in \cite{megretski1997system} has been widely used to analyze the stability of large-scale complex systems such as aircraft control \cite{fry2017iqc}. The main advantages of IQC are its computational efficiency, non-conservatism, and unified treatment of a variety of nonlinearities and uncertainties. It has also been employed to analyze the stability of small-sized neural networks in reinforcement learning \cite{kretchmara2001robust,anderson2007robust}; however, in their analysis, the exact coefficients of the neural network need to be known a priori for the static stability analysis, and a region of safe coefficients needs to be calculated at each iteration for the dynamic stability analysis. This is computationally intensive, and it quickly becomes intractable when the neural network size grows. On the contrary,  because the present analysis is based on {a broad characterization of } control functions with bounded gradients, it does not need to access the coefficients of the neural network (or any forms of the controller). In general, robust analysis using advanced methods such as structured singular value \cite{packard1993complex} or IQC can be conservative. There are only few cases where the necessity conditions can be established, such as when the uncertain operator has a block diagonal structure of bounded singular values \cite{dullerud2013course}, but this set of uncertainties is much smaller than the set of performance-oriented controllers learned by RL. To this end, we are able to reduce conservatism of the results by introducing more informative quadratic constraints for those controllers, and analyze the necessity of the certificate criteria. This significantly extends the possibilities of stability-certified reinforcement learning to large and deep neural networks in nonlinear large-scale real-world systems, whose stability is otherwise impossible to be certified using existing approaches.

\section{Main results}
\label{sec:theory}
This section will introduce a set of quadratic constraints on gradient-bounded functions, describe the computation of a smoothness margin for linear (\cref{thm:lti_stable}) and nonlinear systems (\cref{thm:nltv_stable}). Furthermore, we examine the conservatism of the certificate condition in \cref{thm:lti_stable} for linear systems.

\subsection{Quadratic constraints on gradient-bounded functions}
\label{sec:constraint}

The starting point of this analysis is a less conservative constraint on general vector-valued functions. We start by recalling the definition of a Lipschitz continuous function:

\begin{definition}[Lipschitz continuous function]
We define both the local and global versions of the Lipschitz continuity for a function $f:\R^n\rightarrow\R^m$:
\begin{enumerate}
\item[(a)] The function $f$ is \emph{locally Lipschitz continuous} on the open subset $\B$ if there exists a constant $\xi>0$ (i.e., Lipschitz constant of $f$ on $\B$) such that
\begin{equation}
|f(x)-f(y)|\leq \xi|x-y|,\qquad \forall\;x,y\in\B.
\label{equ:lip_loc}
\end{equation}
\item[(b)] If $f$ is Lipschitz continuous on $\R^n$ with a constant $\xi$ (i.e., $\B=\R^n$ in \eqref{equ:lip_loc}), then $f$ is called \emph{globally Lipschitz continuous} with the Lipschitz constant $\xi$.
\end{enumerate}
\end{definition}
Lipschitz continuity implies uniform continuity. The above definition also establishes a connection between locally and globally Lipschitz continuity. The norm $|\cdot|$ in the definition can be any norm, but the Lipschitz constant depends on the particular choice of the norm. Unless otherwise stated, we use the Euclidean norm in our analysis.  

To explore some useful properties of Lipschitz continuity, consider a scalar-valued function (i.e., $m=1$). Let $h^{(j)}_{xy}=\begin{bmatrix}
y_1,y_2,\ldots,y_j,x_{j+1},\ldots,x_n
\end{bmatrix}^\top\in\R^n$  denote a \emph{hybrid vector} between $x$ and $y$, with $h^{(0)}_{xy}=x$ and $h^{(n)}_{xy}=y$. Then, local Lipschitz continuity of $f:\R^n\rightarrow\R$ on $\B$ implies that 
\begin{equation}
\frac{|f(h^{(j)}_{xy})-f(h^{(j-1)}_{xy})|}{|x_j-y_j|}\leq \xi, \qquad\forall\;x,y\in\B,x_j\neq y_j, j\in[n].
\end{equation}
If we were to assume that $f$ is differentiable, then it follows that its (partial) derivative is bounded by the Lipschitz constant. For a vector-valued function $f=\begin{bmatrix}
f_1,\ldots,f_m
\end{bmatrix}^\top$ that is $\xi$-Lipschitz, it is necessary that every component $f_i$ be $\xi$-Lipschitz. In general, every continuously differentiable function is locally Lipschitz, but the reverse is not true, since the definition of Lipschitz continuity does not require differentiability. Indeed, by the Rademacher's theorem, if $f$ is locally Lipschitz on $\B$, then it is differentiable at \emph{almost} every point in $\B$ \cite{clarke1990optimization}.

%


For the purpose of stability analysis, we can express \eqref{equ:lip_loc} as a point-wise quadratic constraint:
\begin{equation}
\begin{bmatrix}
x-y\\
f(x)-f(y)
\end{bmatrix}^\top\begin{bmatrix}
\xi^2I_n&0\\
0&-I_m
\end{bmatrix}\begin{bmatrix}
x-y\\
f(x)-f(y)
\end{bmatrix}\geq 0, \qquad\forall\;x,y\in\B.
\label{equ:ineq_const_1}
\end{equation}
The above constraint, nevertheless, can be sometimes too conservative, because it does not explore the structure of a given problems. To elaborate on this, consider the function $f:\R^2\rightarrow\R^2$ defined as
\begin{equation}
f(x_1,x_2)=\begin{bmatrix}
\tanh(0.5x_1)-ax_1,\sin(x_2)
\end{bmatrix}^\top,
\label{equ:f_exp}
\end{equation}
where $x_1,x_2\in\R$ and $|a|\leq 0.1$ is a deterministic but unknown parameter with a bounded magnitude. Clearly, to satisfy \eqref{equ:lip_loc} on $\R^2$ for all possible tuples $(a,x_1,x_2)$, we need to choose $\xi\geq 1$ (i.e., the function has the Lipshitz constant 1). However, this characterization is too general in this case, because it ignores the \emph{non-homogeneity} of $f_1$ and $f_2$, as well as the \emph{sparsity} of the problem representation. Indeed, $f_1$ only depends on $x_1$ with its slope restricted to $[-0.1,0.6]$ for all possible $|a|\leq 0.1$, and $f_2$ only depends on $x_2$ with its slope restricted to $[-1,1]$. In the context of controller design, the non-homogeneity of control outputs often arises from physical constraints and domain knowledge, and the sparsity of control architecture is inherent in scenarios with distributed local information. To explicitly address these requirements, we state the following quadratic constraint.

%
%

\begin{lemma}{}
For a vector-valued function $f:\R^n\rightarrow\R^m$ that is differentiable with bounded partial derivatives on $\B$ (i.e., $\underline{\xi}_{ij}\leq\partial_jf_i(x)\leq\overline{\xi}_{ij}$ for all $x\in\B$), the following quadratic constraint is satisfied for all $\lambda_{ij}\geq 0$, $i\in[m]$, $j\in[n]$, and $x,y\in\B$:
\begin{equation}
\begin{bmatrix}
x-y\\
q(x,y)
\end{bmatrix}^\top M(\lambda;\xi)\begin{bmatrix}
x-y\\
q(x,y)
\end{bmatrix}\geq 0,\label{equ:lip_con}
\end{equation}
where $M(\lambda;\xi)$ is given by
\begin{equation}
    \begin{bmatrix}
\mathrm{diag}\left(\left\lbrace \sum_{i}\lambda_{ij}(\overline{c}^2_{ij}-c^2_{ij})\right\rbrace\right)&U(\{\lambda_{ij},c_{ij}\})^\top\\
U(\{\lambda_{ij},c_{ij}\})&\mathrm{diag}\left(\{-\lambda_{ij}\}\right)
\end{bmatrix},
\label{equ:define_M}
\end{equation}
where $\mathrm{diag}(x)$ denotes a diagonal matrix with diagonal entries specified by $x$, and $q(x,y)=\begin{bmatrix}
q_{11},\ldots, q_{1n},\ldots,q_{m1},\ldots,q_{mn}
\end{bmatrix}^\top$ is determined by $x$ and $y$, $\{-\lambda_{ij}\}$ is a set of non-negative multipliers that follow the same index order as $q$, $U(\{\lambda_{ij},c_{ij}\})=\begin{bmatrix}
\mathrm{diag}\left(\{-\lambda_{1j}c_{1j}\}\right)&\cdots&\mathrm{diag}\left(\{-\lambda_{mj}c_{mj}\}\right)
\end{bmatrix}\in\R^{n\times mn}$, $c_{ij}= \frac{1}{2}\left(\underline{\xi}_{ij}+\overline{\xi}_{ij}\right)$, $\overline{c}_{ij}=\overline{\xi}_{ij}-c_{ij}$, and $q$ is related to the output of $f$ by the constraint:
\begin{equation}
f(x)-f(y)=\begin{bmatrix}
I_{m}\otimes 1_{1\times n}
\end{bmatrix}q=Wq,
\end{equation}
where $\otimes$ denotes the Kronecker product.
\label{lem:quad_const}
\end{lemma}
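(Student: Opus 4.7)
The plan is to build the auxiliary vector $q$ coordinate-by-coordinate by walking along a coordinate-aligned path from $x$ to $y$ and applying the one-dimensional mean value theorem, and then to derive the inequality \eqref{equ:lip_con} by summing, with multipliers $\lambda_{ij}\ge 0$, the per-component sector bound implied by $\partial_j f_i \in [\underline\xi_{ij},\overline\xi_{ij}]$.

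Concretely, I would first reuse the hybrid vectors $h^{(j)}_{xy}$ introduced earlier, with $h^{(0)}_{xy}=x$ and $h^{(n)}_{xy}=y$, so that consecutive vectors differ only in the $j$-th coordinate. For each $(i,j)$ the one-variable mean value theorem, applied to the map $t\mapsto f_i(h^{(j-1)}_{xy} + t(y_j-x_j)e_j)$ on $[0,1]$, yields a point $z^{(i,j)}$ on the corresponding segment such that $f_i(h^{(j)}_{xy})-f_i(h^{(j-1)}_{xy})=\partial_j f_i(z^{(i,j)})\,(y_j-x_j)$. Setting $\xi_{ij} := \partial_j f_i(z^{(i,j)})$ and $q_{ij} := \xi_{ij}(x_j-y_j)$, a telescoping sum in $j$ gives $f_i(x)-f_i(y)=\sum_j q_{ij}$, which is exactly the representation $f(x)-f(y)=Wq$ with $W=I_m\otimes 1_{1\times n}$. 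Since each hybrid segment lies in $\B$ (whenever $\B=\R^n$, or more generally when $\B$ is convex), the assumed partial derivative bounds give $\xi_{ij}\in[\underline\xi_{ij},\overline\xi_{ij}]$.

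Second, I would convert each scalar interval constraint into a quadratic inequality. With $c_{ij}=\tfrac12(\underline\xi_{ij}+\overline\xi_{ij})$ and $\overline c_{ij}=\tfrac12(\overline\xi_{ij}-\underline\xi_{ij})$, the inclusion $\xi_{ij}\in[\underline\xi_{ij},\overline\xi_{ij}]$ is equivalent to $(\xi_{ij}-c_{ij})^2 \le \overline c_{ij}^2$, i.e.,
\[
\overline c_{ij}^2 - c_{ij}^2 + 2c_{ij}\xi_{ij} - \xi_{ij}^2 \ge 0.
\]
Multiplying by $(x_j-y_j)^2\ge 0$ and substituting $q_{ij}=\xi_{ij}(x_j-y_j)$ turns this into the pointwise quadratic inequality
\[
(\overline c_{ij}^2 - c_{ij}^2)(x_j-y_j)^2 + 2c_{ij}(x_j-y_j)q_{ij} - q_{ij}^2 \ge 0,
\]
and scaling by $\lambda_{ij}\ge 0$ followed by summation over $i\in[m]$ and $j\in[n]$ preserves nonnegativity.

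Finally, I would match the summed quadratic form against the block structure of $M(\lambda;\xi)$ in \eqref{equ:define_M}: the $(x_j-y_j)^2$ terms aggregate into the diagonal $(1,1)$ block with entries $\sum_i \lambda_{ij}(\overline c_{ij}^2 - c_{ij}^2)$; the cross terms assemble into the off-diagonal $U$ blocks built from the diagonals $\lambda_{ij}c_{ij}$ (the overall sign being absorbed by the convention used to define $q$); and the $q_{ij}^2$ terms produce the $(2,2)$ block $\diag(-\lambda_{ij})$. The only non-mechanical ingredient is the choice of a coordinate-aligned hybrid path, because this is what lets me apply MVT one coordinate at a time and extract bounds on the individual partial derivatives $\partial_j f_i$ rather than on a Jacobian norm; without it one could only recover the coarser Lipschitz-based constraint \eqref{equ:ineq_const_1}. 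I do not anticipate any deeper obstacle beyond careful sign and index bookkeeping in this last identification.
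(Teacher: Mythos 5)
Your proposal is correct and follows essentially the same route as the paper: the paper's proof likewise posits functions $\delta_{ij}(x,y)\in[\underline{\xi}_{ij},\overline{\xi}_{ij}]$ with $f_i(x)-f_i(y)=\sum_{j}\delta_{ij}(x,y)(x_j-y_j)$ (the hybrid-path/mean-value argument you spell out explicitly), sets $q_{ij}=\delta_{ij}(x,y)(x_j-y_j)$, derives the per-component quadratic inequality from $(\delta_{ij}(x,y)-c_{ij})^2\leq\overline{c}_{ij}^2$, and sums with multipliers $\lambda_{ij}\geq 0$. Your added care about the coordinate-aligned segments remaining in $\B$ (and the sign bookkeeping in $U$) only fills in details the paper treats as standard.
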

\begin{proof}
For a vector-valued function $f:\R^n\rightarrow\R^m$ that is differentiable with bounded partial derivatives on $\B$ (i.e., $\underline{\xi}_{ij}\leq\partial_jf_i(x)\leq\overline{\xi}_{ij}$ for all $x\in\B$), there exist functions $\delta_{ij}:\R^n\times \R^n\rightarrow\R$ bounded by $\underline{\xi}_{ij}\leq\delta_{ij}(x,y)\leq\overline{\xi}_{ij}$ for all $i\in[m]$ and $j\in[n]$ such that
\begin{equation}
f(x)-f(y)=\begin{bmatrix}
\sum_{j=1}^n\delta_{1j}(x,y)(x_j-y_j)\\
\vdots\\
\sum_{j=1}^n\delta_{mj}(x,y)(x_j-y_j)
\end{bmatrix}.
\end{equation}
By defining $q_{ij} = \delta_{ij}(x,y)(x_j-y_j)$, since $\left(\delta_{ij}(x,y)-c_{ij}\right)^2\leq \overline{c}_{ij}^2$, it follows that
\begin{equation}
\begin{bmatrix}
x_j-y_j\\
q_{ij}
\end{bmatrix}^\top\begin{bmatrix}
\overline{c}_{ij}^2-c_{ij}^2&c_{ij}\\
c_{ij}&-1
\end{bmatrix}\begin{bmatrix}
\star
\end{bmatrix}\geq 0.
\end{equation}
The result follows by introducing nonnegative multipliers $\lambda_{ij}\geq 0$, and the fact that $f_i(x)-f_i(y)=\sum_{j=1}^mq_{ij}$.
\end{proof}
This above bound is a direct consequence of standard tools in real analysis \cite{zemouche2013lmi}. To understand this result, it can be observed that  \eqref{equ:lip_con} is equivalent to:
\begin{equation}
\sum_{i,j} \lambda_{ij}\Big((\overline{c}_{ij}^2-c_{ij}^2) (x_j-y_j)^2+2c_{ij}q_{ij}(x_j-y_j)-q_{ij}^2\Big)\geq 0, \qquad\forall \lambda_{ij}\geq 0,
\label{equ:lip_con_ineq}
\end{equation}
with $f_i(x)-f_i(y) =\sum_{j=1}^n q_{ij}$, where $q_{ij}$ depends on $x$ and $y$. Since \eqref{equ:lip_con_ineq} holds for all $\lambda_{ij}\geq 0$, it is equivalent to the condition that $(\overline{c}_{ij}^2-c_{ij}^2) (x_j-y_j)^2+2c_{ij}q_{ij}(x_j-y_j)-q_{ij}^2\geq 0$ for all $i\in[m]$ and $j\in[n]$, which is a direct result of the bounds imposed on the partial derivatives of $f_i$. To illustrate its usage, let us apply the constraint to characterize the example function \eqref{equ:f_exp}, where $\underline{\xi}_{11}=-0.1,\overline{\xi}_{11}=0.6,\underline{\xi}_{22}=-1,\overline{\xi}_{22}=1$, and all the other bounds ($\underline{\xi}_{12},\overline{\xi}_{12},\underline{\xi}_{21},\overline{\xi}_{21}$) are zero. This clearly yields a more informative constraint than merely relying on the Lipschitz constraint \eqref{equ:ineq_const_1}. In fact, for a differentiable $l$-Lipschitz function, we have $\overline{\xi}_{ij}=-\underline{\xi}_{ij}=l$, and by limiting the choice of $\lambda_{ij}=\begin{cases}\lambda&\text{if }i=1\\
0&\text{if }i\neq 1\end{cases}$, \eqref{equ:lip_con_ineq} is reduced to \eqref{equ:ineq_const_1}. However, as illustrated in this example, the quadratic constraint in Lemma \ref{lem:quad_const} can incorporate richer information about the structure of the problem; therefore, it often gives rise to non-trivial stability bounds in practice.

The constraint introduced above is not a classical IQC, since it involves an intermediate variable $q$ that relates to the output $f$ through a set of linear equalities. For stability analysis, let $y=x^*\in\B$ be the equilibrium point, and without loss of generality, assume that $x^*=0$ and $f(x^*)=0$. Then, one can define the quadratic functions
\begin{equation*}
    \phi_{ij}(x,q)=(\overline{c}_{ij}^2-c_{ij}^2) x_j^2+2c_{ij}q_{ij}x_j-q_{ij}^2,
\end{equation*}
and the condition \eqref{equ:lip_con} can be written as
\begin{equation}
    \sum_{ij} \lambda_{ij}\phi_{ij}(x,q)\geq 0,\qquad \forall \lambda_{ij}\geq 0,
\end{equation}
which can be used to characterize the set of $(x,q)$ associated with the function $f$, as we will discuss in \cref{sec:necessity}.

To simplify the mathematical treatment, we have focused on differentiable functions in \cref{lem:quad_const}; nevertheless, the analysis can be extended to non-differentiable but continuous functions (e.g., the ReLU function $\max\{0,x\}$) using the notion of generalized gradient \cite[Chap. 2]{clarke1990optimization}. In brief, by re-assigning the bounds on partial derivatives to uniform bounds on the set of generalized partial derivatives, the constraint \eqref{equ:lip_con} can be directly applied.

In relation to the existing IQCs, this constraint has wider applications for the characterization of gradient-bounded functions. The Zames-Falb IQC introduced in \cite{zames1968stability} has been widely used for single-input single-output (SISO) functions $f:\R\rightarrow\R$, but it requires the function to be monotone with the slope restricted to $[\alpha,\beta]$ with $\alpha\geq 0$, i.e., $0\leq \alpha\leq\frac{f(x)-f(y)}{x-y}\leq \beta$ whenever $x\neq y$. The MIMO extension holds true only if the nonlinear function $f:\R^n\rightarrow\R^n$ is restricted to be the {gradient of a convex real-valued function} \cite{safonov2000zames,heath2005zames}. As for the sector IQC, the scalar version can not be used (because it requires $f_i(x)=0$ whenever there exists $j\in[n]$ such that $x_j=0$, which is extremely restrictive), and the vector version is in fact \eqref{equ:ineq_const_1}. 	In contrast, the quadratic constraint in \cref{lem:quad_const} can be applied to non-monotone, vector-valued Lipschitz functions.

\subsection{Computation of the smoothness margin}
\label{sec:compute}
With the newly developed quadratic constraint in place, this subsection explains the computation for a smoothness margin of an LTI system $\G$, whose state-space representation is given by:
\begin{equation}
\begin{cases}
\dot{x}_G&\!\!\!\!=Ax_G+Bu\\
w&\!\!\!\!=\pi(x_G)\\
u&\!\!\!\!=e+w
\end{cases}
\label{equ:lti}
\end{equation}
where $x_G\in\R^{n_s}$ is the state (the dependence on $t$ is omitted for simplicity). The system is assumed to be stable, i.e., $A$ is Hurwitz. We can connect this linear system in {feedback} with a controller $\pi:\R^{n_s}\rightarrow\R^{n_a}$. The signal $e\in\R^{n_a}$ is the exploration vector introduced in reinforcement learning, and $w\in\R^{n_a}$ is the policy action. We are interested in certifying the set of gradient bounds $\xi\in\R^{n_s\times n_a}$ of $\pi$ such that the interconnected system is input-output stable at all time $T\geq 0$, i.e.,
\begin{equation}
\int_0^T\left|y(t)\right|^2dt\leq \gamma^2\int_0^T\left|e(t)\right|^2dt,
\label{equ:lti_stable}
\end{equation}
where $\gamma$ is a finite upper bound for the $\Log_2$ gain. Let $A\succeq B$ or $A\succ B$ denote that $A-B$ is positive semidefinite or positive definite, respectively. To this end, define the $\text{SDP}(P,\lambda,\gamma,\xi)$ as follows:
\begin{equation}
\text{SDP}(P,\lambda,\gamma,\xi):\begin{bmatrix}
O(P,\lambda,\xi) &S(P)\\
      S(P)^\top & -\gamma I
\end{bmatrix}\prec 0,
\label{equ:sdp_lti}
\end{equation}
where $P=P^\top\succeq 0$ and
\begin{align*}
O(P,\lambda,\xi) &=\begin{bmatrix}
A^\top P+PA&PBW\\
W^\top B^\top P&0
\end{bmatrix}+\frac{1}{\gamma}\begin{bmatrix}
I&0\\
0&0
\end{bmatrix}+M(\lambda;\xi),\;\;
S(P)=\begin{bmatrix}
PB\\
0
\end{bmatrix},
\end{align*}
where $M(\lambda;\xi)$ is defined in \eqref{equ:define_M}.  We will show next that the stability of the interconnected system can be certified using linear matrix inequalities.

\begin{theorem}
  \label{thm:lti_stable}
  Let $\G$ be stable (i.e., $A$ is Hurwitz) and $\pi\in\R^{n_s}\rightarrow\R^{n_a}$ be a bounded causal controller. Assume that:
\begin{enumerate}
\item[(i)] the interconnection of $\G$ and $\pi$ is well-posed; 
\item[(ii)] $\pi$ has bounded partial derivatives on $\B$ (i.e., $\underline{\xi}_{ij}\leq\partial_j\pi_i(x)\leq\overline{\xi}_{ij}$, for all $x\in\B$, $i\in[n_a]$ and $j\in[n_s]$).
\end{enumerate}  
  If there exist $P\succeq 0$ and a scalar $\gamma>0$ such that $\text{SDP}(P,\lambda,\gamma,\xi)$ is feasible, then the feedback interconnection of $\G$ and $\pi$ is stable (i.e., it satisfies \eqref{equ:lti_stable}).
 \end{theorem}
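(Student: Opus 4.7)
The overall strategy is a standard dissipation/Lyapunov argument built on the new quadratic constraint in \cref{lem:quad_const}. The plan is to take the quadratic storage function $V(x_G) = x_G^\top P x_G$ and to show that the left-hand side of the SDP \eqref{equ:sdp_lti}, when sandwiched between an appropriate test vector, simultaneously encodes $\dot V$ along trajectories, the output penalty $\tfrac{1}{\gamma}|x_G|^2$, the exploration cost $-\gamma|e|^2$, and a nonnegative Lipschitz-type quadratic form whose sign is controlled by \cref{lem:quad_const}.

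First I would set the equilibrium at $x^\star = 0$ with $\pi(0) = 0$ and invoke \cref{lem:quad_const} with $y = 0$, $x = x_G(t)$ to obtain an intermediate signal $q(t) \in \R^{n_a n_s}$ satisfying two properties: (a) the linear relation $w = \pi(x_G) = Wq$ with $W = I_{n_a}\otimes 1_{1\times n_s}$, and (b) the pointwise inequality $\begin{bmatrix} x_G^\top & q^\top\end{bmatrix} M(\lambda;\xi) \begin{bmatrix} x_G \\ q\end{bmatrix}\geq 0$ for every choice of nonnegative multipliers. Assumption (ii) of the theorem is what makes \cref{lem:quad_const} applicable along the closed-loop trajectory, and assumption (i) (well-posedness) guarantees that $(x_G,q,e)$ is jointly well-defined in $\Log_{2e}$, so that everything below can be evaluated pointwise in $t$.

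Next I would pre- and post-multiply the block SDP \eqref{equ:sdp_lti} by the test vector $\begin{bmatrix} x_G^\top & q^\top & e^\top\end{bmatrix}^\top$ and read off each contribution. The $(A^\top P + PA)$, $PBW$, and $PB$ entries combine, via the identity $\dot V = 2 x_G^\top P (A x_G + B(e+w))$ together with $w = Wq$, to give exactly $\dot V(x_G)$. The $\tfrac{1}{\gamma}\,\mathrm{diag}(I,0)$ block contributes $\tfrac{1}{\gamma}|x_G|^2$; the $-\gamma I$ lower-right block contributes $-\gamma|e|^2$; and the $M(\lambda;\xi)$ block contributes a quantity that is $\geq 0$ by step one. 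Strictness of the SDP therefore yields the pointwise dissipation inequality
\begin{equation*}
\dot V(x_G(t)) + \tfrac{1}{\gamma}|x_G(t)|^2 - \gamma|e(t)|^2 < 0.
\end{equation*}
Integrating on $[0,T]$ with zero initial state so that $V(0) = 0$, and using $V(T)\geq 0$, gives $\int_0^T |x_G(t)|^2\,dt \leq \gamma^2 \int_0^T |e(t)|^2\,dt$, which is \eqref{equ:lti_stable} (taking the output to be the state, consistent with the identity block appearing in \eqref{equ:sdp_lti}).

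The main obstacle is the middle step: recognizing that the specific block structure of $O(P,\lambda,\xi)$ and $S(P)$ is tailored so that the natural test vector is $(x_G,q,e)$ rather than the more traditional $(x_G,w,e)$. This is exactly where the unusual feature of \cref{lem:quad_const}---that the quadratic constraint is stated on the intermediate variable $q$ with the linear side-relation $w = Wq$---is absorbed into the dissipation framework. Once this identification is made, the $\mathcal{S}$-procedure style combination of the Lyapunov decrease and the nonnegative constraint term is routine, and the remaining manipulations are purely algebraic.
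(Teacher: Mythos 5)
Your proposal is correct and follows essentially the same route as the paper's proof: the same dissipation argument with storage function $V(x_G)=x_G^\top P x_G$, the same test vector $(x_G,q,e)$ applied to the SDP matrix, nonnegativity of the $M(\lambda;\xi)$ term via \cref{lem:quad_const}, and integration over $[0,T]$ using $P\succeq 0$. Your added remarks on centering the equilibrium at zero and on $V(0)=0$, $V(T)\geq 0$ simply make explicit what the paper leaves implicit.
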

\begin{proof}
The proof follows a standard dissipation argument. To proceed, we multiply $\begin{bmatrix}
x_{G}^\top&q^\top&e^\top
\end{bmatrix}^\top$ to the left and its transpose to the right of the augmented matrix in \eqref{equ:sdp_lti}, and use the constraints $w=Wq$ and $y=x_G$. Then, $\text{SDP}(P,\lambda,\gamma,\xi)$ can be written as a dissipation inequality:
\begin{equation*}
\dot{V}(x_{G})+\begin{bmatrix}
x_{G}\\
q
\end{bmatrix}^\top M(\lambda;\xi)\begin{bmatrix}
x_{G}\\
q
\end{bmatrix}< \gamma e^\top e-\frac{1}{\gamma}y^\top y,
\end{equation*}
where $V(x_G)=x_G^\top P x_G$ is known as the storage function, and $\dot{V}(\cdot)$ is its derivative with respect to time $t$. Because the second term is guaranteed to be non-negative by \cref{lem:quad_const}, if $\text{SDP}(P,\lambda,\gamma,\xi)$ is feasible with a solution $(P,\lambda,\gamma,\xi)$, we have:
\begin{equation}
\dot{V}(x_{G})+\frac{1}{\gamma}y^\top y-\gamma  e^\top e< 0,
\end{equation}
which is satisfied at all times $t$. From well-posedness, the above inequality can be integrated from $t=0$ to $t=T$, and then it follows from $P\succeq 0$ that:
\begin{equation}
\int_0^T|y(t)|^2dt\leq\gamma^2\int_0^T|e(t)|^2dt.
\end{equation}
Hence, the interconnected system with the RL policy $\pi$ is stable.
\end{proof}
The above theorem requires that $\G$ be stable when there is no feedback policy $\pi$. This is automatically satisfied in many physical systems with an existing stabilizing (but not performance-optimizing) controller. In the case that the original system is not stable, one needs to first design a controller to stablize the system or design the controller under uncertainty (in this case, the RL policy), which are well-studied problems in the literature (e.g., $H_\infty$ controller synthesis \cite{dullerud2013course}). Then, the result can be used to ensure stability while delegating reinforcement learning to optimize the performance of the policy under gradient bounds. 

The above result essentially suggests a computational approach in robust control analysis. Given a stable LTI system depicted in \eqref{equ:lti}, the first step is to represent the RL policy as an uncertainty block in a feedback interconnection. Because the parameters of the neural network policy may not be known {\it a priori} and will be continuously updated during learning, we characterize it using bounds on {partial gradients} (e.g., if it is known that the action is positively correlated with certain observation metric, we can specify its partial gradient to be mostly positive with only a small negative margin). A simple but conservative choice is a $\Log_2$-gain bound IQC; nevertheless, to achieve a less conservative result, we can employ the quadratic constraint developed in \cref{lem:quad_const}, which exploits both the sparsity of the control architecture and the non-homogeneity of the outputs. For a given set of gradient bounds $\xi$, we  find the smallest $\gamma$ such that \eqref{equ:sdp_lti} is feasible, and $\gamma$ corresponds to the upper bound on the $\Log_2$ gain of the interconnected system both during learning (with the excitation $e$ added to facilitate policy exploration) and actual deployment. If $\gamma$ is finite, then the system is provably stable in the sense of \eqref{equ:lti_stable}.

We remark that $\text{SDP}(P,\lambda,\gamma,\xi)$ is quasiconvex, in the sense that it reduces to a standard LMI with a fixed $\gamma$. To solve it numerically, we start with a small $\gamma$ and gradually increase it until a solution $(P,\lambda)$ is found. This is repeated for multiple sets of $\xi$. Each iteration (i.e., LMI for a given set of $\gamma$ and $\xi$) can be solved efficiently by interior-point methods. As an alternative to searching on $\gamma$ for a given $\xi$, more sophisticated methods for solving the generalized eigenvalue optimization problem can be employed \cite{boyd1994linear}.

\subsection{Extension to nonlinear systems with uncertainty}
\label{sec:nonlinear_analysis}

The previous analysis for LTI systems can be extended to a generic nonlinear system described in \eqref{equ:system}. The key idea is to model the nonlinear and potentially time-varying part $g_t(x(t))$ as an uncertain block with IQC constraints on its behavior. Specifically, consider the LTI component $\underline{\G}$:
\begin{equation}
\begin{cases}
\dot{x}_G&\!\!\!\!=Ax_G+Bu+v\\
y&\!\!\!\!=x_G
\end{cases}
\label{equ:nltv}
\end{equation}
where $x_G\in\R^{n_s}$ is the state and $y\in\R^{n_s}$ is the output. The linearized system is assumed to be stable, i.e., $A$ is Hurwitz. The nonlinear part is connected in {feedback}:
\begin{equation}
\begin{cases}
u&\!\!\!\!=e+w\\
w&\!\!\!\!=\pi(y)\\
v&\!\!\!\!=g_t(y)
\end{cases}
\label{equ:nltv_rl}
\end{equation}
where $e\in\R^{n_a}$ and $w\in\R^{n_a}$ are defined as before, and $g_t:\R^{n_s}\rightarrow\R^{n_s}$ is the nonlinear and time-varying component. In addition to characterizing $\pi$  using the Lipschitz property as in \eqref{equ:lip_con}, we assume that $g_t:\R^{n_s}\rightarrow\R^{n_s}$ satisfies the IQC defined by $(\bPsi,M_g)$ as in Definition \ref{def:iqc}. The system $\bPsi$ has the state-space representation:
\begin{equation}
\begin{cases}
\dot{\psi}&\!\!\!\!=A_\psi\psi+B_{\psi}^v v+B_{\psi}^y y\\
z&\!\!\!\!=C_\psi\psi+D_{\psi}^vv+D_{\psi}^yy
\end{cases},
\label{equ:nltv_psi}
\end{equation}
where $\psi\in\R^{n_s}$ is the internal state and $z\in\R^{n_z}$ is the {filtered} output. By denoting $x=\begin{bmatrix}
x_G^\top&\psi^\top
\end{bmatrix}^\top\in\R^{2n_s}$ as the new state, one can combine \eqref{equ:nltv} and \eqref{equ:nltv_psi} via reducing $y$ and letting $w=Wq$:
\begin{equation}
\begin{cases}
\dot{x}&\!\!\!\!=\underbrace{\begin{bmatrix}
A&0\\
B_\psi^y&A_\psi
\end{bmatrix}}_{\underline{A}}x+\underbrace{\begin{bmatrix}
B\\
0
\end{bmatrix}}_{\underline{B}_e}e+\underbrace{\begin{bmatrix}
BW\\
0
\end{bmatrix}}_{\underline{B}_q}q+\underbrace{\begin{bmatrix}
I\\
B^v_\psi
\end{bmatrix}}_{\underline{B}_v}v\\
z&\!\!\!\!=\underbrace{\begin{bmatrix}
D_\psi^y&C_\psi
\end{bmatrix}}_{\underline{C}}x+D_\psi^vv
\end{cases},
\label{equ:nltv_comb}
\end{equation}
where $\underline{A}$, $\underline{B}_e$, $\underline{B}_q$, $\underline{B}_v$, $\underline{C}$ are matrices of proper dimensions defined above. Similar to the case of LTI systems, the objective is to find the gradient bounds on $\pi$ such that the system becomes stable in the sense of \eqref{equ:lti_stable}. In the same vein, we define $\underline{\text{SDP}}(P,\lambda,\gamma,\xi)$ as:
\begin{equation}
\underline{\text{SDP}}(P,\lambda,\gamma,\xi):\begin{bmatrix}
O(P,\lambda,\xi) &O_v(P)& S(P)\\
O_v(P)^\top&D_\psi^{v\top}M_qD_\psi^v&0\\
      S(P)^\top &0& -\gamma I
\end{bmatrix}\prec 0,
\label{equ:nltv_sdp}
\end{equation}
where $P\succeq 0$, and
\begin{align*}
O(P,\lambda,\xi) &=\begin{bmatrix}
\ubA^\top P+P\ubA&P\ubB_q\\
\ubB_q^\top P&0
\end{bmatrix}+\begin{bmatrix}
\ubC^\top M_g\ubC&0\\
0&0
\end{bmatrix}+M(\lambda;\xi)+\frac{1}{\gamma}\begin{bmatrix}
I&0\\
0&0
\end{bmatrix},\\
O_v(P) &=\begin{bmatrix}
\ubC^\top M_q D_\psi^v+P\ubB_v\\
0
\end{bmatrix},\;
S(P)=\begin{bmatrix}
P\ubB_e\\
0
\end{bmatrix},
\end{align*}
where $M(\lambda;\xi)$ is defined in \eqref{equ:define_M}. The next theorem provides a stability certificate for the nonlinear time-varying system \eqref{equ:system}.
 
\begin{theorem}
  \label{thm:nltv_stable}
  Let $\underline{\G}$ be stable (i.e., $A$ in \eqref{equ:nltv} is Hurwitz) and $\pi\in\R^{n_s}\rightarrow\R^{n_a}$ be a bounded causal controller. Assume that:
\begin{enumerate}
\item[(i)] the interconnection of $\underline{\G}$, $\pi$, and $g_t$ is well-posed; 
\item[(ii)] $\pi$ has bounded partial derivatives on $\B$ (i.e., $\underline{\xi}_{ij}\leq\partial_j\pi_i(x)\leq\overline{\xi}_{ij}$ for all $x\in\B$, $i\in[n_a]$ and $j\in[n_s]$); 
\item[(iii)] $g_t\in\text{IQC}(\bPsi,M_g)$, where $\bPsi$ is stable.
\end{enumerate}  
  If there exist $P\succeq 0$ and a scalar $\gamma>0$ such that $\underline{\text{SDP}}(P,\lambda,\gamma,\xi)$ in \eqref{equ:nltv_sdp} is feasible, then the feedback interconnection of the nonlinear system \eqref{equ:system} and $\pi$ is stable (i.e., it satisfies \eqref{equ:lti_stable}).
 \end{theorem}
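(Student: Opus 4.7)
The proof plan extends the dissipation argument of \cref{thm:lti_stable} to the augmented state-space dynamics \eqref{equ:nltv_comb}, which concatenate the plant $\underline{\G}$ with the internal state of the IQC filter $\bPsi$. I would take the storage function $V(x)=x^\top P x$ on the augmented state $x=[x_G^\top,\psi^\top]^\top\in\R^{2n_s}$, and pre-/post-multiply the LMI \eqref{equ:nltv_sdp} by the block vector $\zeta=[x^\top,q^\top,v^\top,e^\top]^\top$ partitioned conformably with the LMI.

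After substituting $\dot{x}=\ubA x+\ubB_q q+\ubB_v v+\ubB_e e$ from \eqref{equ:nltv_comb}, the cross terms coupling $x$ with $q$, $v$, and $e$ through $P\ubB_q$, $P\ubB_v$, $P\ubB_e$, together with the block $\ubA^\top P+P\ubA$, assemble into $\dot{V}(x)=2x^\top P\dot{x}$. Separately, the $M_g$-weighted contributions $x^\top\ubC^\top M_g\ubC x$, $2x^\top\ubC^\top M_g D_\psi^v v$, and $v^\top D_\psi^{v\top}M_g D_\psi^v v$ --- scattered across the three relevant blocks of the LMI --- combine into $z^\top M_g z$ because $z=\ubC x+D_\psi^v v$. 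The $M(\lambda;\xi)$ summand reproduces the \cref{lem:quad_const} quadratic in $[x_G^\top,q^\top]^\top$, while the $\frac{1}{\gamma}I$ block produces $\tfrac{1}{\gamma}y^\top y$ via $y=x_G$. The resulting pointwise dissipation inequality is
\[
\dot{V}(x) + z^\top M_g z + \begin{bmatrix} x_G\\ q\end{bmatrix}^\top M(\lambda;\xi)\begin{bmatrix} x_G\\ q\end{bmatrix} + \tfrac{1}{\gamma}y^\top y \;<\; \gamma\, e^\top e.
\]

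I would then integrate this inequality from $t=0$ to $t=T$ under zero initial condition so that $V(x(0))=0$ and $V(x(T))\geq 0$ by $P\succeq 0$. \cref{lem:quad_const}, applied with the second argument set to the equilibrium, guarantees that the $M(\lambda;\xi)$ integrand is pointwise nonnegative; the hard IQC hypothesis on $g_t$ yields $\int_0^T z^\top M_g z\,dt\geq 0$ for every finite $T$. Dropping these two nonnegative integrals gives $\int_0^T|y|^2dt<\gamma^2\int_0^T|e|^2dt$, which is exactly \eqref{equ:lti_stable}, and well-posedness from assumption (i) ensures that the trajectory along which we integrate exists.

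The main obstacle is structural rather than analytical: $M(\lambda;\xi)$ from \cref{lem:quad_const} is constructed for the controller input $x_G\in\R^{n_s}$, yet inside $O(P,\lambda,\xi)$ it must act on the larger augmented vector with the filter state $\psi$ adjoined. This is resolved by an implicit zero-padding of the $\psi$-rows and columns so that $\psi$ contributes nothing to the constraint quadratic; one has to check carefully that this padding is compatible with every other block of the LMI. A secondary subtlety is that if $g_t$ satisfies only a soft IQC, then pointwise nonnegativity of $\int_0^T z^\top M_g z\,dt$ fails for finite $T$, and one must instead invoke a homotopy argument over $\tau\bDelta$, $\tau\in[0,1]$, leveraging assumption (i) along the entire scaling in the spirit of \cite{megretski1997system}.
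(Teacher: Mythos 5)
Your proposal follows essentially the same dissipation argument as the paper: multiply the LMI \eqref{equ:nltv_sdp} by $[x^\top, q^\top, v^\top, e^\top]^\top$, identify $\dot V$, $z^\top M_g z$, and the \cref{lem:quad_const} quadratic, then integrate and drop the nonnegative terms. Your added care about invoking the hard-IQC property in integrated form (rather than pointwise) and about zero-padding $M(\lambda;\xi)$ over the filter state $\psi$ are correct refinements of details the paper leaves implicit, not a different route.
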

\begin{proof}
The proof is in the same vein as that of \cref{thm:lti_stable}. The main technical difference is the consideration of the filtered state $\psi$ and the output $z$ to impose IQC constraints on the nonlinearities $g_t(y)$ in the dynamical system \cite{megretski1997system}. The dissipation inequality follows by multiplying both sides of the matrix in \eqref{equ:nltv_sdp} by $\begin{bmatrix}
x^\top&q^\top&v^\top&e^\top
\end{bmatrix}^\top$ and its transpose:
\begin{equation*}
\dot{V}(x)+z^\top M_g z+\begin{bmatrix}
x_G\\
q
\end{bmatrix}^\top M_\pi \begin{bmatrix}
x_G\\
q
\end{bmatrix}<\gamma e^\top e-\frac{1}{\gamma} y^\top y,
\end{equation*}
where $x$ and $z$ are defined in \eqref{equ:nltv_comb}, and $V(x)=x^\top P x$ is the storage function with $\dot{V}(\cdot)$ as its time derivative. The second term on the left side is non-negative because $g_t\in\text{IQC}(\bPsi,M_g)$, and the third term is non-negative due to the smoothness quadratic constraint in \cref{lem:quad_const}. Thus, if there exists a feasible solution $P\succeq 0$ to $\underline{\text{SDP}}(P,\lambda,\gamma,\xi)$, integrating the inequality from $t=0$ to $t=T$ yields that:
\begin{equation}
\int_0^T|y(t)|^2dt\leq\gamma^2\int_0^T|e(t)|^2dt.
\end{equation}
Hence, the nonlinear system interconnected with the RL policy $\pi$ is certifiably stable in the sense of a finite $\Log_2$ gain.
\end{proof}

\subsection{Analysis of conservatism of the stability certificate}
\label{sec:necessity}

We focus on the case where an LTI system $\G$ is interconnected with an RL policy $\pi\in\mathcal{P}(\xi)$ (i.e., a function with bounded partial gradients). This corresponds to the system \eqref{equ:lti} studied in \cref{sec:compute}. To certify the stability of \eqref{equ:lti}, as will be shown in the next proposition, it suffices to examine the stability of the following system:
\begin{equation}
    \begin{cases}
    \dot{x}_G&\!\!\!\!=Ax_G+Bu\\
    q&\!\!\!\!=\tilde{\pi}(x_G)\\
    w&\!\!\!\!=Wq\\
    u&\!\!\!\!=e+w
    \end{cases}
    \label{equ:lti_rl_reform}
\end{equation}
where $\tilde{\pi}\in\widetilde{\mathcal{P}}$ is a function in the uncertainty set:
\begin{equation}
    \widetilde{\mathcal{P}}(\xi)=\left\lbrace \tilde{\pi}\;\middle| \;\underline{\xi}_{ij}x_j\leq{\tilde{\pi}_{ij}(x)}\leq\overline{\xi}_{ij}x_j,\forall x\in\R^{n_s}, i\in [n_a],j\in [n_s] \right\rbrace.
\end{equation}

\begin{proposition}
If the system \eqref{equ:lti_rl_reform} is stable for all $\tilde{\pi}\in\widetilde{\mathcal{P}}(\xi)$, then the system \eqref{equ:lti} is stable for all $\pi\in{\mathcal{P}}(\xi)$.
\label{prop:suff_cond_relation}
\end{proposition}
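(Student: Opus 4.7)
The plan is to show that for every $\pi\in\mathcal{P}(\xi)$ one can exhibit a surrogate nonlinearity $\tilde{\pi}\in\widetilde{\mathcal{P}}(\xi)$ such that the closed-loop trajectories of the reformulated system \eqref{equ:lti_rl_reform} agree with those of the original system \eqref{equ:lti}. Stability of the reformulated system for every admissible $\tilde{\pi}$ then specializes, as a particular case, to stability of \eqref{equ:lti} driven by $\pi$.

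First I would reduce to the case $\pi(0)=0$ (the offset $\pi(0)$ is a constant input that can be absorbed into the exploration signal $e$, and does not affect $\Log_2$-gain stability). Then, mimicking the decomposition used inside the proof of \cref{lem:quad_const}, I would apply the fundamental theorem of calculus along the ray from the origin to $x$,
\begin{equation*}
\pi_i(x)=\int_0^1 \nabla\pi_i(tx)^\top x\,dt=\sum_{j=1}^{n_s}\delta_{ij}(x)\,x_j,\qquad
\delta_{ij}(x)=\int_0^1\partial_j\pi_i(tx)\,dt.
\end{equation*}
Because $\underline{\xi}_{ij}\leq\partial_j\pi_i(\cdot)\leq\overline{\xi}_{ij}$ uniformly, the averaged slopes satisfy $\underline{\xi}_{ij}\leq\delta_{ij}(x)\leq\overline{\xi}_{ij}$ for every $x\in\R^{n_s}$.

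Next I would define the surrogate nonlinearity componentwise by $\tilde{\pi}_{ij}(x):=\delta_{ij}(x)\,x_j$. By the sign-preserving bounds just derived, $\tilde{\pi}_{ij}(x)$ lies between $\underline{\xi}_{ij}x_j$ and $\overline{\xi}_{ij}x_j$, so $\tilde{\pi}\in\widetilde{\mathcal{P}}(\xi)$. Moreover, with $W=I_{n_a}\otimes 1_{1\times n_s}$ and $q=\tilde{\pi}(x_G)$ (vectorized), the reconstructed action in \eqref{equ:lti_rl_reform} satisfies $w_i=(Wq)_i=\sum_{j}\tilde{\pi}_{ij}(x_G)=\sum_j\delta_{ij}(x_G)x_{G,j}=\pi_i(x_G)$. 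Hence the closed-loop ODE $\dot{x}_G=Ax_G+B(e+w)$ is identical for both interconnections, and the same holds for the output $y=x_G$, so \eqref{equ:lti_rl_reform} with this particular $\tilde{\pi}$ produces exactly the trajectories of \eqref{equ:lti} driven by $\pi$. Applying the hypothesis --- that \eqref{equ:lti_rl_reform} is stable for every $\tilde{\pi}\in\widetilde{\mathcal{P}}(\xi)$ --- to this specific $\tilde{\pi}$ yields the desired finite-$\Log_2$-gain inequality \eqref{equ:lti_stable} for $\pi$.

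The principal subtlety will be the regularity of the constructed $\tilde{\pi}$: the integral averaging produces a $\delta_{ij}$ that is only as smooth as $\pi$ (and potentially only measurable when $\pi$ is merely Lipschitz, via Rademacher), so one must make sure that the class $\widetilde{\mathcal{P}}(\xi)$, as used in the hypothesis, admits such functions --- a mild reading of the set definition, since the bounds there are purely pointwise. A secondary wrinkle is well-posedness/existence of solutions for the reformulated feedback; because the state equation is the same linear ODE and $\tilde{\pi}\circ x_G$ inherits causal dependence on $x_G$, the usual Carathéodory-type argument suffices. These are technical rather than structural, so no new machinery beyond the integral decomposition already implicit in \cref{lem:quad_const} is required.
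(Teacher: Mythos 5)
Your proof is correct and follows the same overall strategy as the paper's: for each $\pi\in\mathcal{P}(\xi)$, exhibit a surrogate $\tilde{\pi}\in\widetilde{\mathcal{P}}(\xi)$ with $\pi=W\tilde{\pi}$, so that the closed loop of \eqref{equ:lti} is realized as one particular instance of \eqref{equ:lti_rl_reform} and the hypothesis applies. The difference lies in the decomposition used to build $\tilde{\pi}$. The paper telescopes over the hybrid vectors $y^0_j=\begin{bmatrix}0&\cdots&0&y_{j+1}&\cdots&y_{n_s}\end{bmatrix}$, setting $\tilde{\pi}_{ij}(y)=\pi_i(y^0_{j-1})-\pi_i(y^0_j)$ and invoking the mean value theorem along the $j$-th coordinate to obtain the sector bound; you instead integrate the gradient along the ray from the origin, $\delta_{ij}(x)=\int_0^1\partial_j\pi_i(tx)\,dt$, and set $\tilde{\pi}_{ij}(x)=\delta_{ij}(x)x_j$. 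Your version is arguably cleaner on two counts: the slope bound is immediate because $\delta_{ij}(x)$ is an average of quantities in $[\underline{\xi}_{ij},\overline{\xi}_{ij}]$, and the case $x_j=0$ needs no separate treatment (the paper handles it by fiat). The telescoping version has the mild advantage of using only one-dimensional increments, so it extends more directly to merely Lipschitz, almost-everywhere differentiable $\pi$ via difference-quotient bounds, whereas the ray integral requires integrating $\partial_j\pi_i$ along every ray --- the regularity wrinkle you correctly flag, which is moot here since \cref{lem:quad_const} already assumes differentiability. Both arguments require reading the defining inequalities of $\widetilde{\mathcal{P}}(\xi)$ as a ratio/sector condition (the displayed inequalities reverse orientation when $x_j<0$), and your reduction to $\pi(0)=0$ matches the normalization the paper adopts at the end of \cref{sec:constraint}.
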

\begin{proof}
It suffices to show that for any $\pi\in{\mathcal{P}}(\xi)$, there exists a policy $\tilde{\pi}\in\widetilde{\mathcal{P}}(\xi)$ such that $\pi=W\tilde{\pi}$. Let $y^0_j=\begin{bmatrix}
0&\cdots&0&y_{j+1}&\cdots&y_{n_s}
\end{bmatrix}\in\R^{n_s}$ for every $j\in\{0,1,...,n_s\}$, and $y_0^0=y$, $y^0_{n_s}=0$. Then, one can write:
\begin{align*}
    \pi_i(y)=\sum_{j=1}^{n_s}\pi_i(y^0_{j-1})-\pi_i(y^0_j)=\sum_{j=1}^{n_s}\tilde{\pi}_{ij}(y),
\end{align*}
where $\tilde{\pi}_{ij}(y)$ satisfies
\begin{align*}
    \frac{\tilde{\pi}_{ij}(y)}{y_j}=\frac{\pi_i(y^0_{j-1})-\pi_i(y^0_j)}{|y^0_{j-1}-y^0_j|}\in[\underline{\xi}_{ij},\overline{\xi}_{ij}]
\end{align*}
if $y_j\neq 0$ and $\tilde{\pi}_{ij}(y)=0$ if $y_j=0$. The bound is due to the mean-value theorem and the bounds on the partial derivatives of $\pi_i$. Since the above argument is valid for all $i\in [n_a]$, it means that $\tilde{\pi}\in\widetilde{\mathcal{P}}(\xi)$, and $\pi=W\tilde{\pi}$.
\end{proof}

\cref{prop:suff_cond_relation} implies that one potential source of conservatism comes from the decomposition of a gradient-bounded function into a sum of sector-bounded components. Henceforth, we focus the subsequent analysis by examining \eqref{equ:lti_rl_reform}.  By considering the state-space representation of $G=
 \left[\begin{array}{c|cc}
  A& BW&B\\\hline
  I&0&0
   \end{array} \right]=
 \left[\begin{array}{cc}
  G_{11}& G_{12}
   \end{array} \right]$, one can write system \eqref{equ:lti_rl_reform} as:
   \begin{equation}
    \begin{cases}
    x_{G}&\!\!\!\!=\left[\begin{array}{cc}
  G_{11}& G_{12}
   \end{array} \right]\begin{bmatrix}q\\
   e\end{bmatrix}\\
    q&\!\!\!\!=\tilde{\pi}(x_G)\\
    \end{cases}.
    \label{equ:lti_rl_reform_freq}
    \end{equation}
    It is known that the system is input-output stable if and only if $I-G_{11}\tpi$ is nonsingular~\cite{dullerud2013course}. To understand this, note that if $I-G_{11}\tpi$ is nonsingular, then the transfer from $e$ to $x_{G}$ is given by: $$e\mapsto x_G=H(e)=(I-G_{11}\tpi)^{-1}G_{12}e,$$
    and $|x_{G}|=|H(e)|\leq\|(I-G_{11}\tpi)^{-1}G_{12}\||e|$. From the previous section (in particular, \cref{lem:quad_const}), we know that if the function $\pi$ is gradient-bounded, then the set of input/output signals belongs to:
    \begin{equation*}
        \mathcal{S}(\xi)=\left\lbrace(x,q)\mid\phi_{ij}(x,q)=(\overline{c}_{ij}^2-c_{ij}^2) x_j^2+2c_{ij}q_{ij}x_j-q_{ij}^2\geq 0,\quad\forall i\in[n_a],j\in[n_s]\right\rbrace,
    \end{equation*}
where we use $c_{ij}= \frac{1}{2}\left(\underline{\xi}_{ij}+\overline{\xi}_{ij}\right)$, $\overline{c}_{ij}=\overline{\xi}_{ij}-c_{ij}$ for simplicity. We now show that the pair $(x,q)$ belongs to $\mathcal{S}(\xi)$ if and only if there exists a sector-bounded function $\tpi\in\tilde{P}(\xi)$ such that it satisfies $q=\tpi(x)$.

\begin{lemma}
Suppose that $x\in\R^{n_s}$ and $q\in\R^{n_an_s}$, and $\overline{c}_{ij}\geq 0$ for every $i\in[n_a]$ and $j\in[n_s]$. Then, the pair $(x,q)$ belongs to $\mathcal{S}(\xi)$ if and only if there exists an operator $\tpi:\R^n\rightarrow\R^{n_an_s}$, such that $q=\tpi(x)$, and $\tpi$ satisfies the following conditions: \textbf{(i)} $\tpi_{ij}(x)=0$ if $x_j=0$, and \textbf{(ii)} $\tpi$ is sector bounded, i.e.,  $(c_{ij}-\overline{c}_{ij})x_j\leq{\tpi_{ij}(x)}\leq(\overline{c}_{ij}+c_{ij})x_j$ holds for all $i\in[n_a]$ and $j\in[n_s]$.
\end{lemma}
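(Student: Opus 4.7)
The proof will hinge on the algebraic identity
\begin{equation*}
\phi_{ij}(x,q) = \bigl((c_{ij}+\overline{c}_{ij})x_j - q_{ij}\bigr)\bigl(q_{ij} - (c_{ij}-\overline{c}_{ij})x_j\bigr),
\end{equation*}
which I would verify by simply expanding the right-hand side. This factorization exposes each defining constraint of $\mathcal{S}(\xi)$ as a coordinatewise sector condition tying $q_{ij}$ to $x_j$, so once the sector bound in (ii) is rewritten in product form, the equivalence becomes almost tautological. The one mildly delicate point the identity takes care of is sign handling: the raw inequality ``$(c_{ij}-\overline{c}_{ij})x_j \leq q_{ij} \leq (c_{ij}+\overline{c}_{ij})x_j$'' flips direction when $x_j<0$, whereas the product form treats both signs uniformly and matches $\phi_{ij}$ exactly.

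For the ``if'' direction, I would suppose $q = \tpi(x)$ with $\tpi$ meeting (i) and (ii), and verify $\phi_{ij}(x,q) \geq 0$ coordinatewise. If $x_j = 0$, then (i) forces $q_{ij} = 0$ and hence $\phi_{ij}(x,q) = 0$; otherwise the product form of the sector bound is exactly $\phi_{ij}(x,q) \geq 0$.

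For the ``only if'' direction, given $(x,q) \in \mathcal{S}(\xi)$, I need to exhibit a single global operator $\tpi:\R^{n_s}\to\R^{n_an_s}$ that returns $q$ at the fixed input $x$ while satisfying (i)--(ii) on all of $\R^{n_s}$. The simple interpolant
\begin{equation*}
\tpi_{ij}(y) := \begin{cases} q_{ij}, & y = x,\\ c_{ij} y_j, & y \neq x, \end{cases}
\end{equation*}
works. On the branch $y \neq x$, condition (i) is trivial because $c_{ij}\cdot 0 = 0$, and the sector-bound product evaluates to $\overline{c}_{ij}^2 y_j^2 \geq 0$ using $\overline{c}_{ij}\geq 0$. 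At $y = x$, specializing $\phi_{ij}(x,q)\geq 0$ to $x_j = 0$ reduces to $-q_{ij}^2\geq 0$, which forces $q_{ij} = 0$ and yields (i); condition (ii) at $y = x$ is again the product form of $\phi_{ij}(x,q) \geq 0$. The main technical obstacle is really just recognizing the factorization in the first paragraph; everything after that is routine bookkeeping.
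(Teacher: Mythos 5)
Your proof is correct and follows essentially the same route as the paper's: your factorization of $\phi_{ij}$ is just the product form of the paper's inequality $\left|q_{ij}-c_{ij}x_j\right|\leq\overline{c}_{ij}\left|x_j\right|$, and both directions reduce to that coordinatewise equivalence. The only addition is your explicit global interpolant in the necessity direction, which makes rigorous a step the paper leaves implicit (it only verifies the relation at the given point $x$) and correctly handles the sign convention in the sector bound.
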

\begin{proof} To show the sufficiency direction, conditions (i) and (ii) yield that
\begin{align*}
    \left(\overline{c}_{ij}x_j\right)^2
    &\geq \left(\left(\frac{\pi_{ij}({x})}{x_j}-c_{ij}\right){x}_j\right)^2=\left(q_{ij}-c_{ij}x_j\right)^2.
\end{align*}
By rearranging the above inequality, it can be concluded that $(x,q)\in\mathcal{S}(\xi)$. 

For the necessary direction, note that the condition $\phi_{ij}(x,q)\geq 0$ is equivalent to $\left|q_{ij}-c_{ij}x_j\right|\leq \left|\overline{c}_{ij}x_j\right|$. Thus, we have $\pi_{ij}(x)=0$ if $x_j=0$. Since $\overline{c}_{ij}\geq 0$, one can obtain $\left|\frac{q_{ij}}{x_j}-c_{ij}\right|\leq \overline{c}_{ij}$, which is equivalent to the sector bounds.
\end{proof}

By slightly overloading the notations, we can extend the result of the previous lemma from static mapping to the case that $x\in L^{n_s}$ and $q\in L^{n_an_s}$ with the operator $\tpi:L^{n_s}\rightarrow L^{n_an_s}$. We can then extend the definition of $\mathcal{S}(\xi)$ to this space accordingly. 

\begin{lemma}
\label{lem:equiv_delta}
Suppose that $x\in L^{n_s}$ and $q\in L^{n_an_s}$, and that $\overline{c}_{ij}\geq 0$ for all $i\in[n_a]$ and $j\in[n_s]$. Then, the pair $(x,q)$ belongs to $\mathcal{S}(\xi)$ where
\begin{equation}
    \mathcal{S}(\xi)=\left\lbrace(x,q)\mid\phi_{ij}(x,q)\geq 0,\quad\forall i\in[n_a],j\in[n_s]\right\rbrace,
\end{equation}
and \begin{equation}
    \phi_{ij}(x,q)=(\overline{c}_{ij}^2-c_{ij}^2) \|x_j\|^2+2c_{ij}\langle q_{ij},x_j\rangle-\|q_{ij}\|^2\geq 0,
    \label{equ:phi_def}
\end{equation}
if and only if there exists an operator $\tpi:L^{n_s}\rightarrow L^{n_an_s}$ such that $q=\tpi(x)$ and $\tpi$ satisfies the following conditions: \textbf{(i)} $\tpi_{ij}(x)=0$ if $x_j=0$, and \textbf{(ii)} $\tpi$ is sector bounded, i.e.,  $(c_{ij}-\overline{c}_{ij}){\|x_j\|}\leq{\left \|{\tpi_{ij}({x})}\right\|}\leq(\overline{c}_{ij}+c_{ij}){\|x_j\|}$ for all $i\in[m]$ and $j\in[n]$.
\end{lemma}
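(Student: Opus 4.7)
The plan is to mirror the static version of the lemma, replacing scalar absolute values and products by $L^2$ norms and inner products. The central algebraic step is completion of squares:
\[
\phi_{ij}(x,q) = \overline{c}_{ij}^2\|x_j\|^2 - \|q_{ij} - c_{ij}x_j\|^2,
\]
so that $\phi_{ij}(x,q)\geq 0$ is equivalent to the ball inequality $\|q_{ij}-c_{ij}x_j\|\leq \overline{c}_{ij}\|x_j\|$. All subsequent reasoning will run on this single equivalence.

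For the sufficiency direction ($\Leftarrow$), I would read the sector condition in (ii) as its geometric $L^2$ analog, namely $\|\tpi_{ij}(x)-c_{ij}x_j\|\leq \overline{c}_{ij}\|x_j\|$ (the scalar sector $[c_{ij}-\overline{c}_{ij},c_{ij}+\overline{c}_{ij}]$ being, for signals, a closed ball of radius $\overline{c}_{ij}\|x_j\|$ centered at $c_{ij}x_j$; the stated two-sided norm bounds are precisely the triangle-inequality consequences of this ball constraint). Substituting $q_{ij}=\tpi_{ij}(x)$ and expanding yields $\phi_{ij}(x,q)\geq 0$ for every pair $(i,j)$, which places $(x,q)$ in $\mathcal{S}(\xi)$.

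For the necessity direction ($\Rightarrow$), I start from the ball inequality guaranteed by $(x,q)\in\mathcal{S}(\xi)$ and construct $\tpi$ explicitly. If $\|x_j\|=0$, the inequality forces $\|q_{ij}\|=0$, which is consistent with the required property (i). Away from $x_j\equiv 0$, I define $\tpi$ componentwise by setting $\tpi_{ij}$ to agree with $q_{ij}$ on this particular input $x$ and to equal the scaled identity $y\mapsto c_{ij}y_j$ on every other input $y\in L^{n_s}$. Because $c_{ij}\in[c_{ij}-\overline{c}_{ij},\,c_{ij}+\overline{c}_{ij}]$ whenever $\overline{c}_{ij}\geq 0$, the scaled-identity branch automatically satisfies (i) and (ii); at the distinguished point $x$ itself, the sector bound in (ii) then follows from the triangle inequality applied to $\|q_{ij}-c_{ij}x_j\|\leq \overline{c}_{ij}\|x_j\|$.

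The main obstacle I anticipate is the correct interpretation of (ii): on scalars the one-dimensional sector and the ball coincide, but on $L^2$ signals the two-sided norm bounds stated in (ii) are strictly weaker than the ball condition in general, so the equivalence only goes through if (ii) is read in its stronger operator-theoretic (ball) form, inherited from the Cauchy--Schwarz/triangle structure rather than from the norm bounds alone. A secondary, more technical subtlety is producing a $\tpi$ that is genuinely an operator on all of $L^{n_s}$ rather than merely a value assigned at $x$; since the lemma only claims existence, the piecewise definition above suffices, and no continuity or causality beyond what (i)--(ii) demand is needed.
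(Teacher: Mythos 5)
Your proof is correct and rests on the same algebraic core as the paper's: the completion of squares $\phi_{ij}(x,q)=\overline{c}_{ij}^2\|x_j\|^2-\|q_{ij}-c_{ij}x_j\|^2$, so that membership in $\mathcal{S}(\xi)$ is exactly the ball condition $\|q_{ij}-c_{ij}x_j\|\le\overline{c}_{ij}\|x_j\|$. The differences lie in how each direction is closed. For sufficiency, the paper argues from the two-sided norm bounds in (ii) via a Cauchy--Schwarz chain; that chain needs $-2c_{ij}\|x_j\|\|\tpi_{ij}(x)\|\ge-2c_{ij}\langle\tpi_{ij}(x),x_j\rangle$ and hence the sign assumption $c_{ij}\le 0$, which the paper invokes ``without loss of generality'' without justification. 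Your observation that the norm bounds are strictly weaker than the ball condition is precisely where this matters: for instance $q_{ij}=-x_j$ with $c_{ij}=1$ and $\overline{c}_{ij}=1/2$ satisfies the stated norm bounds yet gives $\phi_{ij}(x,q)=-\tfrac{15}{4}\|x_j\|^2<0$, so the ``if and only if'' as literally stated fails for positive $c_{ij}$. Reading (ii) in its ball form, as you do, makes sufficiency a one-line substitution, is valid for every sign pattern of $c_{ij}$, and is the form actually used in the subsequent separation argument; so your reading is the right repair rather than a misinterpretation. For necessity, the paper builds the rank-one operator $\tpi(y)=q\,\langle y,x\rangle/\|x\|^2$, whereas you use a piecewise definition agreeing with $q$ at $x$ and with $y\mapsto c_{ij}y_j$ elsewhere; both suffice for the existence claim, and both (the paper's included) really only verify conditions (i)--(ii) at the distinguished input $x$, a looseness you at least flag explicitly. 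Net: same skeleton, but your version is more careful on the one point where the paper's argument is fragile.
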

\begin{proof}For the sufficiency condition, since $\tpi_{ij}$ is sector bounded, and $\tpi_{ij}(x)=0$ if $x_j=0$, without loss of generality, assume that $c_{ij}\leq 0$. One can write
\begin{align*}
    \left\|\overline{c}_{ij}x_j\right\|^2
    &\geq \left\|\left|\frac{\left\|{\tpi_{ij}({x})}\right\|}{\|x_j\|}-c_{ij}\right|x_j \right\|^2\\
    &\geq \left\|\left(\frac{\left\|{\tpi_{ij}({x})}\right\|}{\|x_j\|}-c_{ij}\right)x_j \right\|^2\\
    &=\|c_{ij}x_j\|^2+\|\tpi_{ij}(x)\|^2-2\left\langle c_{ij}x_j,\frac{\left\|{\tpi_{ij}({x})}\right\|}{\|x_j\|}x_j\right\rangle\\
    &=\|c_{ij}x_j\|^2+\|\tpi_{ij}(x)\|^2-2c_{ij}
    \|x_j\|\left\|{\tpi_{ij}({x})}\right\|\\
    &\geq \|c_{ij}x_j\|^2+\|\tpi_{ij}(x)\|^2-2c_{ij}
    \left\langle{\tpi_{ij}({x}),x_j}\right\rangle\\
    &=\left\|q_{ij}-c_{ij}{x}_j\right\|^2.
\end{align*}
By rearranging the above inequality, it can be concluded that $(x,q)\in\mathcal{S}(\xi)$.

For the necessary direction, we can construct $\tpi(y) = q\frac{\langle y,x\rangle}{|x|^2}$ for all $y\in L^{n_s}$. This leads to $\tpi(x)=q$, and the condition $\phi_{ij}(x,q)\geq 0$ is equivalent to $\left\|q_{ij}-c_{ij}x_j\right\|\leq \overline{c}_{ij}\left\|x_j\right\|$. Thus, we have $\tpi_{ij}(x)=0$ if $x_j=0$. Without loss of generality, assume that $c_{ij}\leq 0$. Therefore, $\|q_{ij}\|\leq \overline{c}_{ij}\|x_j\|+c_{ij}\|x_j\|$ and $\|q_{ij}\|\geq -\overline{c}_{ij}\|x_j\|+c_{ij}\|x_j\|$, which are equivalent to the sector bound condition.
\end{proof}

The previous result indicates that the input and output pair of $\tpi$ can be described by $\mathcal{S}(\xi)$. We show next that this set should be separated from the signal space of the dynamical system in order to ensure robust stability. 
\begin{lemma}
If $(G,\tpi)$ is robustly stable, then there cannot exist a nonzero $q\in L_2$ such that $x=Gq$ and $(x,q)\in\mathcal{S}(\xi)$.
\label{lem:stable_set}
\end{lemma}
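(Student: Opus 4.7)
The plan is to prove the contrapositive: assuming such a nonzero $q\in L_2$ exists, I would construct an explicit admissible $\tilde{\pi}\in\widetilde{\mathcal{P}}(\xi)$ that produces a nontrivial closed-loop trajectory with zero exogenous input, thereby violating robust stability. The interpretation of $x=Gq$ in the statement is the autonomous response with $e=0$, i.e., $x=G_{11}q$ in the partitioning $G=[G_{11}\;\;G_{12}]$ from~\eqref{equ:lti_rl_reform_freq}.

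First, I would observe that the pair $(x,q)\in\mathcal{S}(\xi)$ with $q\neq 0$ forces $x\neq 0$: indeed, $\phi_{ij}(0,q)=-\|q_{ij}\|^2$ from~\eqref{equ:phi_def}, so $\phi_{ij}(0,q)\geq 0$ for all $i,j$ would imply $q=0$, contradicting the assumption. Hence $\|x\|>0$ and the normalization in the construction below is well defined.

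Second, I would invoke the necessity direction of \cref{lem:equiv_delta} to produce an operator $\tilde{\pi}:L^{n_s}\to L^{n_an_s}$ lying in the (operator-level) sector class used to define $\widetilde{\mathcal{P}}(\xi)$, with $q=\tilde{\pi}(x)$. Concretely, the construction $\tilde{\pi}(y)=q\,\langle y,x\rangle/\|x\|^2$ from that lemma's proof can be used directly; it is bounded (since $\|\tilde{\pi}(y)\|\leq\|q\|\,\|y\|/\|x\|$), satisfies $\tilde{\pi}(x)=q$, and inherits the componentwise sector bounds from $(x,q)\in\mathcal{S}(\xi)$ so that $\tilde{\pi}\in\widetilde{\mathcal{P}}(\xi)$.

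Third, with this $\tilde{\pi}$ in the loop and $e=0$, the pair $(x,q)$ satisfies $x=G_{11}\tilde{\pi}(x)$ with $x\neq 0$, so $I-G_{11}\tilde{\pi}$ is singular at the signal $x$, and a finite $L_2$ gain from $e$ to $x_G$ cannot hold (a zero input produces a nonzero $L_2$ output). This contradicts the robust stability of $(G,\tilde{\pi})$, completing the contrapositive. The main subtlety—and the only point requiring care—is ensuring that the constructed operator qualifies as a legitimate element of the uncertainty class against which robust stability is demanded; this is precisely the content of \cref{lem:equiv_delta}, which reconciles the pointwise/signal-level sector descriptions used throughout \cref{sec:necessity}, so once that lemma is invoked the rest of the argument is essentially a direct singularity/well-posedness observation.
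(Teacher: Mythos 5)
Your proposal is correct and follows essentially the same route as the paper: argue by contraposition, use the necessity direction of \cref{lem:equiv_delta} to realize the pair $(x,q)$ as $q=\tpi(x)$ for a sector-bounded operator $\tpi$, and conclude that $I-G\tpi$ (equivalently $I-\tpi G$) is singular, contradicting robust stability. Your additional observation that $(x,q)\in\mathcal{S}(\xi)$ with $q\neq 0$ forces $x\neq 0$ is a small but worthwhile detail that the paper's proof leaves implicit when constructing $\tpi(y)=q\langle y,x\rangle/\|x\|^2$.
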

\begin{proof}
We prove this lemma by contraposition. If there exists a nonzero $q\in L_2$ such that $(x,q)\in\mathcal{S}(\xi)$, then it follows from \cref{lem:equiv_delta} that there exists a linear operator $\tpi$ such that $q=\tpi(x)=\tpi(Gq)$. This implies that the operator $(I-\tpi G)$ is singular, and therefore, $(I-G\tpi)$ is singular, implying that the interconnected system is not robustly stable. 
\end{proof}

The path of examining the necessity of the SDP condition \eqref{equ:sdp_lti} has become clear. Consider the set generated by the LTI system:
\begin{equation}
    \Psi=\left\lbrace(\phi_{ij}(x,q):q\in L^{n_an_s},\|q\|=1,x=Gq\right\rbrace,
    \label{equ:psi_set}
\end{equation}
and the positive orthant
\begin{equation}
    \Pi=\left\lbrace (r_{ij})\in\R^{n_an_s}:r_{ij}\geq 0,\quad \forall i\in[n_a], j\in[n_s]\right\rbrace.
\end{equation}
 \cref{lem:stable_set} implies that the two sets $\Psi$ and $\Pi$ are separated if $(G,\tpi)$ is robustly stable. The goal is to show that there exists a separating hyperplane, whose parameters are related to the solution of \eqref{equ:sdp_lti}. For simplicity, define the matrices ${\Omega}_{ij,x}=\text{diag}\left(\left\lbrace\lbrace \overline{c}_{ij}^2-c_{ij}^2\right\rbrace\right)$, ${\Omega}_{ij,q}$ and ${\Omega}_{ij,xq}$ with their $(k,l)$-th elements $[{\Omega}_{ij,q}]_{kl}=\begin{cases}
1&\text{ if }k=in+j\\
0&\text{ otherwise }
\end{cases}$, and $[{\Omega}_{ij,xq}]_{kl}=\begin{cases}
c_{ij}&\text{ if }k=j, l=in+j\\
0&\text{ otherwise }
\end{cases}$. To write $\phi_{ij}(x=Gq,q)$ as an inner product, define $$T_{ij}=G^*{\Omega}_{ij,x}G-{\Omega}_{ij,q}+G^*{\Omega}_{ij,xq}^*+{\Omega}_{ij,xq}G.$$
It results from the definition \eqref{equ:phi_def} that \begin{equation}
    \phi_{ij}(x=Gq,q)=\|Gq\|^2_{{\Omega}_{ij,x}}+2\text{Re}\left\langle {\Omega}_{ij,xq}Gq,q\right\rangle-\|q\|^2_{{\Omega}_{ij,q}}=\left\langle q,T_{ij}q\right\rangle.
\end{equation}

\begin{lemma}
For a given linear time-invariant operator $G$, the closure $\overline{\Psi}$ of $\Psi$ defined in \eqref{equ:psi_set} is convex.
\label{lem:convex_psi}
\end{lemma}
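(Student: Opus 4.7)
The plan is to recognize $\Psi$ as the joint numerical range of a finite family of bounded self-adjoint operators and to realize convex combinations of its elements through a time-shift construction that exploits the causality and time-invariance of $G$.

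First, each $T_{ij}=G^*\Omega_{ij,x}G-\Omega_{ij,q}+G^*\Omega_{ij,xq}^*+\Omega_{ij,xq}G$ is a bounded self-adjoint operator on $L_2^{n_an_s}[0,\infty)$ (boundedness follows from $A$ Hurwitz, which makes $G$ bounded on $L_2$), so $\Psi=\{(\langle q,T_{ij}q\rangle)_{i,j}:\|q\|=1\}$ is the joint numerical range of $\{T_{ij}\}$. It suffices to show that whenever $p^{(1)},p^{(2)}\in\Psi$ are attained by unit vectors $q^{(1)},q^{(2)}$, every convex combination $\lambda p^{(1)}+(1-\lambda)p^{(2)}$ lies in $\overline{\Psi}$; the extension to arbitrary limit points follows by a standard diagonal argument. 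Given $\varepsilon>0$, approximate each $q^{(k)}$ in $L_2$-norm by a compactly supported unit vector $\tilde q^{(k)}$ with $\mathrm{supp}\,\tilde q^{(k)}\subseteq[0,T]$, and define $q_{\lambda,\tau}=\sqrt{\lambda}\,\tilde q^{(1)}+\sqrt{1-\lambda}\,S_\tau\tilde q^{(2)}$, where $S_\tau$ denotes the forward time-shift isometry on $L_2[0,\infty)$. For $\tau>T$ the two summands have disjoint supports, so $\|q_{\lambda,\tau}\|=1$ and the corresponding point lies in $\Psi$.

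Expanding $\langle q_{\lambda,\tau},T_{ij}q_{\lambda,\tau}\rangle$ yields three types of terms. The diagonal terms are shift-invariant: using $GS_\tau=S_\tau G$ (causality plus LTI) together with the isometry of $S_\tau$, one checks $\langle S_\tau\tilde q^{(2)},T_{ij}S_\tau\tilde q^{(2)}\rangle=\langle\tilde q^{(2)},T_{ij}\tilde q^{(2)}\rangle$, and by continuity of $q\mapsto\langle q,T_{ij}q\rangle$ these diagonals are within $O(\varepsilon)$ of $p^{(1)}_{ij}$ and $p^{(2)}_{ij}$. The cross term $\langle\tilde q^{(1)},T_{ij}S_\tau\tilde q^{(2)}\rangle$ splits into four pieces matching the four summands of $T_{ij}$: the $\Omega_{ij,q}$ and $\Omega_{ij,xq}G$ pieces vanish outright for $\tau>T$, because $\tilde q^{(1)}$ is supported on $[0,T]$ while $S_\tau\tilde q^{(2)}$ and $S_\tau G\tilde q^{(2)}$ are supported on $[\tau,\infty)$; the remaining $G^*\Omega_{ij,xq}^*$ and $G^*\Omega_{ij,x}G$ pieces are each bounded by $\|G\tilde q^{(1)}\|_{L_2([\tau,\infty))}$ times a uniformly bounded factor, which tends to zero as $\tau\to\infty$ since $G\tilde q^{(1)}\in L_2$.

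Choosing $\varepsilon\to 0$ and $\tau\to\infty$ along a diagonal sequence then yields points of $\Psi$ converging to $\lambda p^{(1)}+(1-\lambda)p^{(2)}$, placing the latter in $\overline{\Psi}$ and establishing convexity. The main technical care lies in the decay estimate for the $G^*\Omega_{ij,x}G$ cross term: its vanishing rests only on the $L_2$-tail of $G\tilde q^{(1)}$ shrinking, but one must track carefully how the causal isometry $S_\tau$ interacts with both $G$ and $G^*$ through the identity $GS_\tau=S_\tau G$. This is the point at which the LTI structure of $G$ (rather than merely its boundedness) is essential; everything else is the infinite-dimensional analogue of the classical Toeplitz–Hausdorff argument.
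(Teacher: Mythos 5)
Your proof is correct and follows essentially the same route as the paper: both realize the convex combination $\alpha\phi(q)+(1-\alpha)\phi(\tilde q)$ via the shifted superposition $\sqrt{\alpha}\,q+\sqrt{1-\alpha}\,D_\tau\tilde q$, use time-invariance of $G$ to preserve the diagonal terms, and let the cross terms vanish as $\tau\to\infty$. Your compact-support truncation and term-by-term bound on the cross terms merely make explicit the decay that the paper asserts via $\mathrm{Re}\langle T_{ij}q,D_\tau\tilde q\rangle\to 0$, so no substantive difference.
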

\begin{proof}
 Because $G$ is time-invariant, by denoting $D_\tau$ as the delay operator at scale $\tau$, we obtain $D_\tau^* T_{ij}D_\tau=T_{ij}$. Let $y=\phi(q)$ and $\tilde{y}=\phi(\tilde{q})$ be the elements of $\Psi$, with $\|q\|=\|\tilde{q}\|=1$. By considering $q_\tau=\sqrt{\alpha}q+\sqrt{1-\alpha}D_\tau\tilde{q}$, one can write
\begin{align*}
\phi_{ij}(q_\tau)&=\alpha\left\langle T_{ij}q,q\right\rangle+(1-\alpha)\left\langle T_{ij}D_\tau\tilde{q},D_\tau\tilde{q}\right\rangle+2\alpha\sqrt{1-\alpha}\text{Re}\left\langle T_{ij}q,D_\tau\tilde{q}\right\rangle\\
&=\alpha\phi_{ij}(q)+(1-\alpha)\phi_{ij}(\tilde{q})+2\alpha\sqrt{1-\alpha}\text{Re}\left\langle T_{ij}q,D_\tau\tilde{q}\right\rangle.
\end{align*}
By letting $\tau\rightarrow\infty$, we obtain $\text{Re}\left\langle T_{ij}q,D_\tau\tilde{q}\right\rangle\rightarrow 0$, where $\text{Re}(x)$ denotes the real part of a complex vector $x$. Thus,
$$\lim_{\tau\rightarrow\infty}\phi_{ij}(q_\tau)=\alpha\phi_{ij}(q)+(1-\alpha)\phi_{ij}(\tilde{q})$$
and $\lim_{\tau\rightarrow\infty}\|q_\tau\|^2=\alpha\|q\|^2+(1-\alpha)\|\tilde{q}\|^2=1$. Therefore, 
$$\lim_{\tau\rightarrow\infty}\phi\left(\frac{q_\tau}{\|q_\tau\|}\right)=\alpha y+(1-\alpha)\tilde{y}\in\overline{\Psi}.$$
\end{proof}

Now, we show that strict separation occurs when the system is robustly stable.
\begin{lemma}
Suppose that $I-G\tpi$ is nonsingular. Then, the sets $\Pi$ and $\Psi$ are strictly separated, namely $D(\Pi,\Psi)=\inf_{{r}\in\Pi,{y}\in\Psi}\left|{r}-{y}\right|>0$.
\label{lem:strict_separate}
\end{lemma}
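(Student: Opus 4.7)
The strategy is to recast strict separation in the finite-dimensional space $\R^{n_an_s}$ as disjointness of $\overline{\Psi}$ and $\Pi$, and then to force that disjointness from the stability hypothesis by upgrading \cref{lem:stable_set} to the closure via a passage-to-the-limit argument.

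First I would verify that $\overline{\Psi}$ is \emph{compact} in $\R^{n_an_s}$. Boundedness follows from
$|\phi_{ij}(Gq,q)|\le(\overline c_{ij}^2-c_{ij}^2)\|G\|^2+2|c_{ij}|\,\|G\|+1$
whenever $\|q\|=1$, by Cauchy--Schwarz applied to the cross term. Combined with \cref{lem:convex_psi}, $\overline{\Psi}$ is a compact convex subset of $\R^{n_an_s}$ while $\Pi$ is a closed convex cone. A density argument shows $D(\Psi,\Pi)=D(\overline{\Psi},\Pi)$, and the classical fact that a compact convex set disjoint from a closed convex set in finite dimensions has strictly positive distance then gives the equivalence
\[
D(\Pi,\Psi)>0 \iff \overline{\Psi}\cap\Pi=\emptyset.
\]
So it suffices to prove disjointness of the closure.

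Second, I would argue disjointness by contradiction. Suppose $y^*\in\overline{\Psi}\cap\Pi$ and pick $q_n\in L^{n_an_s}$ with $\|q_n\|=1$ and $\phi_{ij}(Gq_n,q_n)\to y^*_{ij}\ge 0$ for every $i,j$. The goal is to manufacture a nonzero signal $q\in L^{n_an_s}$ such that $(Gq,q)\in\mathcal{S}(\xi)$, i.e., $\phi_{ij}(Gq,q)\ge 0$ for all $i,j$ simultaneously. Once such a $q$ is in hand, \cref{lem:equiv_delta} produces a sector-bounded operator $\tilde\pi\in\widetilde{\mathcal{P}}(\xi)$ with $q=\tilde\pi(Gq)$, so $(I-\tilde\pi G)q=0$ with $q\neq 0$, whence $I-\tilde\pi G$ and therefore $I-G\tilde\pi$ are singular, contradicting the hypothesis.

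The main obstacle is precisely this extraction step: the quadratic forms $\phi_{ij}$ are at best weakly lower semicontinuous on the unit ball of $L^{n_an_s}$, so a direct weak limit of $q_n$ need not inherit the nonnegativity. The natural remedy mirrors the construction used in the proof of \cref{lem:convex_psi}: by replacing each $q_n$ with a sufficiently time-shifted copy $D_{\tau_n}q_n$, the time-invariance of $G$ preserves the values $\phi_{ij}(Gq_n,q_n)$ while making the cross terms between different $q_n$'s vanish, so that a suitable convex combination $\sum_n \alpha_n D_{\tau_n}q_n$ (with $\sum_n\alpha_n=1$) realizes $y^*$ in the limit and produces an actual signal $q\ne 0$ with $(Gq,q)\in\mathcal{S}(\xi)$. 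As a fallback, one could first establish the separation on finite time horizons, where $\Psi$ is strongly compact and the argument is routine, and then pass to the infinite-horizon limit using the causality and LTI structure of $G$.
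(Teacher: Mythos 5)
Your reduction to the disjointness of $\overline{\Psi}$ and $\Pi$ is sound (and the compactness observation about $\overline{\Psi}$ is a nice addition), but the extraction step at the heart of your argument has a genuine gap. From $y^*\in\overline{\Psi}\cap\Pi$ you only obtain a sequence $q_n$ with $\|q_n\|=1$ and $\phi_{ij}(Gq_n,q_n)\to y^*_{ij}\ge 0$; each individual value may still be strictly negative (say $-1/n$), and $\Psi$ need not be closed, so $y^*$ need not be attained by any actual signal. Your proposed remedy --- time-shifted convex combinations $\sum_n\alpha_n D_{\tau_n}q_n$ --- cannot repair this: the cross terms $\text{Re}\left\langle T_{ij}q_m,D_{\tau}q_n\right\rangle$ vanish only in the limit $\tau\to\infty$ (which is precisely why \cref{lem:convex_psi} concludes membership in the \emph{closure} $\overline{\Psi}$ rather than in $\Psi$ itself), and the weighted average $\sum_n\alpha_n\phi_{ij}(Gq_n,q_n)$ of quantities converging to $y^*_{ij}$ possibly from below is itself still possibly negative. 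So you end up with signals satisfying $\phi_{ij}\ge-\epsilon$ for arbitrarily small $\epsilon>0$, which is exactly where you started. The finite-horizon fallback suffers the same defect: the unit sphere of $L^2[0,T]$ is still infinite-dimensional, and the forms $\left\langle q,T_{ij}q\right\rangle$ contain the non-compact term $-\|q\|^2_{{\Omega}_{ij,q}}$, so $\Psi$ is not closed there either.

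The paper avoids producing an exact fixed point altogether. In \cref{lem:d0case} it truncates $Gq$ to a finite window and perturbs the \emph{output} side by an $\eta$ orthogonal to the relevant signals with $\|\eta\|^2_{{\Omega}_{ij,x}}=\epsilon^2$, so that the pair $(x,q)$ with $x=\Gamma_{[t_0,t_1]}Gq+\eta$ satisfies $\phi_{ij}(x,q)\ge 0$ \emph{exactly} while $x$ is only $O(\epsilon)$ away from $\Gamma_{[t_0,t_1]}Gq$. \cref{lem:equiv_delta} then yields a sector-bounded $\tpi^{(n)}$ with $\|(I-\tpi^{(n)}\Gamma_{[t_n,t_{n+1}]} G)q^{(n)}\|\le C\epsilon_n$, and stitching the $\tpi^{(n)}$ together on disjoint time intervals produces a single admissible $\tpi$ for which $I-\tpi G$ admits an approximate kernel sequence of unit vectors. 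That contradicts bounded invertibility of $I-G\tpi$ without ever exhibiting a nonzero exact solution of $q=\tpi(Gq)$. To salvage your outline you would need to replace the exact-extraction step with an approximate-kernel argument of this kind; as written, the step from ``$y^*\in\overline{\Psi}\cap\Pi$'' to ``there exists $q\neq 0$ with $(Gq,q)\in\mathcal{S}(\xi)$'' does not follow.
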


To prove this result, we need the following lemma.

\begin{lemma}
\label{lem:d0case}
Suppose that $D(\Pi,\Psi)=\inf_{{r}\in\Pi,{y}\in\Psi}\left|{r}-{y}\right|=0$. Given any $\epsilon>0$ and $t_0\geq 0$, there exist a closed  interval $[t_0,t_1]$ and two signals $x\in L^{n_s}$ and $q\in L^{n_an_s}$ with $\|q\|=1$ such that 
\begin{align}
    \phi_{ij}(x,q)&\geq 0, \qquad\forall i\in[n_a],j\in[n_s]\label{equ:phi_ineq}\\ \epsilon^2&>\|(I-\Gamma_{[t_0,t_1]})Gq\|\label{equ:trunc_eps}\\
    \epsilon&=\|x-\Gamma_{[t_0,t_1]}Gq\|_{{\Omega}_{ij,x}}\label{equ:trunc_x},
\end{align}
where $\|q\|_{\Omega}=\sqrt{q^*{\Omega}q}$ is the scaled norm and $\Gamma_{[t_0,t_1]}$ projects the signal onto the support of $[t_0,t_1]$. With the above choice of $q,x$ and $[t_0,t_1]$, there exists an operator $\tpi\in\tilde{P}(\xi)$ such that $\|(I-\tpi \Gamma_{[t_0,t_1]}G)q\|\leq C\epsilon$ for some constant $C>0$ that depends on the sector bounds $\xi$.
\end{lemma}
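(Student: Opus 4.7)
The plan is to convert the geometric condition $D(\Pi,\Psi)=0$ into an explicit near-feasible point, then perform a two-stage surgery on $Gq$ — first truncating to a finite window $[t_0,t_1]$, then perturbing by exactly $\epsilon$ so that the sector constraints hold — and finally invoke \cref{lem:equiv_delta} to realize $(x,q)$ by an honest sector-bounded operator whose Lipschitz continuity yields the desired bound.

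\emph{Step 1 (Extract an almost-infeasible $q$).} By definition of the distance $D(\Pi,\Psi)=0$, for every $\delta>0$ I can find $r\in\Pi$ and $y\in\Psi$ with $|r-y|<\delta$. Using the representation \eqref{equ:psi_set} of $\Psi$, this means there is a unit-norm $q\in L^{n_an_s}$ for which $\phi_{ij}(Gq,q)\geq -\delta$ for every $i,j$. I will take $\delta$ as a sufficiently small multiple of $\epsilon^2$ (to be fixed at the end of Step 3) so that the subsequent perturbations close the gap.

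\emph{Step 2 (Truncate to a finite window).} Since $G$ is bounded and $\|q\|=1$, the signal $Gq$ lies in $L_2$, hence its tail $(I-\Gamma_{[t_0,t_1]})Gq$ tends to zero in $L_2$ as $t_1\to\infty$. I pick $t_1\geq t_0$ large enough that $\|(I-\Gamma_{[t_0,t_1]})Gq\|<\epsilon^2$, securing \eqref{equ:trunc_eps}. Set $\tilde x := \Gamma_{[t_0,t_1]}Gq$. By continuity of $\phi_{ij}$ in its first argument and the Step-1 estimate, we still have $\phi_{ij}(\tilde x,q)\geq -O(\epsilon^2)$ uniformly in $(i,j)$.

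\emph{Step 3 (Perturb $\tilde x$ into the feasible set with norm exactly $\epsilon$).} Using the identity $\phi_{ij}(x,q)=\overline c_{ij}^{\,2}\|x_j\|^2-\|c_{ij}x_j-q_{ij}\|^2$, I observe that $\phi_{ij}$ is convex quadratic in $x_j$ with leading coefficient $\overline c_{ij}^{\,2}-c_{ij}^{\,2}\geq 0$ (which holds whenever the sectors contain zero). I set $x = \tilde x + h$ and choose the perturbation $h$ by enlarging each block $x_j$ in the direction that simultaneously increases every $\phi_{ij}(x,q)$ — concretely, by scaling $\tilde x_j$ outward from the ball $\{u:\|c_{ij}u-q_{ij}\|\leq \overline c_{ij}\|u\|\}$ whose complement is the feasible region. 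A uniform enlargement, paired with a rescaling that pins down $\|h\|_{{\Omega}_{ij,x}}=\epsilon$, delivers both \eqref{equ:phi_ineq} and \eqref{equ:trunc_x}. This is the main obstacle: one must verify that the perturbation enforcing feasibility in every $(i,j)$-coordinate can be reconciled with the prescribed norm equality; I would handle this by a case split on the sign of $c_{ij}$ and by choosing $\delta$ in Step 1 small enough that the total perturbation required is $O(\epsilon)$ rather than $O(1)$.

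\emph{Step 4 (Realize $q=\tilde\pi(x)$ and bound the error).} With $(x,q)\in\mathcal S(\xi)$ in hand, \cref{lem:equiv_delta} produces a sector-bounded operator $\tilde\pi\in\widetilde{\mathcal P}(\xi)$ satisfying $q=\tilde\pi(x)$. The sector bounds $(c_{ij}-\overline c_{ij})\|x_j\|\leq\|\tilde\pi_{ij}(x)\|\leq(c_{ij}+\overline c_{ij})\|x_j\|$ imply that $\tilde\pi$ is Lipschitz with a constant $L=L(\xi)$ determined by $\max_{i,j}(|c_{ij}|+\overline c_{ij})$. Therefore
\begin{equation*}
\|(I-\tilde\pi\,\Gamma_{[t_0,t_1]}G)q\|
 =\|\tilde\pi(x)-\tilde\pi(\tilde x)\|
 \leq L\,\|x-\tilde x\|
 \leq C\epsilon,
\end{equation*}
where $C$ absorbs $L$ and the norm equivalence between $\|\cdot\|$ and $\|\cdot\|_{{\Omega}_{ij,x}}$. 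The hard part is really Step 3, where the construction must keep the perturbation first-order in $\epsilon$ and feasibility in every coordinate at the same time; everything else is a routine assembly of Steps 1–2 with \cref{lem:equiv_delta}.
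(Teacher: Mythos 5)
Your Steps 1, 2 and 4 follow the paper's proof almost verbatim (extract a unit-norm $q$ with $\phi_{ij}(Gq,q)>-\epsilon^2$, truncate $Gq$ to a long window, then realize $q=\tilde{\pi}(x)$ via \cref{lem:equiv_delta} and use boundedness/linearity of $\tilde{\pi}$ to get $\|(I-\tilde{\pi}\Gamma_{[t_0,t_1]}G)q\|=\|\tilde{\pi}(x-\Gamma_{[t_0,t_1]}Gq)\|\leq C\epsilon$). But Step 3 is the crux of the lemma and you have not actually carried it out — you only flag it as "the main obstacle." The idea you sketch, namely scaling each block $\tilde{x}_j$ radially outward, does not work as stated: writing $x_j=(1+s)\tilde{x}_j$ changes $\phi_{ij}$ to $(1+s)^2(\overline{c}_{ij}^2-c_{ij}^2)\|\tilde{x}_j\|^2+2(1+s)c_{ij}\langle q_{ij},\tilde{x}_j\rangle-\|q_{ij}\|^2$, whose derivative in $s$ has the sign of $(\overline{c}_{ij}^2-c_{ij}^2)\|\tilde{x}_j\|^2+c_{ij}\langle q_{ij},\tilde{x}_j\rangle$, which can be negative; moreover a single scaling must repair all $(i,j)$ constraints simultaneously, and the directions that help different constraints can conflict. (You also have the geometry reversed: the set $\{u:\|c_{ij}u-q_{ij}\|\leq\overline{c}_{ij}\|u\|\}$ \emph{is} the feasible region for $x_j$, not its complement.) Finally, a perturbation parallel to $\tilde{x}$ breaks the exact norm identity \eqref{equ:trunc_x}, since $\|\tilde{x}+h\|^2_{\Omega_{ij,x}}\neq\|\tilde{x}\|^2_{\Omega_{ij,x}}+\|h\|^2_{\Omega_{ij,x}}$ when $h\parallel\tilde{x}$.

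The missing idea is an \emph{orthogonal} perturbation. The paper picks $\eta\in L^{n_s}$ with $\|\eta\|^2_{\Omega_{ij,x}}=\epsilon^2$ that is orthogonal to $\Gamma_{[t_0,t_1]}Gq$ \emph{and} to every $\Omega_{ij,xq}^*q$ (finitely many linear constraints in an infinite-dimensional space, so such $\eta$ exists), and sets $x=\Gamma_{[t_0,t_1]}Gq+\eta$. Orthogonality to $\Omega_{ij,xq}^*q$ leaves every cross term $2\mathrm{Re}\langle\Omega_{ij,xq}x,q\rangle$ unchanged, while orthogonality to $\Gamma_{[t_0,t_1]}Gq$ makes the quadratic term increase by exactly $\epsilon^2$ for every $(i,j)$ at once, i.e., $\|x\|^2_{\Omega_{ij,x}}=\epsilon^2+\|\Gamma_{[t_0,t_1]}Gq\|^2_{\Omega_{ij,x}}$. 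This converts the $-\epsilon^2$ slack from Step 1 into $\phi_{ij}(x,q)\geq 0$ uniformly in $(i,j)$ and delivers \eqref{equ:trunc_x} in one stroke, with no case split on the sign of $c_{ij}$ and no coupling between the constraints. Without this (or an equivalent device), your Step 3 is a genuine gap, and the rest of the argument cannot be assembled.
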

\begin{proof}
For a given $\epsilon>0$, by hypothesis, there exists $q\in L^{n_an_s}$ with $\|q\|=1$ satisfying $\phi_{ij}(x,q)>-\epsilon^2$ for all $i\in[n_a]$ and $j\in[n_s]$, i.e.,
$$\epsilon^2+\|Gq\|^2_{{\Omega}_{ij,x}}+2\text{Re}\left\langle {\Omega}_{ij,xq}Gq,q\right\rangle>\|q\|^2_{{\Omega}_{ij,q}},$$
where ${\Omega}_{ij,x}$ and ${\Omega}_{ij,xq}$ are defined previously. Clearly, if $q$ is truncated to a sufficiently long interval, and $q$ is rescaled to have a unit norm, the above inequality will still hold. Since $Gq\in L^{n_s}$, by possibly enlarging the truncation interval to $[t_0,t_1]$, we obtain \eqref{equ:trunc_eps}, and
$$\epsilon^2+\|\Gamma_{[t_0,t_1]}Gq\|^2_{{\Omega}_{ij,x}}+2\text{Re}\left\langle {\Omega}_{ij,xq}\Gamma_{[t_0,t_1]}Gq,q\right\rangle>\|q\|^2_{{\Omega}_{ij,q}},$$
Next, we choose $\eta\in L^{n_s}$ such that $\|\eta\|^2_{{\Omega}_{ij,x}}=\epsilon^2$, and that $\eta$ is orthogonal to $\Gamma_{[t_0,t_1]}Gq$ and ${\Omega}_{ij,xq}^*q$ for all $i\in[n_a]$ and $j\in[n_s]$. Then, by considering $x=\Gamma_{[t_0,t_1]}Gq+\eta$, we obtain
$$\|x\|^2_{{\Omega}_{ij,x}}=\|\Gamma_{[t_0,t_1]}Gq+\eta\|^2_{{\Omega}_{ij,x}}=\epsilon^2+\|\Gamma_{[t_0,t_1]}Gq\|^2_{{\Omega}_{ij,x}},$$
which leads to $\phi_{ij}(x,q)\geq 0$ and \eqref{equ:trunc_x}. Now, we can invoke Lemma \ref{lem:equiv_delta} to construct $\tpi\in\mathcal{P}(\xi)$ based on \eqref{equ:phi_ineq} such that $\tpi$ becomes sector bounded and $q=\tpi x$. Then,
$$(I-\tpi \Gamma_{[t_0,t_1]}G)q=\tpi(x-\Gamma_{[t_0,t_1]}Gq).$$
Let $\|\tpi\|\leq C$ (which depends on the sector bounds). Then,
$$\|(I-\tpi \Gamma_{[t_0,t_1]}G)q\|\leq C\epsilon$$.
\end{proof}

We are now ready to prove the strict separation result.
\begin{proof}[Proof of \cref{lem:strict_separate}]
Assume that $D(\Pi,\Psi)=\inf_{{r}\in\Pi,{y}\in\Psi}\left|{r}-{y}\right|=0$. Consider a sequence $\epsilon_n\rightarrow 0$ as $n$ tends to $\infty$. For each $\epsilon_n$, construct signals $q^{(n)}$ with a bounded support on $[t_n,t_{n+1}]$, and $\tpi^{(n)}$ according to Lemma \ref{lem:d0case}. Define $$\tpi=\sum_{n=1}^\infty \tpi^{(n)}\Gamma_{[t_n,t_{n+1}]}.$$
We have 
\begin{align*}
    \tpi Gq^{(n)}&=\tpi^{(n)}\Gamma_{[t_n,t_{n+1}]}Gq^{(n)}+\tpi(I-\Gamma_{[t_n,t_{n+1}]})Gq^{(n)},
\end{align*}
and 
\begin{align*}
    \|(I-\pi G)q^{(n)}\|&\leq \|(I-\tpi^{(n)}\Gamma_{[t_n,t_{n+1}]}G)q^{(n)}\|+\|(I-\Gamma_{[t_n,t_{n+1}]})Gq^{(n)}\|\\
    &\leq C\epsilon_n+\epsilon_n^2
\end{align*}
Because $\epsilon_n\rightarrow 0$, the right-hand side approaches 0, and so does the left-hand side. Therefore, since $\|q^{(n)}\|=1$, the mapping $I-\tpi G$ cannot be invertible, which contradicts the robust stability assumption. This implies that $\Pi$ and $\Psi$ are strictly separable.
\end{proof}

To draw the connection to the SDP problem \eqref{equ:sdp_lti}, observe that 
\begin{equation}
    \phi_{ij}(x,q)=\left\langle\begin{bmatrix}
    x\\
    q
    \end{bmatrix},M_\pi^{ij}\begin{bmatrix}
    x\\
    q
    \end{bmatrix}\right\rangle,
\end{equation}
where $$[M_\pi^{ij}]_{kl}=\begin{cases}
\overline{c}_{ij}^2-c_{ij}^2&(k,l)=(j,j)\\
c_{ij}&(k,l)=(i,i*n+j) \text{ or }(i*n+j,i)\\
-1&(k,l)=(i*n+j,i*n+j)
\end{cases},$$ and $M(\lambda;\xi)=\sum_{i\in[n_a],j\in[n_s]}\lambda_{ij} M_\pi^{ij}$ as defined in \cref{lem:quad_const}. 

\begin{proposition}
\label{prop:equi_sdp_s2}
The SDP condition \eqref{equ:sdp_lti} is feasible  if and only if there exist multipliers $\lambda_{ij}\geq 0$ and $\epsilon>0$ such that 
\begin{equation}
    \sum_{i\in[n_a],j\in[n_s]}\lambda_{ij}\phi_{ij}(x,q)\leq-\epsilon\|q\|^2
    \label{equ:s2_cond}
\end{equation}
for all $q\in L^{n_an_s}$ and $x=Gq$. 
\end{proposition}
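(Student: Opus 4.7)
The plan is to chain three equivalences: Schur complement elimination of the exploration block, the Kalman-Yakubovich-Popov (KYP) lemma applied to the transfer function $\hat{G}(s) = (sI-A)^{-1}BW$, and Parseval's identity. Concretely, I will argue that \eqref{equ:sdp_lti} is feasible $\Leftrightarrow$ a simpler ``kernel'' LMI in $(P,\lambda)$ (with $\gamma$ and the $e$-block removed) is feasible $\Leftrightarrow$ a strict frequency-domain matrix inequality in $\hat{G}$ holds uniformly in $\omega$ $\Leftrightarrow$ the signal-space bound \eqref{equ:s2_cond}.

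\textbf{Schur reduction.} Taking the Schur complement of \eqref{equ:sdp_lti} against the $-\gamma I$ block, the SDP is feasible iff there exist $P\succeq 0$, $\lambda_{ij}\geq 0$, and $\gamma>0$ satisfying
\begin{equation*}
\begin{bmatrix} A^\top P + PA + \tfrac{1}{\gamma}(I + PBB^\top P) & PBW \\ W^\top B^\top P & 0 \end{bmatrix} + M(\lambda;\xi) \prec 0.
\end{equation*}
Because $\tfrac{1}{\gamma}(I + PBB^\top P)\succeq 0$ and vanishes as $\gamma\to\infty$, this is equivalent---with $\gamma$ treated as a free variable---to feasibility of the ``kernel'' LMI obtained by deleting that block. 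One direction is immediate (removing a positive semidefinite perturbation preserves strict negative definiteness); the other picks $\gamma$ large enough so that the slack fits within the strict margin of the kernel LMI.

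\textbf{KYP and Parseval.} The kernel LMI is precisely the KYP inequality for the realization $(A, BW, I, 0)$ of $\hat{G}(s)$ with middle matrix $\Pi = M(\lambda;\xi)$. Since $A$ is Hurwitz, the KYP lemma equates its feasibility with the strict frequency-domain condition
\begin{equation*}
\begin{bmatrix} \hat{G}(j\omega) \\ I \end{bmatrix}^{\!*} M(\lambda;\xi) \begin{bmatrix} \hat{G}(j\omega) \\ I \end{bmatrix} \prec 0 \qquad \forall\, \omega \in \R \cup \{\infty\}.
\end{equation*}
Parseval's identity then rewrites the signal sum in the frequency domain: for $q \in L^{n_an_s}$ with Fourier transform $\hat{q}$,
\begin{equation*}
\sum_{ij} \lambda_{ij}\, \phi_{ij}(Gq,q) = \int_{-\infty}^{\infty} \hat{q}(j\omega)^{\!*} \begin{bmatrix}\hat{G}(j\omega) \\ I\end{bmatrix}^{\!*} M(\lambda;\xi) \begin{bmatrix}\hat{G}(j\omega) \\ I\end{bmatrix} \hat{q}(j\omega)\, \frac{d\omega}{2\pi},
\end{equation*}
so a uniform matrix upper bound $-\epsilon I$ on the integrand kernel translates into the $-\epsilon\|q\|^2$ bound \eqref{equ:s2_cond}, and conversely.

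\textbf{Main obstacle.} The principal technical issue is upgrading \emph{pointwise} strict negativity in the frequency domain to \emph{uniform} strict negativity on the compactified real line. As $\omega \to \infty$, $\hat{G}(j\omega)\to 0$ (the realization is strictly proper) and the middle quadratic form collapses to $M_{22} = \mathrm{diag}(-\lambda_{ij})$; strict negativity at infinity therefore requires all $\lambda_{ij} > 0$. Fortunately, the strict kernel LMI itself forces its $(2,2)$ block, namely $M_{22}$, to be negative definite, so all $\lambda_{ij}>0$ is automatic in the SDP-to-signal direction; for the signal-to-SDP direction, one first perturbs $\lambda_{ij} \mapsto \lambda_{ij} + \delta$ by a small $\delta > 0$, which preserves the signal bound (up to an $O(\delta)$ loss in $\epsilon$) while making $M_{22}$ strictly negative definite. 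Continuity of $\hat{G}$ together with compactness of $\R\cup\{\infty\}$ then delivers the required uniform margin. A minor secondary point is verifying $P\succeq 0$ as demanded by the SDP; this follows from $A$ being Hurwitz and strict negativity of the $(1,1)$ block by the standard Lyapunov-integration argument along stable trajectories of $\dot{x}=Ax$.
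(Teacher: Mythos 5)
Your proof follows the same chain of equivalences as the paper's: the paper likewise rewrites \eqref{equ:s2_cond} as the operator inequality $\bigl[\begin{smallmatrix}G\\I\end{smallmatrix}\bigr]^*M(\lambda;\xi)\bigl[\begin{smallmatrix}G\\I\end{smallmatrix}\bigr]\prec 0$, invokes the KYP lemma to obtain the kernel LMI, and connects that to \eqref{equ:sdp_lti} by Schur complement. Your write-up is in fact more careful than the paper's on the points it glosses over---the role of $\gamma$ as a free parameter absorbing the $\tfrac{1}{\gamma}(I+PBB^\top P)$ perturbation, the uniform strictness at $\omega=\infty$ forcing $\lambda_{ij}>0$, and the verification that $P\succeq 0$---so no changes are needed.
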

\begin{proof}
Since $\phi_{ij}(x,q)=\left\langle\begin{bmatrix}
    x\\
    q
    \end{bmatrix},M_\pi^{ij}\begin{bmatrix}
    x\\
    q
    \end{bmatrix}\right\rangle,$ the condition \eqref{equ:s2_cond} is equivalent to 
    \begin{equation}
        \begin{bmatrix}
        G\\
        I
        \end{bmatrix}^*M(\lambda;\xi)\begin{bmatrix}
        G\\
        I
        \end{bmatrix}\prec 0.
    \end{equation}
    By the KYP lemma, this is equivalent to the existence of $P\succeq 0$ such that:
    \begin{equation}
        \begin{bmatrix}
        A^\top P+PA&PBW\\
        W^\top B^\top P&0
        \end{bmatrix}+M(\lambda;\xi)\prec 0.
    \end{equation}
    By Schur complement, $P$ satisfies the KYP condition if and only if it satisfies \eqref{equ:sdp_lti}. Thus, the claim is shown.
\end{proof}

\begin{theorem}
  \label{thm:ltv_necessity}
  Let  $\tpi:L^{n_s}\rightarrow L^{n_an_s}$ be a bounded causal controller such that $\tpi\in\tilde{\mathcal{P}}(\xi)$. Assume that the interconnection of $G$ and $\tpi$ is well-posed. Then, the input-output stability of the feedback interconnection of system \eqref{equ:lti_rl_reform} implies that there exist $P\succeq 0$, $\gamma>0$ and $\lambda\geq 0$ such that $\text{SDP}(P,\lambda,\gamma,\xi)$ in \eqref{equ:sdp_lti} is feasible.
 \end{theorem}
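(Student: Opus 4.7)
The plan is to derive the SDP feasibility as a converse to \cref{thm:lti_stable}, by chaining together the geometric separation lemmas (\cref{lem:equiv_delta}--\cref{lem:strict_separate}) with the KYP-based equivalence already established in \cref{prop:equi_sdp_s2}. The central idea is that robust input-output stability forces the image set $\Psi\subset\R^{n_an_s}$ to stay strictly away from the positive orthant $\Pi$; a finite-dimensional strict separation then produces the nonnegative multiplier $\lambda$, and the KYP lemma does the rest.

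First I would make precise what ``input-output stability'' means here: as the hypothesis of \cref{lem:stable_set} is robust (across all $\tilde\pi\in\widetilde{\mathcal{P}}(\xi)$), and since \cref{lem:equiv_delta} shows that any signal pair $(x,q)\in\mathcal{S}(\xi)$ can be realized by some admissible sector-bounded operator, I will interpret the theorem as requiring the feedback loop to be stable for every $\tilde\pi\in\widetilde{\mathcal{P}}(\xi)$. Under this reading, $I-G\tilde\pi$ is nonsingular for all such $\tilde\pi$, so \cref{lem:stable_set} forbids any nonzero $q\in L_2$ with $(Gq,q)\in\mathcal{S}(\xi)$, and \cref{lem:strict_separate} strengthens this to $D(\Pi,\Psi)>0$.

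Next I would apply a standard finite-dimensional strict separation theorem in $\R^{n_an_s}$. Both $\Pi$ (the positive orthant) and $\overline{\Psi}$ (convex by \cref{lem:convex_psi}) are closed convex and have positive distance, so there exist $\mu\neq 0$ and a scalar $c$ with $\langle\mu,r\rangle\geq 0>c\geq\langle\mu,y\rangle$ for all $r\in\Pi$ and $y\in\overline{\Psi}$. Because $\Pi$ is a cone, the inequality on the $\Pi$-side forces $\mu_{ij}\geq 0$ for every $(i,j)$, and since $0\in\Pi$ we get $c<0$. Writing out $y=(\phi_{ij}(Gq,q))$ for $\|q\|=1$ and exploiting the quadratic homogeneity of $\phi_{ij}$ in $q$ (when $x=Gq$), the separation inequality extends to
\begin{equation*}
\sum_{i\in[n_a],j\in[n_s]} \mu_{ij}\,\phi_{ij}(Gq,q)\leq -\epsilon\|q\|^2,\qquad \forall q\in L^{n_an_s},
\end{equation*}
with $\epsilon=-c>0$. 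This is exactly condition \eqref{equ:s2_cond} with $\lambda=\mu$. Invoking \cref{prop:equi_sdp_s2} then produces $P\succeq 0$ and $\gamma>0$ such that $\text{SDP}(P,\lambda,\gamma,\xi)$ is feasible, completing the argument.

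The hardest step is the separation argument, specifically verifying that the hyperplane normal must lie in the nonnegative orthant; without this, the resulting multipliers would not be valid KYP/IQC multipliers and \cref{prop:equi_sdp_s2} could not be invoked. This uses in an essential way that $\Pi$ is a cone (so $\mu$ has to be nonnegative to keep $\langle\mu,\cdot\rangle$ bounded below on $\Pi$) together with positive distance (so $c$ is strictly negative, which in turn ensures the strict inequality needed by KYP). A secondary, more technical concern is justifying that $D(\Pi,\overline{\Psi})=D(\Pi,\Psi)>0$ and that the separation theorem applies to two unbounded closed convex sets with positive distance, both of which are routine in $\R^{n_an_s}$; the homogeneity extension to arbitrary $q\in L^{n_an_s}$ is a short scaling step once the unit-norm version is in hand.
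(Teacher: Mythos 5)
Your proposal is correct and follows essentially the same route as the paper's proof: strict separation of $\Pi$ from $\overline{\Psi}$ via \cref{lem:strict_separate} and \cref{lem:convex_psi}, a separating hyperplane whose normal must be nonnegative because $\Pi$ is a cone on which the linear functional is bounded below, normalization to get a strictly negative bound on the $\overline{\Psi}$ side, and then \cref{prop:equi_sdp_s2} to pass to the SDP. Your write-up is somewhat more explicit than the paper's (notably the homogeneity/scaling step extending the unit-norm separation to all of $L^{n_an_s}$, and the clarification that stability must be read robustly over $\widetilde{\mathcal{P}}(\xi)$ for \cref{lem:strict_separate} to apply), but the argument is the same.
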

 \begin{proof}
 Since the system is input-output stable, the sets $\Pi$ and $\overline{\Psi}$ are strictly separable due to \cref{lem:strict_separate}. Since both $\Pi$ and $\overline{\Psi}$ are convex (\cref{lem:convex_psi}), there exist a strictly separating hyperplane parametrized by $\lambda\in\R^{mn}$ and scalars $\alpha,\beta$, such that $$\langle\lambda,\phi\rangle\leq\alpha<\beta\leq\langle\lambda,y\rangle$$ for all $\phi\in\overline{\Psi}$ and $y\in\Pi$. Since $\langle\lambda,y\rangle$ is bounded from below, we must have $\lambda\geq 0$, and without loss of generality, we can set $\beta=0$ and $\alpha<0$. This condition is equivalent to \eqref{equ:s2_cond}, and by Proposition \ref{prop:equi_sdp_s2}, this implies that the SDP condition is feasible.
 \end{proof}

\section{Numerical examples}
\label{sec:experiments}

In this section, we empirically study the stability-certified reinforcement learning in real-world problems such as flight formation \cite{hauser1992nonlinear} and power grid frequency regulation \cite{fazelnia2017convex}. Designing an optimal controller for these systems is challenging, because they consist of many interconnected subsystems that have limited information sharing, and also their underlying models are typically nonlinear and even time-varying and uncertain. Indeed, for the case of decentralized control, which aims at designing a set of local controllers whose interactions are specified by physical and informational structures, it has been long known that it amounts
 to an NP-hard optimization problem in general \cite{bakule2008decentralized}. End-to-end reinforcement learning comes in handy, because it does not require model information by simply interacting with the environment while collecting rewards. 

In a multi-agent setting, each agent explores and learns its own policy independently without knowing about other agents' policies \cite{bucsoniu2010multi}. For the simplicity of implementation, we consider the synchronous and cooperative scenario, where agents conduct an action at each time step and observe the reward for the whole system. Their goal is to collectively maximize the rewards (or minimize the costs) shared equally among them. The present analysis aims at \emph{offering safety certificates of existing RL algorithms when applied to real-world dynamical systems}, by simply monitoring the gradients information of the neural network policy. This is orthogonal to the line of research that aims at improving the performance of the existing RL algorithms. The examples are taken from \cite{hauser1992nonlinear,fazelnia2017convex,fattahi2017transformation}, but we deal directly with the underlying \emph{nonlinear} physics rather than a linearized model.

\subsection{Multi-agent flight formation}

Consider the multi-agent flight formation problem \cite{hauser1992nonlinear}, where each agent can only observe the relative distance from its neighbors, as illustrated in \cref{fig:flight_illustrate}. The goal is to design a local controller for each aircraft such a predefined pattern is formed as efficiently as possible.
\begin{figure}[h!]
  \centering
  \includegraphics[width=0.55\textwidth]{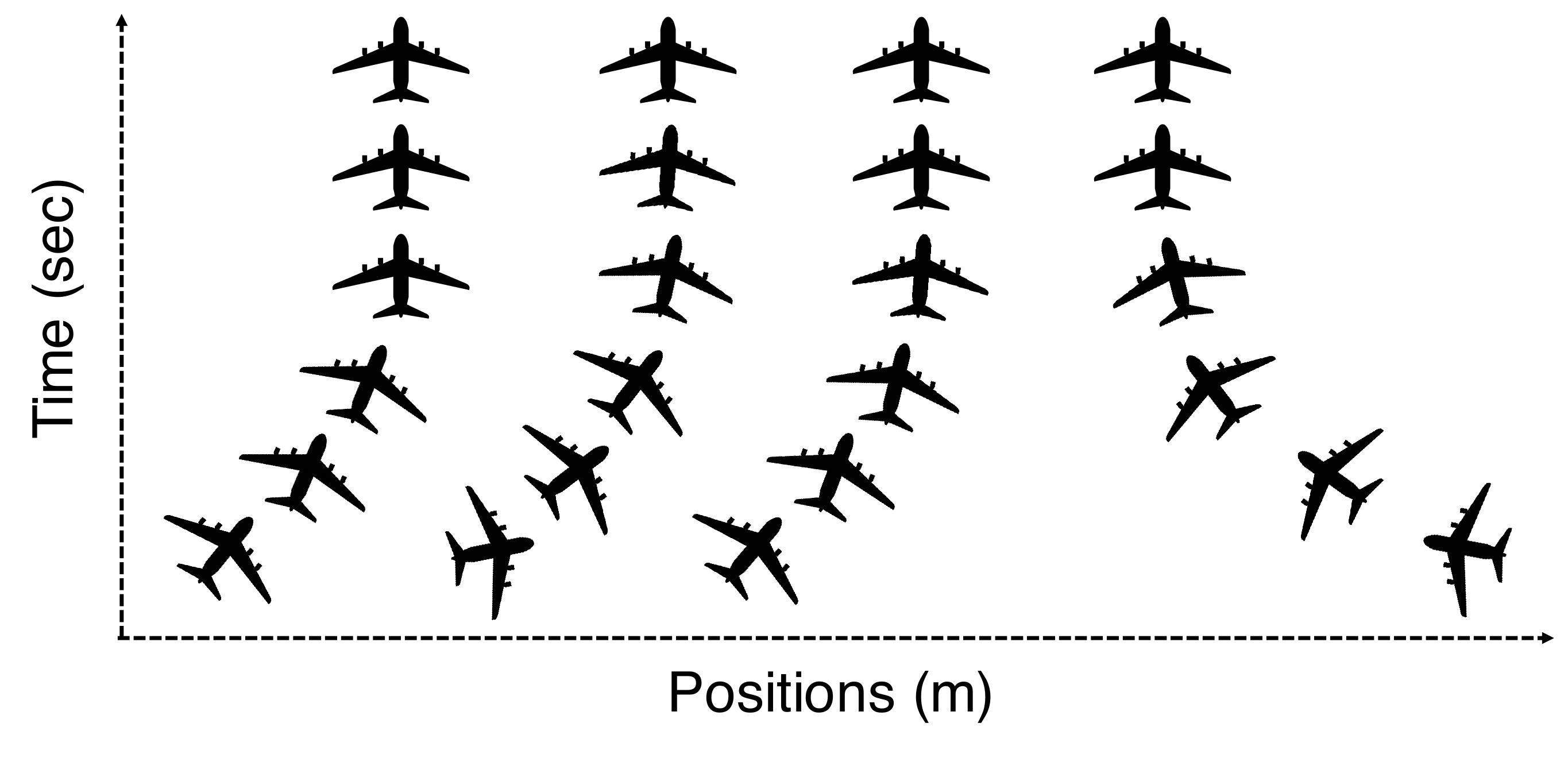}
  \caption{Illustration of the multi-agent flight formation problem.
  }
  \label{fig:flight_illustrate}
\end{figure}
The physical model\footnote{The cosine term in the original formulation is omitted for simplicity, though it can be incorporated in a more comprehensive treatment.} for each aircraft is given by:
\begin{align*}
\ddot{z}^i(t)&=v^i(t)\\
\ddot{\theta}^i(t)&=\frac{1}{\delta}\left(\sin\theta^i(t)+v^i(t)\right),
\end{align*}
where $z^i$ and $\theta^i$ denote the horizontal position and angle of aircraft $i$, respectively, and $\delta>0$ characterizes the physical coupling of rolling moment and lateral acceleration. To stabilize the system, a simple feedback rule is proposed in \cite{arcak2012synchronization},
\begin{equation}
v^i(t)=\alpha \dot{z}^i(t)+\beta\theta^i(t)+\gamma\dot{\theta}^i(t)+u^i(t)
\end{equation}
where the parameters of the first three terms are designed to maintain the internal stability of the horizontal speed and angle of each aircraft (specifically, $\alpha=90.62$, $\beta=-42.15$, $\gamma=-13.22$, $\delta=0.1$ as explained in \cite{arcak2012synchronization}), and the last term is an external input optimized for performance (e.g., to move the aircraft to a target state as fast as possible). For each agent, by defining the state $x^i(t)=\begin{bmatrix}
\dot{z}^i(t)&\theta^i(t)&\dot{\theta}(t)
\end{bmatrix}^\top$, the above dynamics can be written as
\begin{equation}
\dot{x}^i(t)=\underbrace{\begin{bmatrix}
\alpha&\beta&\gamma\\
0&0&1\\
\frac{\alpha}{\delta}&\frac{\beta+1}{\delta}&\frac{\gamma}{\delta}
\end{bmatrix}}_{A^i}x^i(t)+\underbrace{\begin{bmatrix}
1\\0\\\frac{1}{\delta}
\end{bmatrix}}_{B^i}u^i(t)+\underbrace{\begin{bmatrix}
0\\0\\\sin\theta^i(t)-\theta^i(t)
\end{bmatrix}}_{g^i(\bx^i(t))},
\end{equation}
where $g^i(x^i(t))$ is a nonlinear function of $x^i(t)$ that is neglected for a linearized model \cite{arcak2012synchronization,fattahi2017transformation}. In a distributed control setting, each agent only has access to the relative distance from its neighbors; therefore, for agents $i=1,2,3$, define
\begin{equation}
\widetilde{x}^i(t)=\begin{bmatrix}
z^i(t)-z^{i+1}(t)-d&x^i(t)^\top
\end{bmatrix}^\top,
\end{equation}
where $d$ is the desired distance between agents. The state-space model of the interconnected system can be written in the form of \eqref{equ:system}:
\begin{equation}
\begin{bmatrix}
\dot{\widetilde{x}}^1\\
\dot{\widetilde{x}}^2\\
\dot{\widetilde{x}}^3\\
\dot{x}^4
\end{bmatrix}=\underbrace{\begin{bmatrix}
\widetilde{A}^1&H_4&0&0\\
0&\widetilde{A}^2&H_4&0\\
0&0&\widetilde{A}^3&H_3\\
0&0&0&A^4
\end{bmatrix}}_{A}\underbrace{\begin{bmatrix}
\widetilde{x}^1\\
\widetilde{x}^2\\
\widetilde{x}^3\\
x^4
\end{bmatrix}}_{x(t)}+
\underbrace{\begin{bmatrix}
\widetilde{B}^1&0&0&0\\
0&\widetilde{B}^2&0&0\\
0&0&\widetilde{B}^3&0\\
0&0&0&B^4
\end{bmatrix}}_{B}\underbrace{\begin{bmatrix}
u^1\\
u^2\\
u^3\\
u^4
\end{bmatrix}}_{u(t)}+\underbrace{\begin{bmatrix}
\widetilde{g}^1(x^1)\\
\widetilde{g}^2(x^2)\\
\widetilde{g}^3(x^3)\\
{g}^4(x^4)
\end{bmatrix}}_{g(x(t))},
\label{equ:pvtol_system}
\end{equation}
where $H_3$ (or $H_4$) is a $4\times 3$ (or $4\times 4$) matrix whose $(i,j)^{\text{th}}$ entry is equal to $-1$ if $(i,j)=(1,1)$ (or $(i,j)=(1,2)$) and is zero otherwise, and where $\widetilde{A}^i$, $\widetilde{B}^i$ and $\widetilde{g}^i(x^i(t))$ for $i=1,2,3$ are augmented to account for the state of relative positions, given by
\begin{equation}
\widetilde{A}^i=\begin{bmatrix}
0&1&0&0\\
0&\alpha&\beta&\gamma\\
0&0&0&1\\
0&\frac{\alpha}{\delta}&\frac{\beta+1}{\delta}&\frac{\gamma}{\delta}
\end{bmatrix},\;\;\widetilde{B}^i=\begin{bmatrix}
0\\1\\0\\\frac{1}{\delta}
\end{bmatrix},\;\;\widetilde{g}^i(x^i(t))=\begin{bmatrix}
0\\0\\0\\\sin\theta^i(t)-\theta^i(t)
\end{bmatrix}.
\end{equation}
One particular strength of RL is that the reward function can be highly nonconvex, nonlinear, and arbitrarily designed; however, since quadratic costs are widely used in the control literature, consider the case $r(x(t),u(t))=x(t)^\top Qx(t)+u(t)^\top Ru(t)$. For the following experiments, assume that $Q=1000\times I_{15}$ and $R=I_4$. In addition, because the original system $A$ has its largest eigenvalue at 0, we need a nominal distributed linear controller $K_d$, whose primary goal is to make the largest eigenvalue of $A+BK_n$ negative. Such controller could be designed using methods such as robust control synthesis for the linearized system \cite{dullerud2013course,zhou1996robust}. 
With the{ nominal controller} in place, we can define the new system matrix $A_G=A+BK_n$ and replace $A$ in \eqref{equ:pvtol_system}. 

The task for multi-agent RL is to learn the controller $u^i(t)$, which only takes inputs of the relative distances of agent $i$ to its neighbors. For example, agent 1 can only observe $z^1(t)-z^2(t)-d$ (i.e., the 1$^\text{st}$ entry of ${x}(t)$); similarly, agent 2 can only observe $z^1(t)-z^2(t)-d$ and $z^2(t)-z^3(t)-d$ (i.e.,  the 1$^\text{st}$ and 5$^\text{th}$ entries of $x(t)$). 

\textbf{Stability certificate:} To obtain the stability certificate of \eqref{equ:pvtol_system}, we apply the method in \cref{sec:nonlinear_analysis}. The nonzero entries of the nonlinear component $g(x(t))$ are in the form of $\sin(\theta)-\theta$, which can be treated as an uncertainty block with the slope restricted to $[-1,0]$ for $\theta\in[-\frac{\pi}{2},\frac{\pi}{2}]$; therefore, the Zames-Falb IQCs can be employed to construct \eqref{equ:nltv_psi} \cite{zames1968stability,lessard2016analysis}. As for the RL agents $u^i$, their gradient bounds can be certified according to Theorem \ref{thm:nltv_stable}. Specifically, we assume that each agent $u^i$ is $l$-Lipschitz continuous, and solve \eqref{equ:nltv_sdp} for a given set of $\gamma$ and $l$. The certified gradient bounds (Lipschitz constants) are plotted in \cref{fig:lip_gains_pvtol} using different constraints. The conservative $L_2$ constraint \eqref{equ:ineq_const_1} is only able to certify stability for Lipschitz constants up to 0.8. By incorporating the sparsity of distributed controller, we can increase the margin to 1.2, which is satisfied throughout the learning process.

\begin{figure}[h!]
  \centering
  \includegraphics[width=0.7\textwidth]{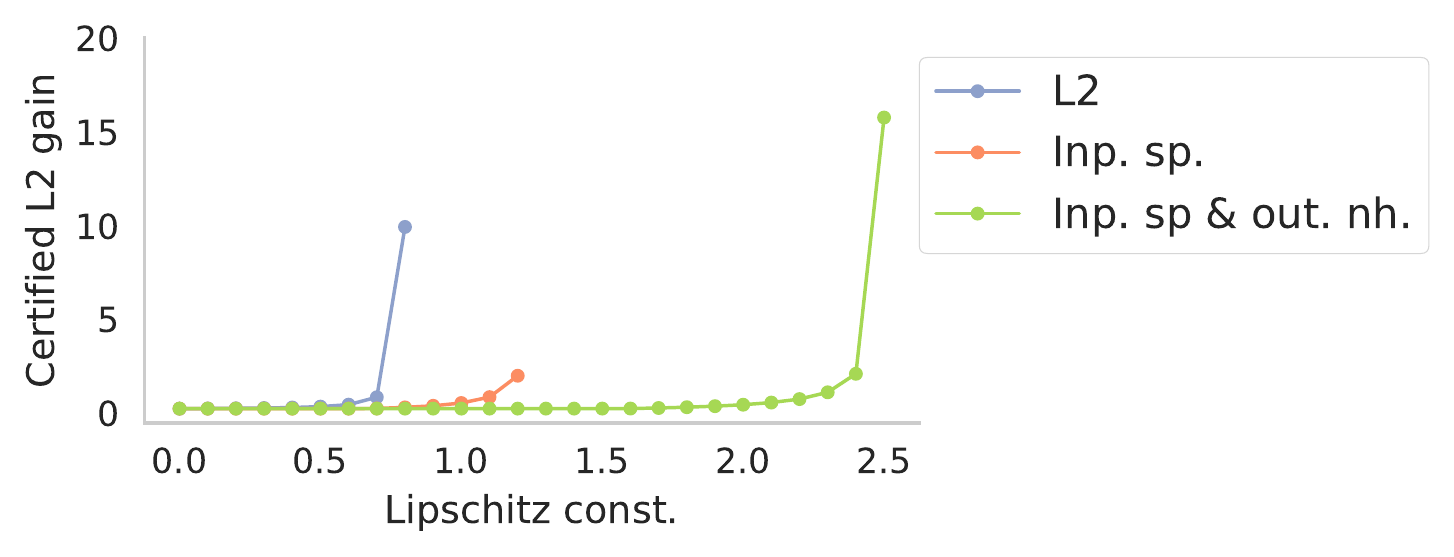}
  \caption{Stability-certified Lipschitz constants obtained by the standard $L_2$ bound (L2) in \eqref{equ:ineq_const_1} and the method proposed in \cref{lem:quad_const}, which considers input sparsity (inp. sp.) and output non-homogeneity (out. nh.).
  }
  \label{fig:lip_gains_pvtol}
\end{figure}

In order to further increase the set of certifiable stable controllers, we monitor the partial gradient information for each agent and encode them as non-homogeneous gradient bounds. For instance, if $\frac{\partial \pi_i(x)}{\partial x_j}$ has been consistently positive for latest iterations, we will set $\overline{\xi}_{ij}=l$ and $\underline{\xi}_{ij}=-\epsilon l$, where $\epsilon>0$ is a small margin, such as 0.1, to allow explorations. By performing this during learning, it would be possible to significantly enlarge the certified Lipschitz bound to up to 2.5, as shown in \cref{fig:lip_gains_pvtol}.

\textbf{Policy gradient RL:} To perform multi-agent reinforcement learning, we employ trust region policy optimization with natural gradients and smoothness policies. During learning, we employ the hard-thresholding step introduced in \cref{sec:rl_polgrad} to ensure that the gradient bounds are satisfied. The trajectories of rewards averaged over three independent experiments are shown in \cref{fig:rew_iters_pvtol}. In this example, agents with a 1-layer neural network (each with 5 hidden units) can learn most efficiently when employed with the smoothness penalties (coefficients are set to be $\omega_1=\omega_2=0.01$) in \eqref{equ:pol_obj}. Without the guidance of these penalties, the linear controller and 1-layer neural network apparently cannot effectively explore the parameter space. 

\begin{figure}[h!]
  \centering
  \includegraphics[width=0.7\textwidth]{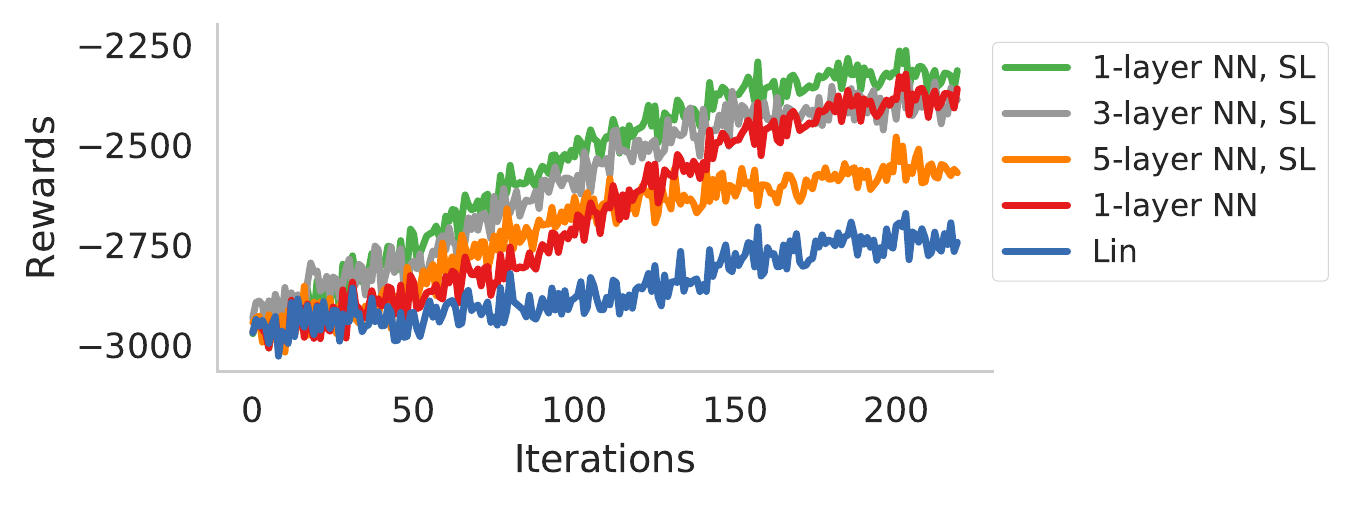}
  \caption{Learning performance of different control structures (1-layer neural network, 5-layer neural network, and linear controller). By the inclusion of a smoothness loss (SL) in the learning objective \eqref{equ:pol_obj}, the exploration becomes more effective.
  }
  \label{fig:rew_iters_pvtol}
\end{figure}

The learned 5-layer neural network policy is employed in an actual control task, as shown in \cref{fig:pvtol_state_act}. Compared to the nominal controller, the flights can be maneuvered more efficiently in this case with only local information. In terms of the actual cost, the RL agents achieve the cost 41.0, which is about 30\% lower than that of the nominal controller (58.3). This result can be examined both in the actual state-action trajectories in \cref{fig:pvtol_state_act} or the control behaviors in \cref{fig:pvtol_rl_nom}. The results indicate that RL is able to improve a given controller when the underlying system is nonlinear and unknown.

\begin{figure*}[!h]
  \centering
  \begin{subfigure}[h]{.32\columnwidth}
  \centering
        \includegraphics[width=\textwidth,trim=0mm 0mm 0mm 0mm,clip]{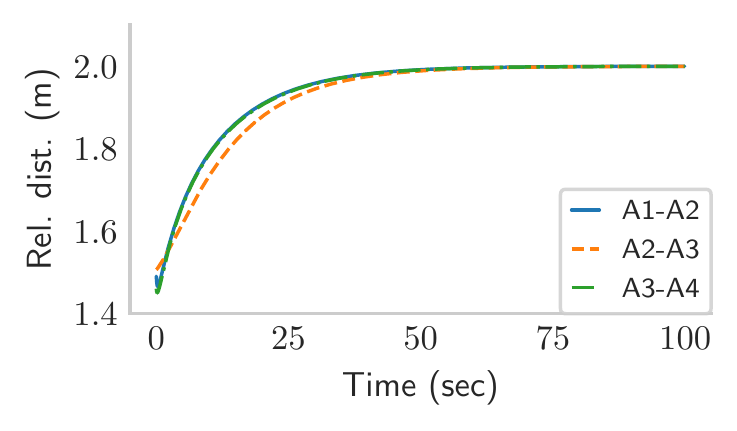}
        \caption{Nom.: reletive distances.}
        \label{fig:pvtol_x_nn_traj_nom}
    \end{subfigure}
      \begin{subfigure}[h]{.32\columnwidth}
  \centering
        \includegraphics[width=\textwidth,trim=0mm 0mm 0mm 0mm,clip]{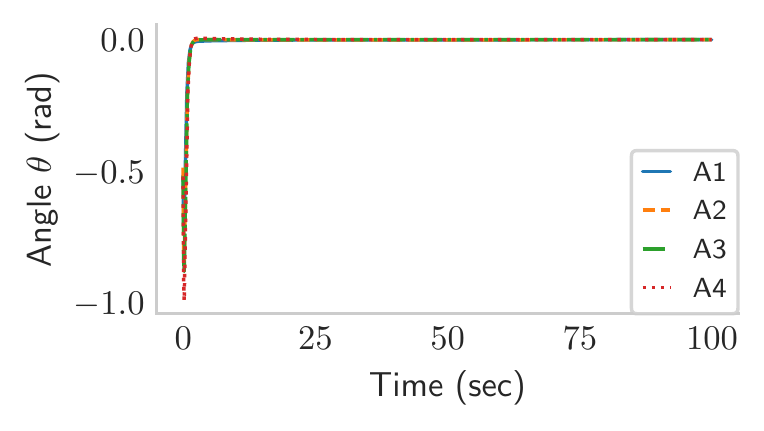}
        \caption{Nom.: angles $\theta_i$.}
        \label{fig:pvtol_angle_nn_traj_nom}
    \end{subfigure}
  \begin{subfigure}[h]{0.32\textwidth}
        \includegraphics[width=\textwidth]{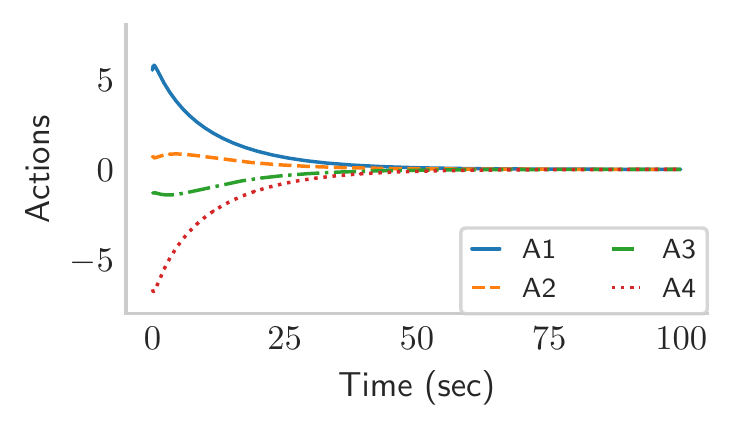}
        \caption{Nom.: actions.}
        \label{fig:pvtol_act_nn_traj_nom}
    \end{subfigure}
  \begin{subfigure}[h]{.32\columnwidth}
  \centering
        \includegraphics[width=\textwidth,trim=0mm 0mm 0mm 0mm,clip]{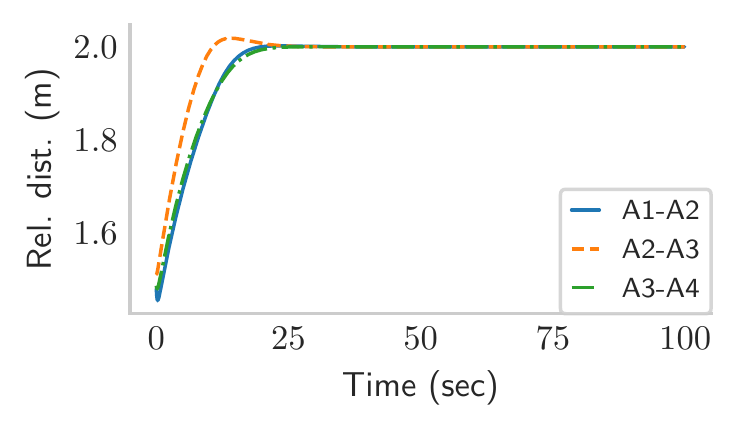}
        \caption{NN: relative distances.}
        \label{fig:pvtol_x_nn_traj}
    \end{subfigure}
      \begin{subfigure}[h]{.32\columnwidth}
  \centering
        \includegraphics[width=\textwidth,trim=0mm 0mm 0mm 0mm,clip]{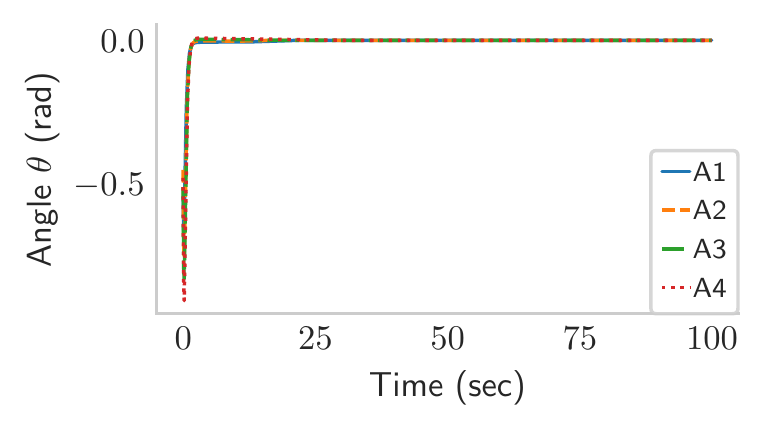}
        \caption{NN: angles $\theta_i$.}
        \label{fig:pvtol_angle_nn_traj}
    \end{subfigure}
  \begin{subfigure}[h]{0.32\textwidth}
        \includegraphics[width=\textwidth]{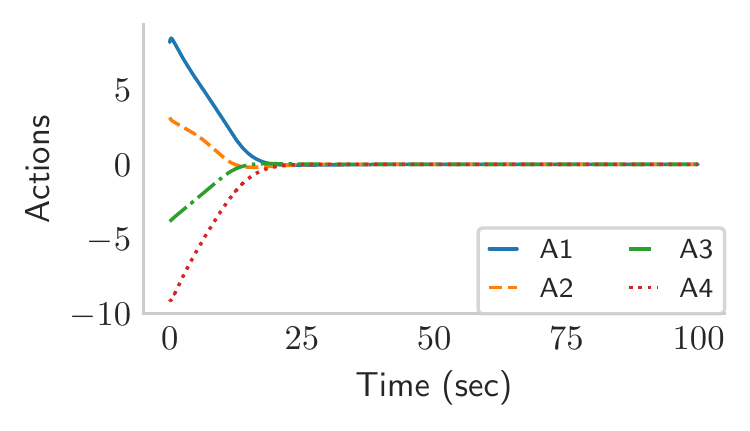}
        \caption{NN: actions.}
        \label{fig:pvtol_act_nn_traj}
    \end{subfigure}
  \caption{State and action trajectories in a typical contral task, where the nominal controller (Nom) and the RL agents achieve costs of 58.3 and 41.0, respectively.}
  \label{fig:pvtol_state_act}
\end{figure*}

\begin{figure*}[!h]
  \centering
  \begin{subfigure}[h]{.32\columnwidth}
  \centering
        \includegraphics[width=\textwidth,trim=0mm 0mm 0mm 0mm,clip]{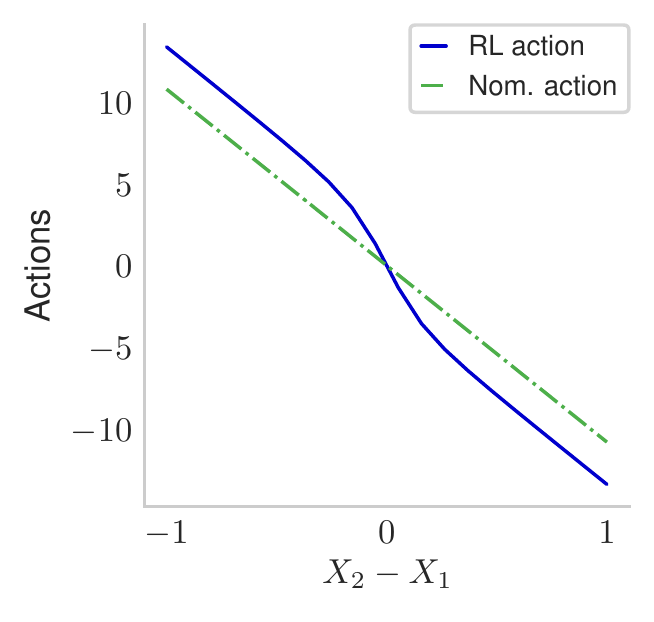}
        \caption{Controller of Agent 1.}
        \label{fig:pvtol_a1}
    \end{subfigure}
      \begin{subfigure}[h]{.32\columnwidth}
  \centering
        \includegraphics[width=\textwidth,trim=0mm 0mm 0mm 0mm,clip]{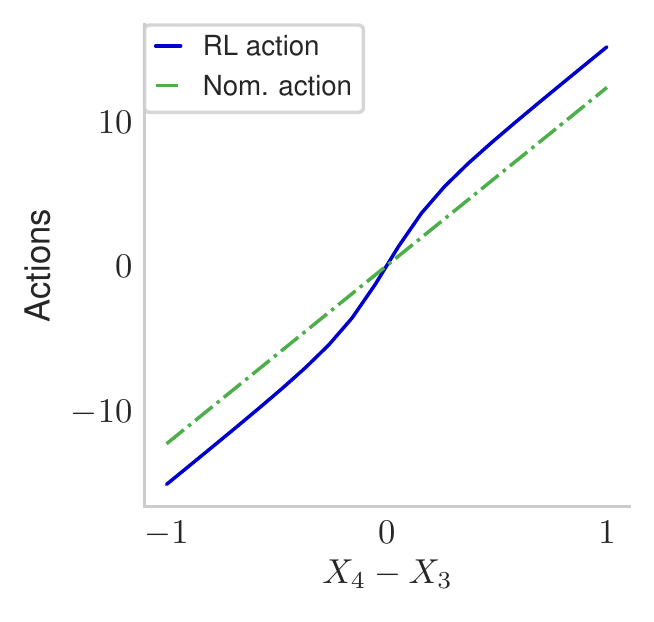}
        \caption{Controller of Agent 4.}
        \label{fig:pvtol_a4}
    \end{subfigure}
  \caption{Demonstration of control outputs for the nominal action and RL agents.}
  \label{fig:pvtol_rl_nom}
\end{figure*}

\subsection{Power system frequency regulation}

In this case study, we focus on the problem of distributed control for power system frequency regulation \cite{fazelnia2017convex}. The IEEE 39-Bus New England Power System under analysis is shown in \cref{fig:power_illustr}. In a distributed control setting, each generator can only share its rotor angle and frequency information with a pre-specified set of counterparts that are geographically distributed. The main goal is to optimally adjust the mechanical power input to each generator such that the phase and frequency at each bus can be restored to their nominal values after a possible perturbation. Let $\theta_i$ denote the voltage angle at a generator bus $i$ (in rad). The physics of power systems are modeled by the per-unit swing equation:
\begin{equation}
m_i\ddot{\theta}_i +d_i\dot{\theta}=p_{m_i}-p_{e_i}
\end{equation}
where $p_{m_i}$ is the mechanical power input to the generator at bus $i$ (in p.u.), $p_{e_i}$ is the electrical active power injection at bus $i$ (in p.u.), $m_i$ is the inertia coefficient of the generator at bus $i$ (in p.u.-sec$^2$/rad), and $d_i$ is the damping coefficient of the generator at bus $i$ (in p.u.-sec/rad). The electrical real power injection $p_{e_i}$ depends on the voltage angle difference in a nonlinear way, as governed by the AC power flow equation:
\begin{equation}
p_{e_i} = \sum_{j=1}^n |v_i||v_j|\left(g_{ij}\cos(\theta_i-\theta_j)+b_{ij}\sin(\theta_i-\theta_j)\right)
\end{equation}
where $n$ is the number of buses in the system, $g_{ij}$ and $b_{ij}$ are the conductance and susceptance of the transmission line that connects buses $i$ and $j$,  $v_i$ is the voltage phasor at bus $i$, and  $|v_i|$ is its voltage magnitude. Because the conductance $g_{ij}$ is typically several orders of magnitude smaller than the susceptance $b_{ij}$, for the simplicity of mathematical treatment, we omit the cosine term and only keep the sine term that accounts for the majority of nonlinearity. Each generator needs to make decisions on the value of the mechanical power $p_{m_i}$ to inject in order to maintain the stability of the power system.

\begin{figure*}[th]
  \centering
  \begin{subfigure}[h]{.46\columnwidth}
  \centering
        \includegraphics[width=\textwidth,trim=0mm 0mm 0mm 0mm,clip]{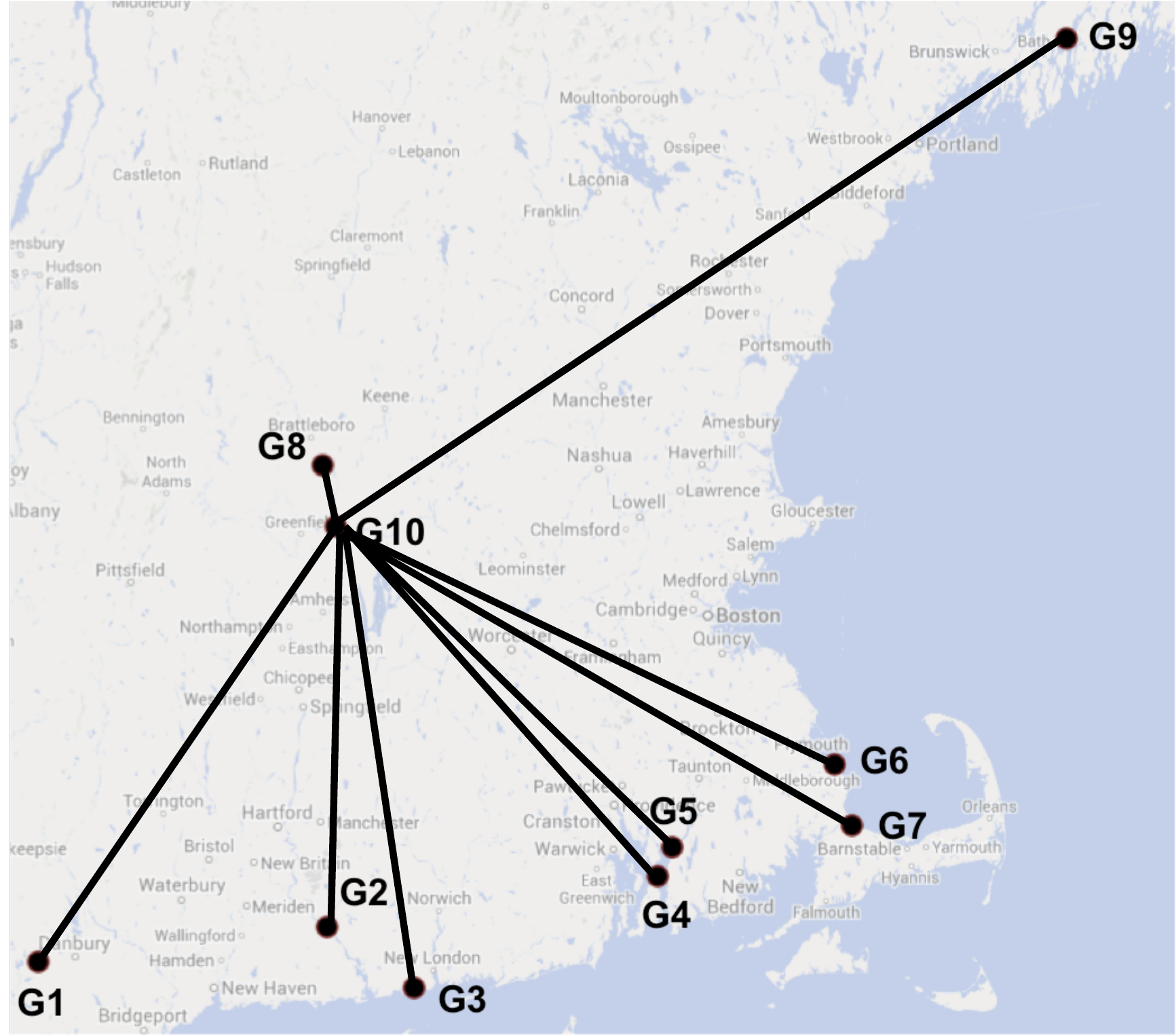}
        \caption{Star-connected structure.}
        \label{fig:star_power}
    \end{subfigure}
      \begin{subfigure}[h]{.4\columnwidth}
  \centering
        \includegraphics[width=\textwidth,trim=0mm 0mm 0mm 0mm,clip]{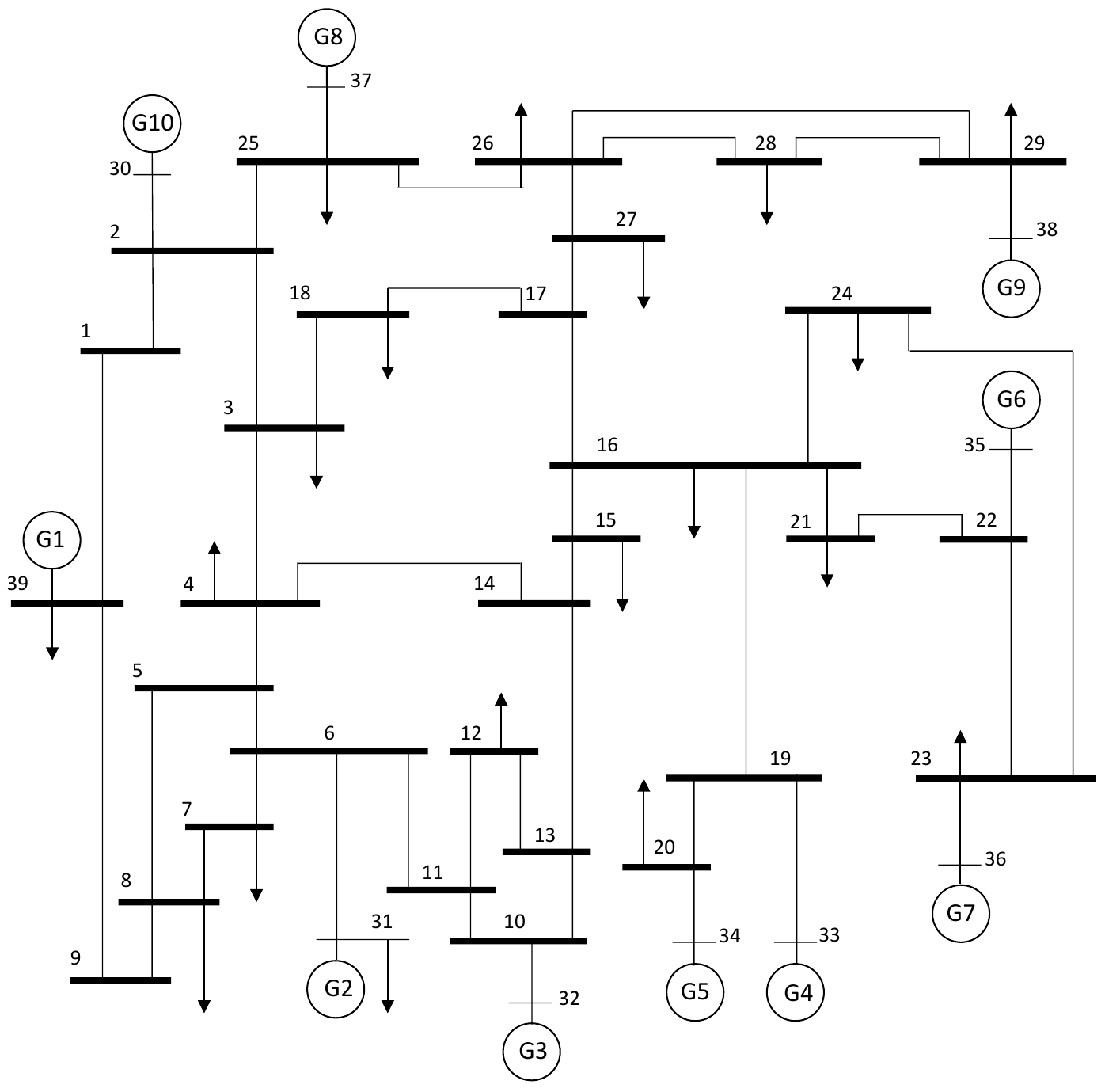}
        \caption{IEEE 39-bus system.}
        \label{fig:star_power_bus}
    \end{subfigure}
    \caption{Illustration of the frequency regulation problem for the New England power system. The communication among generators follows a star topology. }
    \label{fig:power_illustr}
\end{figure*}

Let the rotor angles and the frequency states be denoted as $\theta=\begin{bmatrix}
\theta_1&\cdots&\theta_n
\end{bmatrix}^\top$ and $\omega=\begin{bmatrix}
\omega_1&\cdots&\omega_n
\end{bmatrix}^\top$, and the generator mechanical power injections be denoted as $p_m=\begin{bmatrix}
p_{m_1}&\cdots&p_{m_n}
\end{bmatrix}^\top$. Then, the state-space representation of the nonlinear system is given by:
\begin{equation}
\begin{bmatrix}
\dot{\theta}\\
\dot{\omega}
\end{bmatrix}=\underbrace{\begin{bmatrix}
0&I\\
-M^{-1}L&-M^{-1}D
\end{bmatrix}}_{A}\underbrace{\begin{bmatrix}
{\theta}\\
{\omega}
\end{bmatrix}}_{x}+\underbrace{\begin{bmatrix}
0\\
M^{-1}
\end{bmatrix}}_{B}p_m+\underbrace{\begin{bmatrix}
\bZero\\
{g}(\theta)
\end{bmatrix}}_{g(x)}
\end{equation}
where ${g}(\theta)=\begin{bmatrix}
g_1(\theta)&\cdots&g_n(\theta)
\end{bmatrix}^\top$ with $g_i(\theta)=\sum_{j=1}^n \frac{b_{ij}}{m_j}\left((\theta_i-\theta_j)-\sin(\theta_i-\theta_j)\right)$, and $M=\text{diag}\left(\{m_i\}_{i=1}^n\right)$,  $D=\text{diag}\left(\{d_i\}_{i=1}^n\right)$, and $L$ is a Laplacian matrix whose entries are specified in \cite[Sec. IV-B]{fazelnia2017convex}. For linearization (also known as DC approximation), the nonlinear part $g(x)$ is assumed to be zero when the phase differences are small \cite{fazelnia2017convex,fattahi2017transformation}. On the contrary, we deal with this term in the stability certification to demonstrate its capability of producing non-conservative results even for nonlinear systems. Similar to the flight formation case, we assume that there exists a distributed nominal controller that stablizes the system. To conduct multi-agent RL, each controller $p_{m_i}$ is a neural network that takes the available phases and frequencies as the input and determines the mechanical power injection at bus $i$. The main focus is to study the {certified-gradient bounds} for each agent policy in this large-scale setting.

\textbf{Stability certificate:} Similar to the flight formation problem, the nonlinearities in $\bgv(\bx)$ are in the form of $\Delta\theta_{ij}-\sin\Delta\theta_{ij}$, where $\Delta\theta_{ij}=\theta_i-\theta_j$ represents the phase difference, which has its slope restricted to $[0,1-\cos(\overline{\theta})]$ for every $\Delta\theta_{ij}\in[-\overline{\theta},\overline{\theta}]$ and thus can be treated using the Zames-Falb IQC. In the smoothness margin analysis, assume that $\overline{\theta}=\frac{\pi}{3}$, which requires the phase angle difference to be within $[-\frac{\pi}{3},\frac{\pi}{3}]$. This is a large set of uncertainties that includes both normal and abnormal operational conditions. To study the stability of the multi-agent policies, we adopt a black-box approach by simply considering the input-output constraint. By simply applying the $\Log_2$ constraint in \eqref{equ:ineq_const_1}, we can only certify stability for Lipschitz constants up to 0.4, as shown in \cref{fig:lip_gains_power}. Because the distributed control is sparse, we can leverage it by setting the lower and upper bounds $\underline{\xi}_{ij}=\overline{\xi}_{ij}=0$ for each agent $i$ that does not utilize observation $j$, and $\overline{\xi}_{ij}=-\underline{\xi}_{ij}=l$ otherwise, where $l$ is the Lipschitz constant to be certified. This information can be encoded in $\underline{\text{SDP}}(P,\lambda,\gamma,\xi)$ in \eqref{equ:nltv_sdp}, which can be solved for $L$ up to 0.6 (doubling the certificate provided by the $\Log_2$ constraint).

\begin{figure}[h!]
  \centering
  \includegraphics[width=0.7\textwidth]{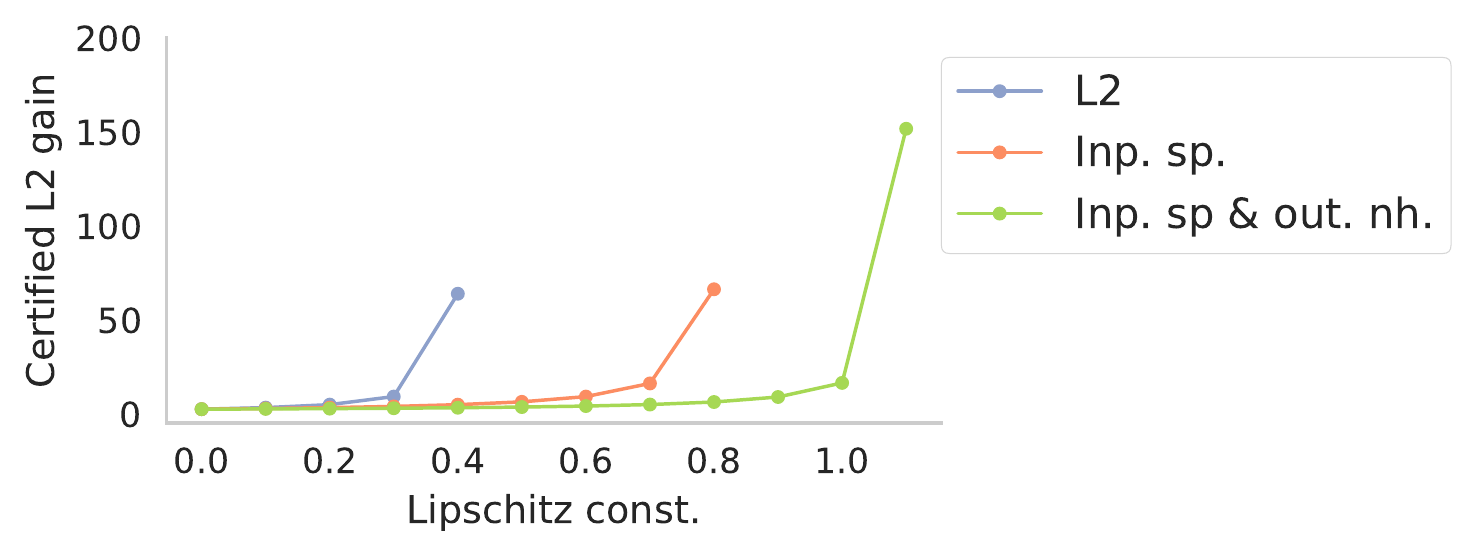}
  \caption{Certified Lipschitz constants for the power system regulation task.
  }
  \label{fig:lip_gains_power}
\end{figure}

Due to the problem nature, we further observe that for each agent, the partial gradient of the policy with respect to certain observations is primarily one-sided, as shown in \cref{fig:power_grads}. With a band of $\pm 0.1$, the partial gradients remain within either $[-0.1,1]$ or $[-1,0.1]$ throughout the learning process. This information is gleaned during the learning phase, and we can incorporate it into the partial gradient bounds (e.g., $\overline{\xi}_{ij}=-0.1l$ and $\underline{\xi}_{ij}=l$ for agent $i$ which exhibits positive gradient with respect to observation $j$) to extend the certificate up to 1.1.

\begin{figure*}[!h]
  \centering
  \begin{subfigure}[h]{.54\columnwidth}
  \centering
        \includegraphics[width=\textwidth,trim=0mm 0mm 0mm 0mm,clip]{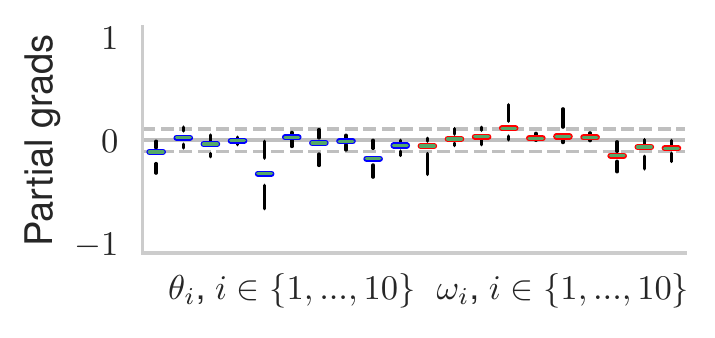}
        \caption{G10.}
        \label{fig:cen_ag_grad}
    \end{subfigure}
      \begin{subfigure}[h]{.215\columnwidth}
  \centering
        \includegraphics[width=\textwidth,trim=8mm 0mm 0mm 0mm,clip]{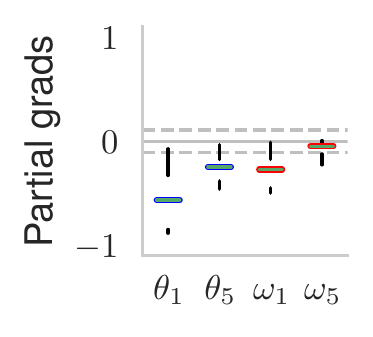}
        \caption{G4.}
        \label{fig:ag4_grad}
    \end{subfigure}
  \begin{subfigure}[h]{0.215\textwidth}
        \includegraphics[width=\textwidth,trim=8mm 0mm 0mm 0mm,clip]{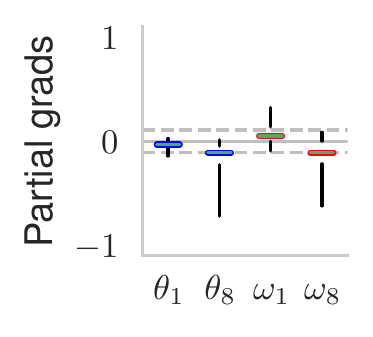}
        \caption{G7.}
        \label{fig:ag7_grad}
    \end{subfigure}
  \caption{Box plots of partial gradients of individual generators (G10, G4, G7) with respect to local information. Grey dashed lines indicate $\pm 0.1$.}
  \label{fig:power_grads}
\end{figure*}

\begin{figure*}[!h]
  \centering
  \begin{subfigure}[h]{.42\columnwidth}
  \centering
        \includegraphics[width=\textwidth,trim=0mm 0mm 0mm 0mm,clip]{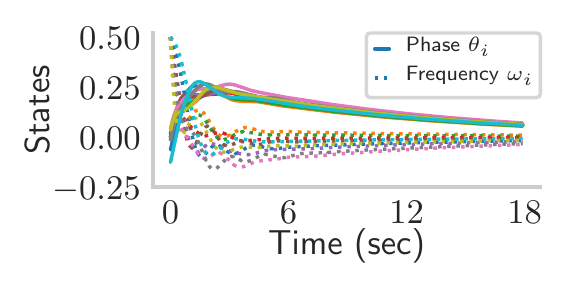}
        \caption{Nom.: phase $\theta_i$ and frequency $\omega_i$.}
        \label{fig:phase_dnn3_traj_nom}
    \end{subfigure}
  \begin{subfigure}[h]{0.57\textwidth}
        \includegraphics[width=\textwidth]{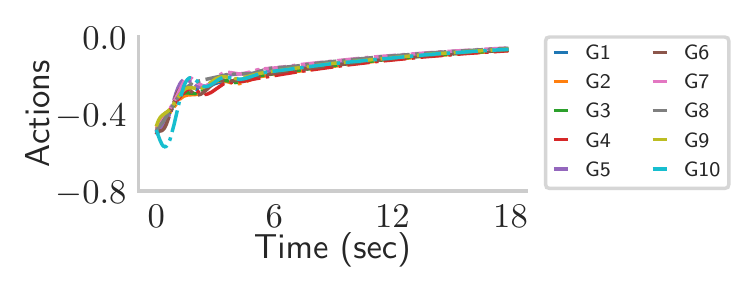}
        \caption{Nom.: actions.}
        \label{fig:act_dnn3_traj_nom}
    \end{subfigure}
     \begin{subfigure}[h]{.42\columnwidth}
  \centering
        \includegraphics[width=\textwidth,trim=0mm 0mm 0mm 0mm,clip]{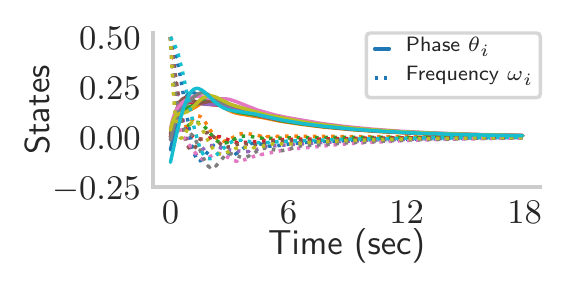}
        \caption{NN: phase $\theta_i$ and frequency $\omega_i$.}
        \label{fig:phase_dnn3_traj_nnn}
    \end{subfigure}
  \begin{subfigure}[h]{0.57\textwidth}
        \includegraphics[width=\textwidth]{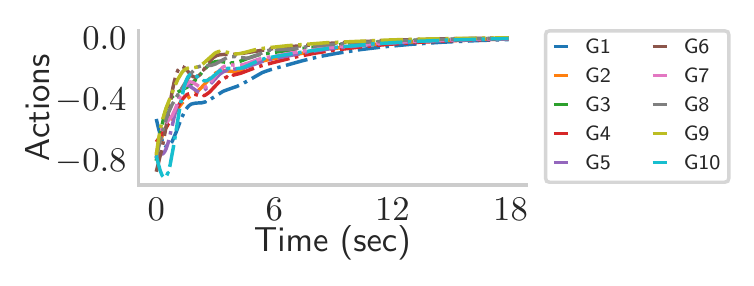}
        \caption{NN: actions.}
        \label{fig:act_dnn3_traj_nnn}
    \end{subfigure}
  \caption{State and action trajectories of the nominal and neural network controllers for power system frequency regulation, with costs of 50.8 and 23.9, respectively.}
  \label{fig:power_state_act}
\end{figure*}

\textbf{Policy gradient RL:} Similar to the flight formation task, we perform multi-agent policy gradient RL. The learned neural network controller is implemented in a typical control case, whose trajectories are shown in \cref{fig:power_state_act}. As can be seen, the RL policies can regulate the frequencies more efficiently than the nominal controller, with a significantly lowered cost (50.8 vs. 23.9). More importantly, we compare the cases of RL with and without regulating the Lipschitz constants in \cref{fig:rew_iters_power_long}. Without regulating the gradients, the RL is able to reach a performance slightly higher than its stability-certified counterpart. However, after about iteration 500, the performance starts to deteriorate (due to a possibly large gradient variance and high sensitivity to step size) until it completely loses the previous gains and starts to break the system. This intolerable behavior is due to the large Lipschitz gains that grow unboundedly, as shown in \cref{fig:lip_traj_power}. In comparison, RL with regulated gradient bounds is able to make a substantial improvement, and also exhibits a more stable behavior.

\begin{figure}[h!]
  \centering
  \includegraphics[width=0.7\textwidth]{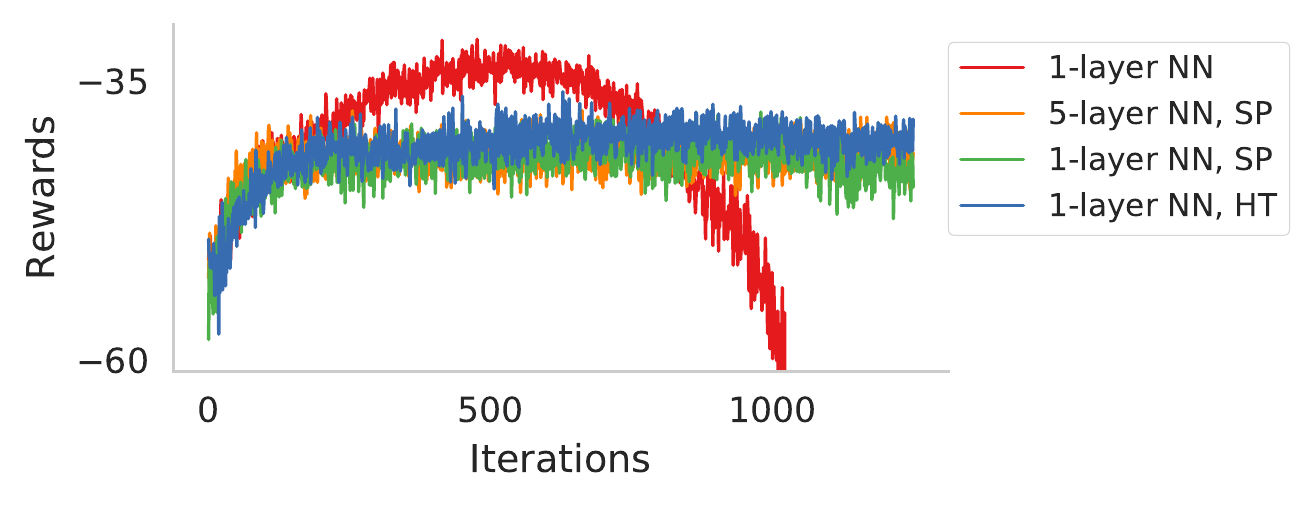}
  \caption{Long-term performance of RL for agents with regulated gradients by soft penalty (SP), which adaptively adjusts the coefficients $\omega_2$ in \eqref{equ:pol_obj}, and hard thresholding (HT), which shrinks the network last layer to satisfy the gradient bounds. The RL agents without regulating the gradients exhibit ``dangerous'' behaviors in the long run.
  }
  \label{fig:rew_iters_power_long}
\end{figure}

\begin{figure*}[h!]
  \centering
  \begin{subfigure}[h]{.4\columnwidth}
  \centering
        \includegraphics[width=\textwidth,trim=0mm 0mm 30mm 0mm,clip]{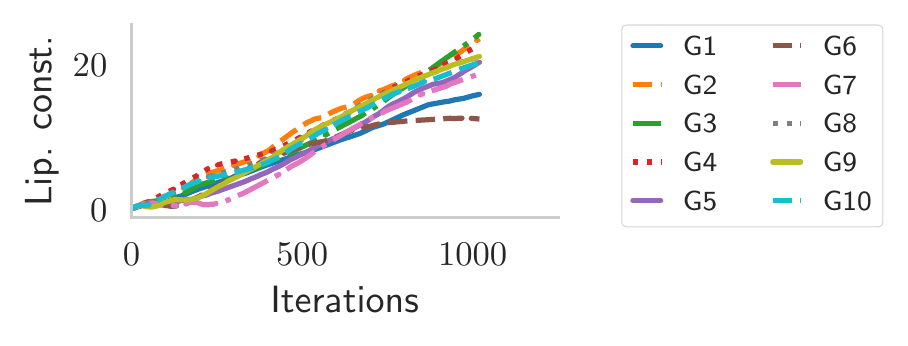}
        \caption{No gradient regulation.}
        \label{fig:lip_traj_power_nn}
    \end{subfigure}
  \begin{subfigure}[h]{0.58\textwidth}
        \includegraphics[width=\textwidth]{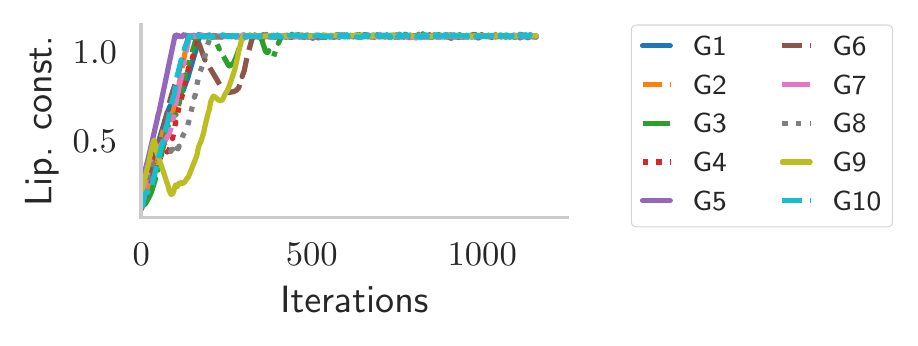}
        \caption{Gradient regulation.}
        \label{fig:lip_traj_power_nn_ht}
    \end{subfigure}
  \caption{Trajectories of Lipschitz constants with and without regulation.
  }
  \label{fig:lip_traj_power}
\end{figure*}

\section{Conclusions}
\label{sec:conclusions}

In this paper, we focused on the challenging task of ensuring the stability of reinforcement learning  in real-world dynamical systems. By solving the proposed SDP feasibility problem, we can offer a preventative certificate of stability for a broad class of neural network controllers with bounded gradients. Furthermore, we analyzed the (non)conservatism of the certificate, which was demonstrated in the empirical investigation of  {decentralized nonlinear control} tasks, including multi-agent flight formation and power grid frequency regulation. Results indicated that the set of stability-certified controllers was significantly larger than what the existing approaches can offer, and that the RL agents can substantially improve the performance of nominal controllers while staying within the safe set. Most importantly, regulation of gradient bounds was able to improve on-policy learning stability and avoid ``catastropic'' effects caused by the unregulated high gains. The present study represents a key step towards safe deployment of reinforcement learning in mission-critical real-world systems.


\bibliographystyle{siamplain}
\bibliography{rob_rl}
\end{document}